\newcommand{\norm}[1]{\left\lVert#1\right\rVert}
\newcommand{\indep}{\perp \!\!\! \perp}
\newtheorem{theorem}{Theorem}
\newtheorem{definition}{Definition}
\newtheorem{lem}{Lemma}
\newtheorem{prop}{Proposition}
\begin{document}

{
	\title{\bf Pseudo-Observations for Bivariate Survival Data}
	\author{Yael Travis-Lumer\\
		 Department of Statistics and Data Science\\
		The Hebrew University of Jerusalem, Israel\\
\vspace{0.5cm} \\  
    Micha Mandel \\
    Department of Statistics and Data Science\\
    The Hebrew University of Jerusalem, Israel\\
\vspace{0.5cm}   \\ 
    Rebecca A. Betensky \\
    Department of Biostatistics\\
     New York University, New York, New York, U.S.A.
    }
	\date{}	
	\maketitle
}

\abstract{
The pseudo-observations approach has been gaining popularity as a method to estimate covariate effects on censored survival data. It is used regularly to estimate covariate effects on quantities such as survival probabilities, restricted mean life, cumulative incidence, and others. In this work, we propose to generalize the pseudo-observations approach to situations where a bivariate failure-time variable is observed, subject to right censoring. The idea is to first estimate the joint survival function of both failure times and then use it to define the relevant pseudo-observations. Once the pseudo-observations are calculated, they are used as the response in a generalized linear model. We consider two common nonparametric estimators of the joint survival function: the estimator of \cite{lin_simple_1993} and the Dabrowska estimator \citep{dabrowska_kaplan-meier_1988}. For both estimators, we show that our bivariate pseudo-observations approach produces regression estimates that are consistent and asymptotically normal. Our proposed method enables estimation of covariate effects on quantities such as the joint survival probability at a fixed bivariate time point, or simultaneously at several time points, and consequentially can estimate covariate-adjusted conditional survival probabilities. We demonstrate the method using simulations and an analysis of two real-world datasets.}

{\bf keywords:} 
Censoring;  Generalized estimating equations; Generalized linear models; Multi-variate survival analysis. 

\newpage
\section{Introduction}\label{sec1:intro}
The pseudo-observations approach, originally developed by \cite{andersen_generalised_2003}, has received considerable attention as a procedure for estimating covariate effects on time-to-event data. This approach is especially useful when no standard method exists, such as when interest lies in estimation of covariate effects on the survival function at a single time point, or on the restricted mean life \citep{andersen_pseudo-observations_2010}.

The pseudo-observations approach was studied extensively and was found to work well in various settings, both in theory \citep{graw_pseudo-values_2009, jacobsen_note_2016, overgaard_asymptotic_2017} and practice (see, for example, \citealt{andersen_pseudo-observations_2010}). The approach has been applied to situations where a univariate failure time is observed, including also generalizations to competing risks and multi-state models. However, as far as we know, an extension to bivariate survival data has not been studied before. \cite{furberg_bivariate_2023} studied bivariate pseudo-observations for the case where a recurrent
event process and a single failure time is observed, 
and interest lies in the joint estimation of two parameters (marginal mean and survival probability). This is different from the setting we consider in this paper, where two failure times are observed (subject to censoring), and interest lies on a single parameter.

Extending the pseudo-observations approach to bivariate failure times enables estimation of covariate effects on correlated pairs of survival times.
Our main examples are of estimating the joint survival of time to wound excision and time to wound infection of burn victims \citep{van_der_laan_locally_2002}, and time to blindness of both eyes in diabetic patients. For both examples, interest may be on the time free of complication at a certain bivariate time point, or the probability of survival of one variable conditionally on failure of the other. We demonstrate both analyses in Section~\ref{sec4:data-analysis} and in Web Appendix F.

Existing methods that model the dependence between two outcomes, and the marginal distribution of each, can be used to estimate the covariate-adjusted bivariate survival function. These include copulas (see, for example, \citealt{ marra_copula_2020}), frailty models \citep{wienke_frailty_2010}, and conditional independence models (see, for example, \citealt{xu_marginal_2023}). We refer the readers to \cite{pate_developing_2023} for a recent comparison of such methods.
These methods are not specifically designed for estimating the covariate-adjusted bivariate survival function at a fixed bivariate time point, and such covariate effects might be hard to interpret in the bivariate setting. Specifically, to estimate the covariate-adjusted bivariate survival function at a fixed bivariate time point using such methods, one must assume two univariate regression models and an additional model for the dependence structure. 

In contrast, our approach directly models the covariate effects on the bivariate survival function at a fixed number of time points. 
This direct modeling approach does not require specifying two univariate regression models, and estimates fewer parameters than existing methods. 
If our interest is on a single bivariate event, such as the 5-year survival probability of both eyes in diabetic patients, then a binary (e.g., logistic) regression model is natural.
Such a model can also be extended to several time points, and can be used for prediction of the joint survival probability, at these time points, for different covariate values. However, fitting a binary regression model to survival data requires adjustments for censoring, which is a challenging task. As we show below, the pseudo-observations approach
offers a modification that enables fitting such regression models to bivariate censored data.

\section{Methods}\label{sec2:methods}
\subsection{The setting}\label{sec2.1:setting}
Let $(T_1,T_2)$ be a bivariate random variable measuring the survival times, and let $(C_1,C_2)$ be the bivariate censoring times. For $j=1,2$, denote by $Y_j=\min(T_j,C_j)$ the minimum between the survival and censoring times, and by $\Delta_j=I(T_j \leq C_j)$ the corresponding indicator. The observed data consists of $n$ i.i.d. copies $\{(Y_{11},\Delta_{11},Y_{21}, \Delta_{21}, Z_1),\ldots,(Y_{1n},\Delta_{1n},Y_{2n}, \Delta_{2n}, Z_n)\}$, where $Z_i \in \mathbb{R}^p$ are the covariates.
Throughout this work we assume that the pair $(C_1,C_2)$ is independent of both $(T_1,T_2)$ and $Z$, which is the bivariate extension of the standard independent censoring assumption in the pseudo-observations literature. In some cases, we may have a single variable $C$ censoring both failure times (see Section~\ref{sec2.3:NPE}).
We denote by $S_{T_1,T_2}(t_1,t_2)$ the bivariate survival function of the pair $(T_1,T_2)$ at the point $(t_1,t_2)$, and by $S_{T_1}(t_1)$ and $S_{T_2}(t_2)$ the marginal survival functions of $T_1$ and $T_2$, respectively.

Let $f(\cdot,\cdot)$ be some function of the bivariate survival times $(T_1,T_2)$, and let $\theta$ be a univariate parameter of the form $\theta=E[f(T_1,T_2)]$. We list below some examples of such functions $f$, and their corresponding $\theta$'s,  that may be of interest in the bivariate survival setting.\\
1) Consider the function $f(T_1,T_2)=I(T_1>t_1^0,T_2>t_2^0)$, which corresponds to $\theta=E[I(T_1>t_1^0,T_2>t_2^0)]=P(T_1>t_1^0,T_2>t_2^0)=S_{T_1,T_2}(t_1^0,t_2^0)$.  
This is our main example throughout this paper, which corresponds to the bivariate survival at the point $(t_1^0,t_2^0)$. \\
2) When $f(T_1,T_2)=\min(T_1,T_2,\tau)$ then $\theta=E[\min(T_1,T_2,\tau)]$ is the $\tau$-restricted mean of the minimum survival time. A similar relationship can be defined for the maximal survival time.\\
Note also that univariate functions such as $f(T_1,T_2) = f(T_1)=I(T_1>t_0)$ and $f(T_1,T_2)= f(T_1)=\min(T_1,\tau)$, which correspond to the marginal quantities $S_{T_1}(t_0)$ and $\int_0^\tau S_{T_1}(t)dt$, respectively, can also be considered in our setting.  However, it is more natural to estimate such parameters using the marginal data $(Y_1,\Delta_1)$.

The bivariate pseudo-observations approach includes three main steps: (i) estimation of the joint survival function, (ii) obtaining an approximately unbiased estimator (satisfying conditions defined in Web Appendix A) of $\theta$ that is based on the estimate of the joint survival function, and (iii) calculating the corresponding pseudo-observations and fitting an appropriate regression model. We now describe each of these three steps.

\subsection{Nonparametric estimators of the bivariate survival function}\label{sec2.3:NPE}
We consider two common nonparametric estimators of the joint survival function: (1) a simple estimator due to \cite{lin_simple_1993} for the case of a single censoring variable, and (2) the popular estimator of \cite{dabrowska_kaplan-meier_1988}. Both of these estimators are approximately unbiased as they are plug-in estimators of smooth functionals (Web Appendix A). 

(1) The estimator of Lin and Ying was developed for the specific case where both failure times are censored by the same univariate censoring variable. It is given by
\begin{equation}\label{eq:LY}
	\hat{S}_{T_1,T_2}^{\text{LY}}(t_1,t_2)=\frac{\hat{P}(Y_1>t_1,Y_2>t_2)}{\hat{G}(\max(t_1,t_2))},
\end{equation}
where $\hat{P}(Y_1>t_1,Y_2>t_2)=\frac{1}{n}\sum_{i=1}^{n}I(Y_{1i}>t_1, Y_{2i}>t_2)$ and $\hat{G}(\cdot)$ is the Kaplan-Meier estimator \citep{kaplan_nonparametric_1958} of the censoring survival function.

This estimator has a simple representation, converges weakly to a zero-mean Gaussian process \citep{lin_simple_1993}, and is computationally fast to implement.
It can be easily extended to cases where a simple consistent estimator for the censoring survival is available. For example, to the case of two independent censoring variables, by substituting ${\hat{G}(\max(t_1,t_2))}$ in the denominator with $\hat{G_1}(t_1)\hat{G_2}(t_2)$, where $\hat{G_1}(\cdot)$ and  $\hat{G_2}(\cdot)$ are the Kaplan-Meier estimators of the survival functions of the censoring variables $C_1$ and $C_2$, respectively.

(2) The Dabrowska estimator corresponds to the more general setting defined in Section~\ref{sec2.1:setting}, of two censoring times $C_1$ and $C_2$ that may be correlated. This estimator has a product integral representation that depends on cumulative hazard functions corresponding to single and double failures. The Dabrowska estimator possesses many nice properties such as almost sure consistency and asymptotic efficiency under independent censoring, has good practical performance for even small sample sizes, and is considered the gold standard method for nonparametric estimation of the bivariate survival function \citep{prentice_nonparametric_2018}. This estimator is more complicated than that of Lin and Ying, but
can be easily applied in R using the package mhazard \citep{mhazard}. 

\subsection{Estimators of $\theta$ that are based on the estimator of the joint survival function}\label{sec2.4:univariate}
The next step is to obtain an (approximately) unbiased estimator of the parameter $\theta$ that is based on $\hat{S}_{T_1,T_2}(t_1,t_2)$.
For example, consider the restricted mean parameters $\theta_1=E\left[\min(T_1,T_2,\tau)\right]$ and $\theta_2=E\left[\min\left(\max(T_1,T_2),\tau\right)\right]$.
Then,
$\theta_1=E\left[\min(T_1,T_2,\tau)\right]=\int_{0}^{\tau}S_{T_1,T_2}(t,t)dt$, and $\theta_2=E\left[\min\left(\max(T_1,T_2),\tau\right)\right]=\int_{0}^{\tau}\left[S_{T_1,T_2}(t,0)+S_{T_1,T_2}(0,t)-S_{T_1,T_2}(t,t)\right]dt$.
Hence, an approximately unbiased estimator of either $\theta_1$ or $\theta_2$ can be obtained by plugging-in the estimator of the bivariate survival function $\hat{S}_{T_1,T_2}(t,t)$.
In the following, we consider the simplest case where $\theta=S_{T_1,T_2}(t_1^0,t_2^0)$ is the bivariate survival probability at a fixed time point $(t_1^0,t_2^0)$.

\subsection{The bivariate pseudo-observations approach}\label{sec2.2:approach}
\begin{definition}
	The $i$th pseudo-observation for $f(T_{1i},T_{2i})$ is
	\[
	\hat{\theta_i}=n\hat{\theta}-(n-1)\hat{\theta}^{-i},
	\]
	where $\hat{\theta}^{-i}$ is the leave-one-out estimator applied to the sample of size $n-1$ obtained by removing
	the $i$th observation from the total sample.
\end{definition}

As in \cite{andersen_generalised_2003}, the pseudo-observations $\hat{\theta_i}$, $(i=1,\ldots,n)$, are used as the response in a generalized linear model.
Given a link function $g$, our goal is to estimate the regression coefficients in the following model
\begin{equation}\label{eq:GLM}
	g(E[f(T_1,T_2)\mid Z])=\beta_0+\sum_{j=1}^{p}\beta_jZ_j.
\end{equation}

In our main example, where $f(T_1,T_2)=I(T_1>t_1^0,T_2>t_2^0)$ and $\theta=S_{T_1,T_2}(t_1^0,t_2^0)$, we use the logit link function defined by $g(x)=\log\left(\frac{x}{1-x}\right)$ to obtain the model
\[
\log\left(\frac{S_{T_1,T_2}(t_1^0,t_2^0\mid Z)}{1-S_{T_1,T_2}(t_1^0,t_2^0\mid Z)}\right)=\beta_0+\sum_{j=1}^{p}\beta_jZ_j.
\]
Similar logistic regression models were proposed by \cite{andersen_generalised_2003} for state probabilities in multi-state models.
The odds satisfy the relationship
\[
\text{Odds}(Z=z)=\frac{S(t_1^0,t_2^0\mid Z=z)}{1-S(t_1^0,t_2^0\mid Z=z)}=\exp\left(\sum_{j=1}^{p}\beta_jz_j\right)\frac{S^0(t_1^0,t_2^0)}{1-S^0(t_1^0,t_2^0)}\equiv \exp\left(\sum_{j=1}^{p}\beta_jz_j\right)\text{Odds}(Z=0),
\]
where $S^0(t_1^0,t_2^0)=S(t_1^0,t_2^0\mid Z=0)=\left[1+\exp(-\beta_0)\right]^{-1}$ is the baseline bivariate survival function evaluated at $(t_1^0,t_2^0)$.
This is a proportional odds model, originally proposed by \cite{bennett_log-logistic_1983} for univariate survival data, and has since been studied as an alternative to the popular Cox model, which assumes proportional hazards. When the covariate indicates a binary group assignment, the proportional odds assumption can be checked by comparing the empirical (un-adjusted) odds of the two groups over different time points. 
The same interpretation of the proportional odds model used
in the univariate case can be used in our bivariate case. For a covariate value $z$, denote by $\tau(z)=\exp\left(-\sum_{j=1}^{p}\beta_jz_j\right)$ the joint covariate effect. For two covariate values $z_1$ and $z_2$ such that $\text{Odds}(Z=z_1)>\text{Odds}(Z=z_2)$ we have that $\tau(z_1)<\tau(z_2)$ and thus $S(t_1^0,t_2^0\mid z_1)>S(t_1^0,t_2^0\mid z_2)$. Hence,  when a covariate increases the odds, it also increases the survival. 

As in \cite{andersen_generalised_2003}, it is possible to generalize the model in Equation~\eqref{eq:GLM} to $k$ time points $\{(t_1^j,t_2^j)\}$, $j=1,\ldots,k$, and to estimate the model parameters simultaneously, with joint covariate coefficients, and a different intercept for each time point. 
We define $f(T_1,T_2)=(f_{1}(T_1,T_2),\ldots, f_{k}(T_1,T_2))^T$, such that
for the specific choice of $f_{j}(T_1,T_2)=I(T_1>t_1^j,T_2>t_2^j)$, we have that $\hat{\theta_i}=(\hat{S}^i_{T_1,T_2}(t_1^1,t_2^1),\ldots, \hat{S}^i_{T_1,T_2}(t_1^k,t_2^k))^T$, where $\hat{S}^i_{T_1,T_2}(t_1^j,t_2^j)$ is the $i$th pseudo-observation at time $(t_1^j,t_2^j)$. More generally, $\hat{\theta}_{ij}$ represents $f_{j}(T_{1i},T_{2i})$ and the model is 
\[
g(E[f_{j}(T_{1i},T_{2i})\mid Z_i])=\beta^T Z_{ij}^*,
\]
where $\beta=(\beta_{01},\ldots,\beta_{0k},\beta_1,\ldots,\beta_p)^T$, and $Z_{ij}^*$ is $(0,...,1,...0,Z_i^T)^T$ where the $1$ is in the $j$'th place. 
Estimates of the ($k + p$-dimensional) $\beta$ are based on the generalized estimating equations \citep{liang_longitudinal_1986}
\begin{equation}\label{eq:GEE}
	\sum_{i=1}^{n}\left(\frac{\partial}{\partial \beta}g^{-1}(\beta^T Z_i^*)\right)^T V_i^{-1}\left(\hat{\theta_i}-g^{-1}(\beta^T Z_i^*)\right)=\sum_{i=1}^n U_i(\beta)\equiv U(\beta)=0,
\end{equation}
where $V_i\in \mathbb{R}^{k\times k}$ is a working covariance matrix, and $g^{-1}(\beta^T Z_i^*)=(g^{-1}(\beta^TZ^*_{i1}),\ldots,g^{-1}(\beta^TZ^*_{ik}))^T$.

As can be seen, some choices that need to be made include the type of link function $g$, the number of time points $k$ and their locations, and the working covariance matrix $V_i$. It is common to choose $V_i=I_{k\times k}$, though the efficiency of the estimators can be further improved when $V_i$ is chosen to resemble the true covariance \citep{liang_longitudinal_1986}.

Denote by $\hat{\beta}$ the solution to the generalized estimating equations \eqref{eq:GEE}. A sandwich estimator is commonly used to estimate the covariance matrix of $\hat{\beta}$ \citep{andersen_generalised_2003}; Let 
\[
I(\beta)=\sum_{i=1}^{n}\left(\frac{\partial}{\partial \beta}g^{-1}(\beta^T Z_i^*)\right)^T V_i^{-1}\left(\frac{\partial}{\partial \beta}g^{-1}(\beta^T Z_i^*)\right),
\]
and let
\begin{equation}\label{eq:sigma_hat}
	\hat{\Sigma}=\sum_{i=1}^n U_i(\hat{\beta}) U_i(\hat{\beta})^T.
\end{equation}
Then
\begin{equation}\label{eq:sandwich}
	\hat{\text{var}}(\hat{\beta})=I(\hat{\beta})^{-1}\hat{\Sigma}I(\hat{\beta})^{-1}.
\end{equation}
However, recent studies \citep{jacobsen_note_2016,overgaard_asymptotic_2017} found that this ordinary sandwich estimator may, in certain cases, be biased upward. An improved consistent variance estimator, together with theoretical properties of the estimated regression parameters $\hat{\beta}$, are discussed in Section~\ref{sec:theory} below.

\subsubsection{Theoretical properties}\label{sec:theory}
Assume Equation~\eqref{eq:GLM} holds for some true, but unknown parameter vector $\beta^*=(\beta_0,\beta_1,\ldots,\beta_p)^T$.
We estimate the parameter vector $\beta^*$ by $\hat{\beta}$, where $\hat{\beta}$ is the solution to the generalized estimating equations \eqref{eq:GEE}, where for simplicity we assume a single bivariate time point ($k=1$). Below we show that for our main example, where $f(T_1,T_2)=I(T_1>t_1^0, T_2>t_2^0)$, for both the Lin and Ying estiamtor and the Dabrowska estimator, $\hat{\beta}$ is consistent and asymptotically normal.

\begin{theorem}\label{thm}
	Let $\hat{\beta}$ be the solution to the generalized estimating equations \eqref{eq:GEE},
	where the pseudo-observations $\hat{\theta}_i$ are based on either the Dabrowska estimator or the Lin and Ying estimator of the bivariate survival function. Then, under suitable regularity conditions (stated in \citealt[Theorem 3.4]{overgaard_asymptotic_2017}), $\hat{\beta}\rightarrow \beta^*$ in probability and 
	\[
	\sqrt{n}(\hat{\beta}-\beta^*) \xrightarrow{d} N(0,M^{-1}\Sigma M^{-1}),
	\]
	where $M=N(\beta^*)$, $N(\beta)=E\left[\left(\frac{\partial}{\partial \beta}g^{-1}(\beta^T Z_i)\right)^T V_i^{-1}\frac{\partial}{\partial \beta}g^{-1}(\beta^T Z_i)\right]$, and $\Sigma$ is defined in Web Appendix A. 
\end{theorem}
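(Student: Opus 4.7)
The plan is to cast both nonparametric bivariate survival estimators into the general pseudo-observation framework of \cite{overgaard_asymptotic_2017} and then invoke their Theorem 3.4 directly. The Overgaard et al.\ result says that, provided (i) the base estimator $\hat\theta$ admits a second-order von Mises expansion
\[
\hat\theta - \theta = \frac{1}{n}\sum_{i=1}^n \phi(X_i) + \frac{1}{n^2}\sum_{i,j}\psi(X_i,X_j) + R_n,
\]
with an i.i.d.\ influence function $\phi$, a symmetric second-order kernel $\psi$, and a remainder $R_n$ of sufficiently small order, (ii) standard moment, smoothness and identifiability conditions hold for the link $g$, the covariance working matrix $V_i$, and the design $Z_i$, and (iii) the estimating function $U(\beta)$ and its derivative satisfy the usual GEE regularity conditions, then the GEE estimator $\hat\beta$ built from the leave-one-out pseudo-observations is consistent and $\sqrt{n}$-asymptotically normal with asymptotic covariance $M^{-1}\Sigma M^{-1}$.

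First I would verify the von Mises expansion for the Lin and Ying estimator in \eqref{eq:LY}. Writing $\hat S^{\text{LY}}_{T_1,T_2}(t_1^0,t_2^0)$ as the functional $T(\hat F_n)=\hat P(Y_1>t_1^0,Y_2>t_2^0)/\hat G(\max(t_1^0,t_2^0))$, the numerator is a linear empirical functional and the denominator is the Kaplan--Meier functional at a fixed point, both of which are known to be Hadamard differentiable of second order at the underlying distribution $F$ under the assumed independent censoring and assuming $\max(t_1^0,t_2^0)$ lies strictly below the upper support endpoint of the censoring distribution. The ratio is then second-order differentiable by the chain rule, giving the required $\phi$ and $\psi$ explicitly; the remainder $R_n$ is $o_P(n^{-1})$ by standard empirical process arguments, as required in Web Appendix A.

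The main technical step, and the main obstacle, is the analogous expansion for the Dabrowska estimator, which is a product-integral functional of several empirical cumulative hazard components (the two marginal cumulative hazards and the double-failure cumulative hazard). Here I would rely on the functional delta method for product integrals as developed in the literature on the Dabrowska estimator (Gill, van der Laan, Prentice and collaborators), which establishes compact/Hadamard differentiability at a fixed bivariate continuity point; combined with a second-order Taylor expansion of the product-integral map as used for the univariate Kaplan--Meier in \cite{overgaard_asymptotic_2017}, this yields a second-order von Mises representation with explicit $\phi$ and $\psi$. This is the step that requires the most care, because the kernel $\psi$ is built from iterated integrals against the joint martingale structure of single- and double-failure counting processes, and one must check uniform boundedness of the relevant influence terms at $(t_1^0,t_2^0)$ and an $o_P(n^{-1})$ bound on the remainder.

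Once the expansion is in place for each estimator, the remainder of the argument is largely bookkeeping: one plugs $\phi$ and $\psi$ into the general pseudo-observation formula, checks that the chosen logit link $g$ is thrice continuously differentiable with bounded derivatives on a neighborhood of $\theta$, that $Z_i$ has bounded fourth moments and $E[Z_i Z_i^T]$ is positive definite (so that $N(\beta)$ is invertible at $\beta^*$), and that $V_i$ is uniformly bounded with bounded inverse. With these in hand, Theorem 3.4 of \cite{overgaard_asymptotic_2017} applies directly, yielding $\hat\beta\to\beta^*$ in probability and
\[
\sqrt{n}(\hat\beta-\beta^*)\xrightarrow{d} N\!\left(0, M^{-1}\Sigma M^{-1}\right),
\]
with $M=N(\beta^*)$ and $\Sigma$ as given by the general formula of Overgaard et al., which in our setting is the one recorded in Web Appendix A.
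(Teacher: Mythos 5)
Your overall strategy matches the paper's: express each estimator as a smooth estimating functional, verify a second-order von Mises expansion so that Theorem 3.1 of \cite{overgaard_asymptotic_2017} controls the remainders, and then invoke their Theorem 3.4. Your treatment of the smoothness side is essentially what the paper does (coordinate projection divided by the Kaplan--Meier functional for Lin and Ying; product of two Kaplan--Meier functionals and a product-integral cross-ratio functional for Dabrowska, with Fr\'echet differentiability of any order giving Lipschitz continuity of the second derivative).

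However, there is a genuine gap: you never address the key identification condition of Theorem 3.4 (their Equation 3.36), namely that the first-order influence function must satisfy $E[\dot{\phi}(X)\mid Z]=S_{T_1,T_2}(t_1,t_2\mid Z)-S_{T_1,T_2}(t_1,t_2)$. This is not ``bookkeeping'': it is the condition that makes the pseudo-observations conditionally unbiased for the covariate-specific parameter and hence makes $\hat{\beta}$ consistent for $\beta^*$ in the model $g(E[f(T_1,T_2)\mid Z])=\beta^{*T}Z^*$; checking moments of $Z_i$, invertibility of $N(\beta^*)$, and smoothness of the link does not substitute for it. Verifying it is where the bulk of the paper's proof lives and where the assumption $(C_1,C_2)\indep(T_1,T_2,Z)$ (independence from the covariates, not merely from the failure times) is actually used. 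For Lin and Ying one must show the martingale term $\int_0^{t^*}H_*(s)^{-1}dM_{X,1}(s)$ has zero conditional expectation given $Z$ and that $E[Y(t_1,t_2)\mid Z]/G(t^*)=S_{T_1,T_2}(t_1,t_2\mid Z)$. For Dabrowska the computation is substantially harder: one decomposes each of $M_{10},M_{01},M_{11}$ into a conditional martingale plus a drift $\delta^Z\cdot Y$, shows $\frac{\partial^2}{\partial s\,\partial t}\bigl(S^Z(s,t)/S(s,t)\bigr)=q_{11}^Z-q_{10}^Z\lambda_{01}-q_{01}^Z\lambda_{10}$, and integrates via the planar fundamental theorem of calculus, combining with the analogous univariate identities for the two marginal Kaplan--Meier terms so that everything collapses to $1-S^Z(s,t)/S(s,t)$. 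Without this step your argument establishes asymptotic linearity of the pseudo-observations but not that the GEE limit is the $\beta^*$ of the stated regression model.
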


The proof is based on Theorem 3.1 and  Theorem 3.4 of \cite{overgaard_asymptotic_2017} ; the technical details are provided in Web Appendix A.

\cite{jacobsen_note_2016}, \cite{overgaard_asymptotic_2017}, and \cite{furberg_bivariate_2023} showed that the ordinary variance estimator $\hat{\Sigma}$ given by Equation~\eqref{eq:sigma_hat} may over-estimate the true variance in some settings, and consequently will lead to conservative confidence bands. The ordinary variance estimator $\hat{\Sigma}$ can be improved to produce a consistent estimator of the true variance $\Sigma$ (see Web Appendix A). However, the improved variance estimator is more difficult to calculate as it depends on the second order influence function, leading to a cumbersome estimation procedure. Additionally, the difference between the ordinary sandwich estimator, and the improved sandwich estimator is usually negligible \citep{jacobsen_note_2016,overgaard_asymptotic_2017}. Consequently, we suggest using the ordinary sandwich estimator given by Equation~\eqref{eq:sandwich}, for both the Lin and Ying estimator and the Dabrowska estimator. This  ordinary sandwich estimator is studied further in our simulations below.

\section{Simulations}\label{sec3:simulations}
We test our approach on simulated data where the goals are to estimate (i) the regression parameters and (ii) the covariate-adjusted bivariate survival probability.
We consider a univariate covariate and the logit link function, and test the approach using both the Dabrowska estimator and the Lin and Ying estimator of the bivariate survival function.
Additionally, we consider both univariate censoring and independent bivariate censoring, in which case we modify the Lin and Ying estimator in Equation~\eqref{eq:LY} by substituting ${\hat{G}(\max(t_1,t_2))}$ in the denominator with $\hat{G_1}(t_1)\hat{G_2}(t_2)$, as explained in Section~\ref{sec2.3:NPE}.
Analyses were implemented in R \citep{R} with the packages mhazard \citep{mhazard}, pseudo \citep{pseudo}, geepack \citep{geepack,geepack2,geepack3}, ggplot2 \citep{ggplot}, and plotly \citep{plotly}.

The data generating process of the bivariate logistic failure-times model is as follows. Let $\beta_0(t_1,t_2)=-\kappa \log(a_1t_1+a_2t_2)$, where we use $\kappa=1$, $a_1=1$, and $a_2=3$. We generate a random vector $(Z_1,\ldots,Z_n)^T$ from a uniform $U(0.5,1.5)$ distribution. For each $i=1,\ldots,n$, we define the logistic bivariate survival function by 
\begin{equation}\label{eq:logistic}
	S_{T_1,T_2}(t_1,t_2\mid Z_i)=\frac{\exp(\beta_0(t_1,t_2)+\beta_1Z_i)}{1+\exp(\beta_0(t_1,t_2)+\beta_1Z_i)},
\end{equation}
where we use the value $\beta_1=2$. Note that the definition of $\beta_0(t_1,t_2)$ guarantees that this is a valid bivariate survival function. In order to sample from this model, we first sample $T_1\sim F_{T_1}(t_1\mid Z)$ marginally using the inverse transform method, where $F_{T_1}(t_1\mid Z)=1-S(t_1,0\mid Z)$ is the conditional cumulative distribution function (CDF) of $T_1$. Then, we sample $T_2$ conditional on $T_1=t_1$ using the conditional CDF $F_{T_2|T_1=t_1}(t_2\mid Z)$.

For the univariate censoring scenario, we generate $n$ independent univariate censoring variables from an exponential distribution with rate $\lambda=0.3$, which corresponds to about $70\%$ censoring of $T_1$, and about $50\%$ censoring of $T_2$.  For the bivariate independent censoring scenario, we generate $n$ independent univariate censoring variables from an exponential distribution with rate $\lambda=0.3$, and $n$ independent univariate censoring variables from an exponential distribution with rate $\lambda=0.2$, which correspond to about $72\%$ censoring of $T_1$, and about $48\%$ censoring of $T_2$. In both cases, the observed data consists of $(Y_{i1},\Delta_{1i},Y_{21},\Delta_{2i},Z_i)$, as described in Section~\ref{sec2.1:setting}.

\subsection{Logistic failure times: a single regression model for six time points}\label{sec3.1:single_reg_logistic}
We use a sample of size $n=200$ and consider the six time points $$\{(0.5,0.7),(1,0.7),(0.5,1.2),(1,1.2),(0.5,1.5),(1,1.5)\}$$ and the functions $f_j(T_1,T_2)=I(T_1>t_1^j, T_2>t_2^j)$, $j=1,\ldots, 6$, such that $f(T_1,T_2)=\left(f_1(T_1,T_2),\ldots, f_6(T_1,T_2)\right)$. Note that these six time points were chosen randomly and before observing the simulated data. We then verified that these time points were not too extreme by checking that the empirical bivariate CDF of the observed times at these six time points was not too close to either zero or one, where the bivariate empirical CDF of the observed times is defined by $\hat{F}_{Y_1,Y_2}(t_1,t_2)=\frac{1}{n}\sum_{i=1}^{n}I(Y_{1i}\leq t_1,Y_{2i}\leq t_2)$. Specifically, we found that all six time points were between the 10 and 30 empirical percentiles of the observed bivariate times; that is $\hat{F}_{Y_1,Y_2}(t_1^j,t_2^j)\in (0.10,0.30)$ for $1\leq j\leq 6$.
The corresponding regression model
\begin{equation}\label{eq:single_reg}
	g\left(S(t_1^j,t_2^j\mid Z)\right)=\beta_0^j+\beta_1Z, \ j=1,\ldots, 6,
\end{equation}
where $g$ is the logit link function.
This regression model assumes a different intercept $\beta_0^j$ for each time point, and a single slope parameter $\beta_1$ (which is equal for all time points). In Web Appendix B we repeat the analysis using six different regression models, with a different intercept $\beta_0^j$ and a different slope $\beta_1^j$ for each time point, where $j=1,\ldots,6$.

For the response, we use the bivariate pseudo-observations that are based on estimates of the bivariate survival function using either (1) the Lin and Ying estimator or (2) the Dabrowska estimator.

Table~\ref{table:logistic_new} presents the true values and the estimated values of the regression coefficients based on $m=500$ replications (including standard deviations, standard errors, and coverage), for both the univariate censoring mechanism (top), and the bivariate censoring mechanism (bottom). The standard deviation is based on the empirical variance of the estimated regression parameters.
The standard error is the square root of the average of the ordinary sandwich estimators of the variance, each given by Equation~\eqref{eq:sandwich}; that is,  $\text{se}=\sqrt{\frac{1}{m}\sum_{i=1}^m\hat{\text{var}}(\hat{\beta}_i)}$. The coverage is the proportion of times that the true regression parameter is included in the confidence interval.

For the univariate censoring scenario, the bivariate pseudo-observations approach manages to correctly estimate the true value of the regression parameters, with a negligible insignificant bias. Interestingly, for the univariate censoring scenario, the estimates of the regression parameters are very close, for both the Dabrowska estimator and the Lin and Ying estimator. However, the standard deviation and standard error of the estimates based on the Dabrowska estimator are lower than those based on the Lin and Ying estimator. Additionally, for each estimator separately, the standard deviations are larger than the standard errors. Note also that the coverage, which is based on the standard errors, is very close to 95\%, for both the Dabrowska estimator and for the Lin and Ying estimator. However, there are two instances where the coverage falls slightly below 95\% for the Lin and Ying estimator. 

For the bivariate censoring scenario, the Dabrowska estimator seems to perform better than the modified estimator of Lin and Ying, both in terms of lower bias and lower variance. The coverage is generally slightly lower than 95\%. This phenomenon is more pronounced for the Lin and Ying estimator.

\begin{table}[p]
	\begin{subtable}{1\textwidth}
		\centering
		\subcaption{Univariate censoring}
		\begin{tabular}{rrrrrrrrrr}
			\hline
			\multicolumn{2}{r}{} & 	\multicolumn{4}{c}{Dabrowska} & \multicolumn{4}{c}{Lin and Ying} \\	
			\hline
			Parameter & True value	& $\hat{\beta}$ & sd & se & cov (\%) & $\hat{\beta}$ & sd & se & cov (\%) \\ 
			\hline
			\hline
			$\beta_0(0.5,0.7)$ & -0.96 & -0.99 & 0.63 & 0.60 & 95.60 & -0.99 & 0.70 & 0.65 & 95.20  \\ 
			$\beta_0(1,0.7)$ & -1.13  & -1.16 & 0.63 & 0.60 & 95.00 & -1.16 & 0.70 & 0.66 & 95.80  \\ 
			$\beta_0(0.5,1.2)$ & -1.41 & -1.45 & 0.64 & 0.60 & 95.00 & -1.45 & 0.71 & 0.66 & 94.80 \\ 
			$\beta_0(1,1.2)$ & -1.53 & -1.56 & 0.64 & 0.61 & 95.80 & -1.56 & 0.71 & 0.67 & 95.40  \\ 
			$\beta_0(0.5,1.5)$ & -1.61 & -1.64 & 0.65 & 0.61 & 95.40 & -1.64 & 0.71 & 0.67 & 95.40  \\ 
			$\beta_0(1,1.5)$ & -1.70 & -1.74 & 0.65 & 0.61 & 95.60 & -1.74 & 0.71 & 0.67 & 95.20  \\ 
			$\beta_1$ & 2.00 & 2.05 & 0.64 & 0.60 & 95.80 & 2.05 & 0.71 & 0.66 & 94.60\\ 
			\hline
		\end{tabular}
	\end{subtable}
	\hfill
	\hfill\\
	\hfill\\
	\begin{subtable}{1\textwidth}
		\centering
		\subcaption{Independent bivariate censoring and a modification of Lin and Ying}
		\begin{tabular}{lrrrrrrrrr}
			\hline
			\multicolumn{2}{r}{} & 	\multicolumn{4}{c}{Dabrowska} & \multicolumn{4}{c}{Lin and Ying} \\	
			\hline
			Parameter & True value	& $\hat{\beta}$ & sd & se & cov (\%) & $\hat{\beta}$ & sd & se & cov (\%) \\ 
			\hline
			$\beta_0(0.5,0.7)$ & -0.96 & -0.98 & 0.61 & 0.58 & 94.20 & -1.00 & 0.71 & 0.67 & 94.20  \\ 
			$\beta_0(1,0.7)$ & -1.13 & -1.16 & 0.61 & 0.58 & 94.20 & -1.17 & 0.72 & 0.68 & 94.80  \\ 
			$\beta_0(0.5,1.2)$ & -1.41 & -1.44 & 0.61 & 0.59 & 95.00 & -1.46 & 0.72 & 0.69 & 94.80  \\ 
			$\beta_0(1,1.2)$ & -1.53 & -1.56 & 0.61 & 0.59 & 94.80 & -1.58 & 0.73 & 0.69 & 94.20  \\ 
			$\beta_0(0.5,1.5)$ & -1.61 & -1.64 & 0.61 & 0.59 & 95.00 & -1.67 & 0.72 & 0.69 & 94.80  \\ 
			$\beta_0(1,1.5)$ & -1.70 & -1.74 & 0.61 & 0.59 & 94.40 & -1.76 & 0.73 & 0.70 & 94.60  \\ 
			$\beta_1$ & 2.00 & 2.04 & 0.61 & 0.58 & 95.20 & 2.07 & 0.73 & 0.69 & 94.80 \\ 
			\hline
		\end{tabular}
	\end{subtable}
	\hfill\\
	\caption{Top: univariate censoring. Bottom: independent bivariate censoring. Mean, sd, se, and coverage of the estimated regression parameters $\beta_0(t_1,t_2)$ and $\beta_1$, for both the Dabrowska estimator, and the Lin and Ying estimator. These estimates are based on a single regression model for the six time points, and on 500 simulations from the bivariate logistic model. The estimate $\hat{\beta}$ is the average of 500 parameter estimates; sd, square root of empirical variance of 500 simulation replications; se, square root of average estimated variance based on \eqref{eq:sandwich}.
	}\label{table:logistic_new}
\end{table}

Additionally, for each time point, iteration, and type of estimator, we estimated the joint survival probability $\hat{S}(t_1^j,t_2^j\mid Z_i)$ by plugging the estimates in Equation~\eqref{eq:logistic}, and compared it to the true survival probability. We then averaged the absolute difference between $\hat{S}(t_1^j,t_2^j\mid Z)$ and the true $S(t_1^j,t_2^j\mid Z)$, over all $Z_i$, $i=1,\ldots,n$. That is, for each time point, estimator and iteration, we report the mean absolute error (MAE) defined by $${\rm MAE}=\frac{1}{n}\sum_{i=1}^{n}|\hat{S}(t_1^j,t_2^j\mid Z_i)-S(t_1^j,t_2^j\mid Z_i)|.$$ Additionally, we present the standardized MAEs, which are defined by $${\rm standardized \; MAE}=\frac{1}{n}\sum_{i=1}^{n}\frac{|\hat{S}(t_1^j,t_2^j\mid Z_i)-S(t_1^j,t_2^j\mid Z_i)|}{S(t_1^j,t_2^j\mid Z_i)}.$$

Web Figure~1 and Web Figure~2 present the boxplot of the MAEs (top) and the standardized
MAEs (bottom) for the univariate censoring scenario and the bivariate censoring
scenario, respectively. For both censoring mechanisms, the MAEs show that
the bivariate pseudo-observations approach estimates quite well the covariate-adjusted joint
survival probability, with a slight preference to the Dabrowska estimator over the Lin and
Ying estimator. Finally, Web Figure~3 presents the boxplot of the MAEs as a function of the
sample size $n$. As expected, the MAEs decrease as the sample size increases. Specifically, for a sample size of $n=200$, the majority of MAEs are below 0.05, whereas for a sample size of $n=800$, the majority of MAEs are below 0.025.

\subsection{The choice of time points}\label{sec:time_points}
In Section~\ref{sec3.1:single_reg_logistic} we pre-specified six time points that were selected before observing the data. This is known as the fixed time point selection approach, and was found to work well in the pseudo-observations literature (see, for example, \citealt{furberg_bivariate_2023}). Other common approaches include placing the time points at percentiles of the observed times, or at equidistant values over the range of observed times \citep{andersen_generalised_2003}; these different approaches usually result in similar unbiased regression estimates \citep{furberg_bivariate_2023}. For bivariate data, it is not clear how to uniquely define such time point selection approaches. We suggest the following procedure that adapts the idea of equidistant points to the bivariate case.
For each coordinate separately, calculate the 20th and 90th empirical percentiles of the observed time points.
Pair the two lower and upper percentiles to obtain the bivariate time points $(t_1^{20\%},t_2^{20\%})$ and $(t_1^{90\%},t_2^{90\%})$, and calculate the bivariate empirical CDF, $\hat{F}_{Y_1,Y_2}(t_1,t_2)$, at these two bivariate time points.
Note that the value of the bivariate empirical CDF at the pair of the low percentiles $(t_1^{20\%},t_2^{20\%})$ can be quite small. If it is lower than 0.1, replace the 20th percentile with the 25th or 30th  percentile, and choose the value for which the bivariate empirical CDF first reaches the value 0.1. This value will define the two lower limits. Then, take $k$ equidistant time points between the lower and upper limits, for each coordinate separately. Finally, pair these points together to obtain the selected bivariate time points.  We study the performance of this approach in Web Appendix C.

In summary, we found that the number of time points and their selection method has hardly any effect in the univariate censoring case, for both the Dabrowska and the Lin and Ying estimators (Web Table~2). Additionally, the Dabrowska estimator outperforms the Lin and Ying estimator, both in terms of lower bias and lower variance. In the bivariate censoring case, the Dabrowska estimator performs equally well for all time point selection methods. However, the modified Lin and Ying estimator is more sensitive to the choice of time points. Nevertheless, in the bivariate censoring scenario with a large enough sample size, the choice of time points has little effect on the performance of the method with either estimator, and the resulting regression estimates are quite good (Web Table~2). Additionally, our simulations suggest that the modified Lin and Ying estimator works well when the amount of censoring is relatively low (Web Table~3). Finally, in our simulated example, we found that lowering the upper percentile, which defines the upper limit of our suggested time point selection procedure, improves estimation for the Lin and Ying estimator, both in terms of lower bias and lower variance. However, it has hardly any effect on the results for the Dabrowska estimator (Web Table~3). To sum up, we recommend using the Dabrowska estimator over the Lin and Ying estimator.

\subsection{Model misspecification}
To test for the effect of model misspecification, we consider also data that are generated
from a log-normal distribution in which case the logistic regression model is misspecified.
Web Appendix D and Web Figures 14-17 present the data generating mechanism and the
MAEs for the log-normal bivariate data. In general, the MAEs show that the
pseudo-observations approach manages to provide good predictions of the covariate-adjusted
bivariate survival function, under this model misspecification.

\section{Data analysis}\label{sec4:data-analysis}
We demonstrate the method on two data sets, the diabetic retinopathy study that is discussed below, and the burn victim data that is presented in Web Appendix F.
The diabetic retinopathy study describes the time to blindness of 197 patients with diabetic retinopathy, where one eye of each patient is randomly assigned to laser treatment, and time is measured in months from initiation of treatment. This dataset was first described by \cite{huster_modelling_1989}, and is available from the survival R package \citep{survival-book,survival-package}. 101 out of the 197 untreated eyes suffered from vision loss during the follow-up period, compared to only 54 vision loss cases out of the 197 treated eyes. All survival times include a built-in lag time of approximately 6 months and the maximal follow-up time was 75 months. Censoring was caused by either death, dropout, or end of the study; thus, the setting is of a single censoring variable for both survival times. The covariates include the age at diagnosis of diabetes, the type of diabetes (juvenile or adult) and a risk score of each eye (a value between 6 and 12). As the risk scores of both eyes are usually identical or very similar, we define a patient-specific risk score by averaging the risk scores of both eyes. Age at diagnosis of diabetes may be correlated with death (which is one of the possible causes of censoring); however, since the largest follow-up time is 75 months, and since all patients in this dataset were diagnosed with diabetes before the age of 58, we assume that the correlation between age at diagnosis of diabetes and death during the course of the study is negligible, and that the assumption of independent censoring holds. 

We denote by $T_1$ the time to blindness (in months) of the treated eye, and by $T_2$ the time to blindness of the untreated eye. Our first goal here is to use the proportional odds model to estimate the five year survival probability of both eyes given the covariates age at diagnosis of diabetes, type of diabetes, and the patient-specific risk score.
As a preliminary analysis, we first compared the empirical un-adjusted odds for juvenile versus adult-onset diabetes, for all bivariate time points on the diagonal $(t, t)$.
Web Figure~18 shows that the vertical displacement between the two curves is reasonably
close to being constant for all $t>30$ (where $t$ denotes the number of months since the
laser treatment), thus providing some confirmation for the proportional odds assumption, at least for some bivariate time points. 
We use our bivariate pseudo-observations approach with the Dabrowska estimator and the logit link function to estimate the joint probability that both eyes survive more than five years given the covariates. Figure~\ref{fig:joint_retinopathy} presents the estimates of the probability that both eyes survive more than five years, as a function of both age at diagnosis and risk score. As can be seen, survival decreases as age at diagnosis and risk score increase.  

\begin{figure}[p]
	\centering
	\includegraphics[width=0.95\linewidth]{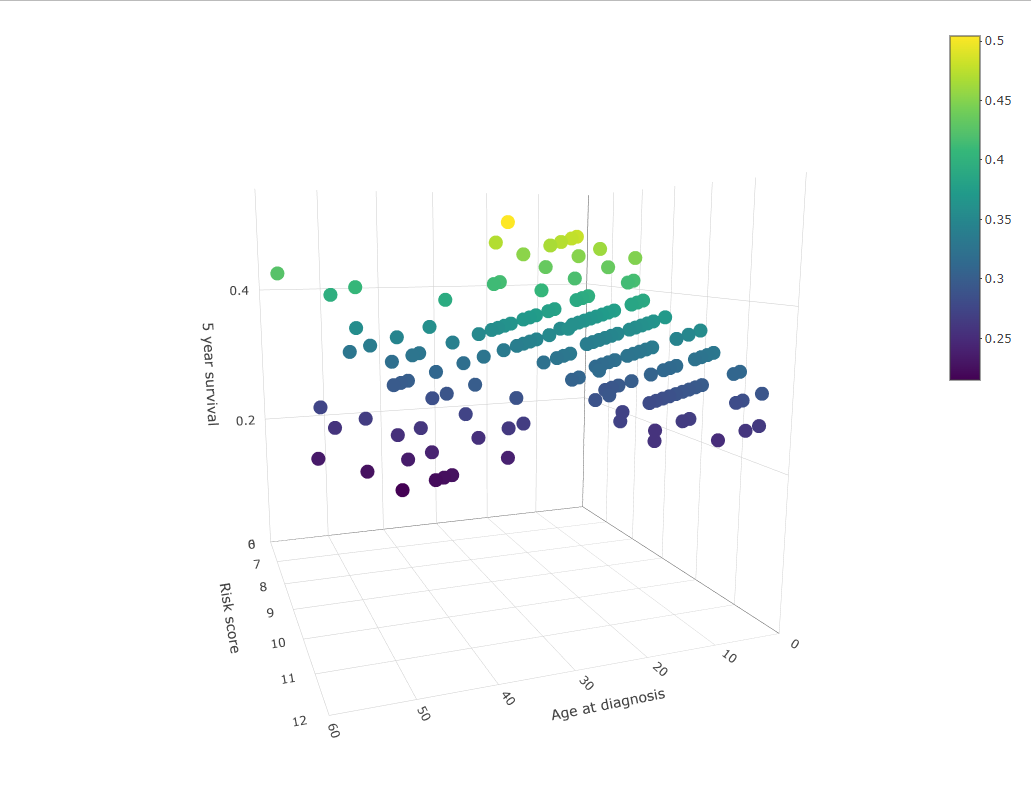}
	\caption{Diabetic retinopathy study. Covariate-adjusted joint survival probability $S_{T_1,T_2}(60,60\mid Z)=P(T_1>60, T_2 > 60 \mid Z)$. Estimates of the 5 year joint survival probabilities as a function of age at diagnosis (x axis) and risk score (y axis). The estimates are based on the Dabrowska estimator. The z axis is the value of the 5-year (60-month) survival probability.}
	\label{fig:joint_retinopathy}
\end{figure}

In addition to studying the bivariate survival function, we consider the effect of the covariates~$Z$ on the conditional survival probabilities $S_{T_1| T_2\leq 36}(60\mid Z)=P(T_1>60\mid T_2 \leq 36, Z)$ and $S_{T_1| T_2>36}(60\mid Z)=P(T_1>60\mid T_2 >36, Z)$. That is, we are interested in the conditional probability that the treated eye survives more than five years, given that the untreated eye has either failed or survived during the first three years, given the covariates. For this purpose, we use our bivariate pseudo-observations approach based on the Dabrowska estimator, and the logit link function, to estimate the covariate-adjusted bivariate survival probabilities at the three time points $\{(60,36), (60,0), (0,36)\}$
Consequently, we obtain estimates of the two conditional probabilities $S_{T_1| T_2\leq 36}(60\mid Z)$ and $S_{T_1| T_2>36}(60\mid Z)$ using:
\begin{align*}
	S_{T_1| T_2\leq 36}(60\mid Z)&=P(T_1>60 \mid T_2\leq 36, Z)=\frac{S(60,0 \mid Z)-S(60,36 \mid Z)}{1-S(0,36 \mid Z)}, \\
	S_{T_1| T_2>36}(60\mid Z)&=P(T_1>60 \mid T_2> 36, Z)=\frac{S(60,36 \mid Z)}{S(0,36 \mid Z)}.
\end{align*}
Confidence intervals for such conditional probabilities can be obtained from Theorem~\ref{thm} together with the delta method (Web Appendix G). 
Web Table~4 presents the estimated regression coefficients of the regression model based on either a single time point, or three time points. Web Table~4 shows that the estimates of all three regression parameters are negative, meaning that an increase in these covariate values decreases the bivariate survival probability. Additionally, the estimates based on either a single time point or three time points are very close, providing some additional confirmation for the correctness of the proportional odds model. Finally, the estimated regression parameter of juvenile diabetes is negative, indicating that when the risk score is fixed and when age at diagnosis is close to 20 years (the threshold for diabetes type), the bivariate survival for adult onset diabetes is larger than the survival for juvenile diabetes.

Figure~\ref{fig:retinopathy} presents the estimates of the conditional probabilities $S_{T_1| T_2\leq 36}(60\mid Z)$ (top) and $S_{T_1| T_2>36}(60\mid Z)$ (bottom), as a function of age at diabetes diagnosis and risk score. When the untreated eye fails within the first three years, the probability of the treated eye surviving beyond five years ranges from 0.50 to 0.77. This probability is highest when both age at diagnosis and risk score are low, and gradually decreases as these covariates increase. Conversely, if the untreated eye survives beyond three years, the treated eye has a substantially higher chance of surviving for more than five years, with probabilities ranging from 0.73 to 0.84. While this conditional survival probability decreases slightly as the risk score increases, it remains high even as the age at diagnosis increases, showing minimal sensitivity to this covariate. In summary, a comparison of the two conditional probabilities reveals that the effectiveness of laser treatment also depends on the health of the untreated eye. If the untreated eye fails within the first three years, the 5-year survival probability of the treated eye varies based on covariate values. However, if the untreated eye survives beyond three years, the 5-year survival probability for the treated eye remains high and is less influenced by these covariates.
Finally, note that additional unmeasured confounders may exist and result in biased effect estimates. An important example of such a confounder is a variable measuring the time from diagnosis to the time of treatment, which unfortunately is not available in this dataset.

\begin{figure}[p]
	\centering
	\begin{subfigure}[b]{0.9\textwidth}
		\centering
		\includegraphics[width=0.95\linewidth]{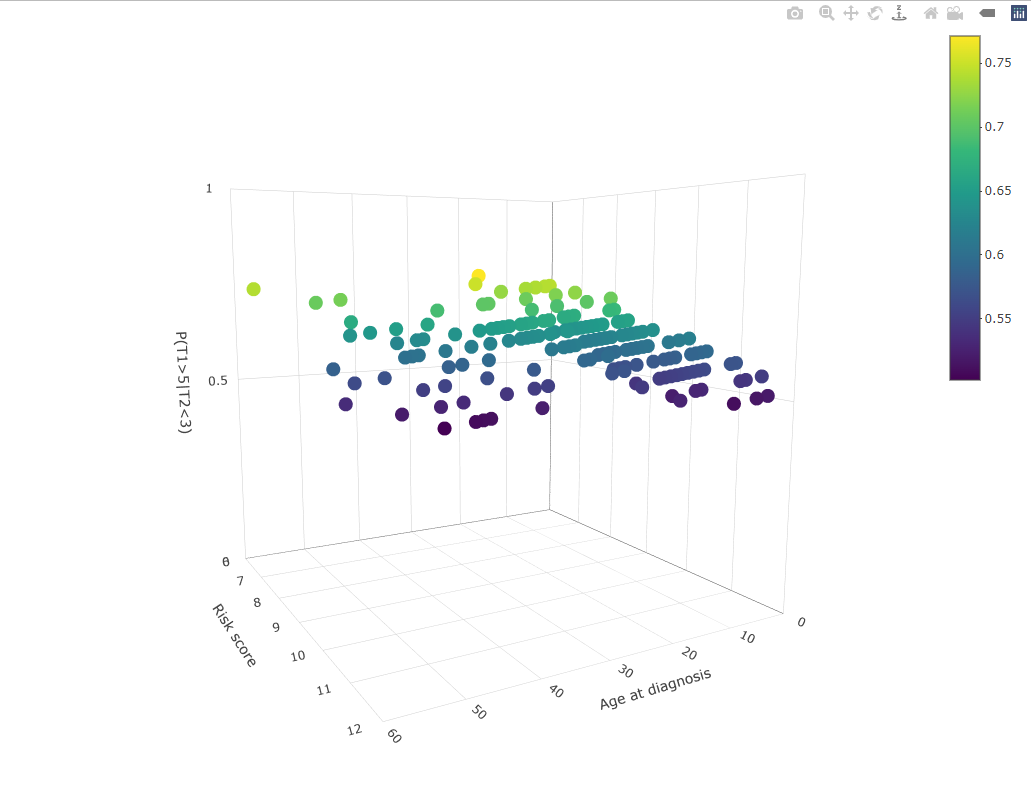}
		\caption{Conditional survival probability $S_{T_1| T_2\leq 36}(60\mid Z)=P(T_1>60\mid T_2 \leq 36, Z)$} \label{fig:retinopathy_1}
	\end{subfigure}
	\hfill
	\begin{subfigure}[b]{0.8\textwidth}
		\centering
		\includegraphics[width=0.95\linewidth]{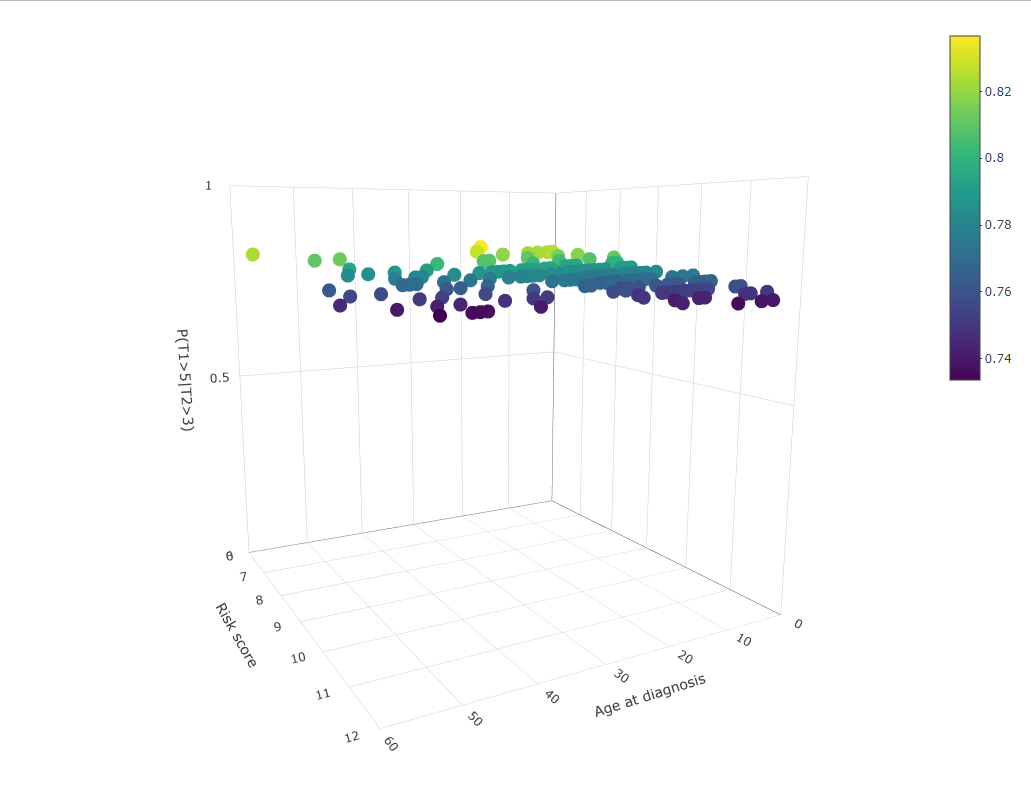}
		\caption{Conditional survival probability $S_{T_1| T_2> 36}(60\mid Z)=P(T_1>60\mid T_2 > 36, Z)$} \label{fig:retinopathy_2}
	\end{subfigure}
	
	\caption{Diabetic retinopathy study. Estimates of the two conditional survival probabilities as a function of age at diagnosis (x axis) and risk score (y axis). The estimates are based on the bivariate pseudo-observations approach using the Dabrowska estimator. The z axis is the value of the 5-year (60-month) survival probability.}
	\label{fig:retinopathy}
\end{figure}

\section{Discussion}\label{sec5:discussion}
We proposed an extension of the pseudo-observations approach for estimation of covariate effects on bivariate survival data with two aims: (i) estimation of regression parameters and (ii)  estimation of the covariate-adjusted bivariate survival function for the purpose of prediction.
We study two non-parametric estimators of the survival function and recommend using the Dabrowska estimator and the time point selection procedure presented in Section~\ref{sec:time_points}. Finally, we found that the ordinary sandwich estimator usually obtains correct coverage, for both estimators and for both censoring scenarios. 

Additionally, we proved that 
the estimated regression coefficient is both consistent and asymptotically normal under the assumption of independent censoring.
Interestingly, under this assumption, \cite{gill_inefficient_1995} showed that the influence function of the Dabrowska estimator coincides with the efficient influence function. That is, in the setting considered in this paper, the Dabrowska estimator is efficient.

As a referee noted, the bivariate extension of the proportional odds model, which is used as our main example throughout the paper, only holds under specific parametric assumptions. Consequently, its
approximations of the covariate-adjusted bivariate survival probability for different data-generating mechanisms is limited. However, for a single bivariate time point, such a model seems natural, and the proportional odds assumption only needs to hold for that specific time point. Additionally, when interest lies in estimating the covariate-adjusted bivariate survival probability at $k$ different bivariate time points, one can use either a single regression model for simultaneous estimation of $p+k$ parameters ($p$ time-fixed regression estimates and $k$ different intercepts), or fit $k$ separate models, estimating $k+pk$ parameters (a unique intercept and $p$-dimensional slope for each time point; see Web Appendix B). If the slopes are similar across the $k$ models, the proportional odds assumption is supported, allowing a single model for all $k$ probabilities.
Furthermore, in our simulations, we found that a single proportional odds model predicted quite well the true covariate-adjusted bivariate survival probabilities simultaneously at a number of fixed time points, even under model misspecification (Section 3.3 and Web Appendix D).
Finally,
one can use other link functions, such as the probit link or the cloglog link as alternatives to the proportional odds model. Comparing predictions from different GLMs can assess robustness to model misspecification.

In this work, we were mainly interested in the effects of covariates on the bivariate survival probability. However, as described in Sections~\ref{sec2.1:setting} and \ref{sec2.4:univariate},  other parameters of interest can also be considered in a similar manner. An interesting direction for future research is using our bivariate pseudo-observations approach for simultaneous estimation of covariate effects on several bivariate parameters of interest. We intend to study this direction in future work.

The current study offers an extension of the pseudo-observations approach to bivariate survival data. We note that generalization to trivariate, or, more generally, multi-variate survival data is also possible. This can be achieved by first estimating the joint (multi-variate) survival function (using, for example, the multi-variate Kaplan-Meier estimator of \citealt{prentice_nonparametric_2018}), and then using it to define the relevant pseudo-observations.  
\section*{Acknowledgments}

This research was partially supported by the ISRAEL SCIENCE FOUNDATION (grant No. 1147/20). Y. T.-L. was supported by the Drs. Eva \& Shelby Kashket Memorial Fellowship.

\section*{Supplementary Materials}
The code for the simulations and for the data-analysis is available at https://github.com/Yael-Travis-Lumer/Bivariate-pseudo-observations.

\newpage
\bibliographystyle{plainnat}
\bibliography{bivariatePObib}

\appendix
\setcounter{table}{0}
\setcounter{figure}{0}
\setcounter{equation}{0}
\renewcommand{\theequation}{S\arabic{equation}}
\renewcommand{\figurename}{Web Figure}
\renewcommand{\tablename}{Web Table}
\renewcommand{\thesection}{Web Appendix \Alph{section}}
\renewcommand{\thesubsection}{A.\arabic{subsection}}
\section{}
In this section our goal is to prove Theorem~\ref{thm}. Note that the proof of Theorem~1 is based on showing that the conditions stated in Theorem 3.1 and  Theorem 3.4 of \cite{overgaard_asymptotic_2017} are met. Before presenting the proof, we present below some key ideas, and express the pseudo-observations in terms of estimating functionals and influence functions. We also define more formally the conditions that an approximately unbiased estimator needs to satisfy.

Specifically in our context, an estimating functional of an estimator of $\theta=S_{T_1,T_2}(t_1^0,t_2^0)$ is a function $\phi:D\mapsto \Theta$, where $D$ is some Banach space consisting of vectors of functions that includes also a general sample average denoted here by $F_n$ and its limit $F$, and $\Theta \subset \mathbb{R}$ contains the parameter $\theta=S_{T_1,T_2}(t_1^0,t_2^0)$. The estimating functional is such that $\phi(F)=\theta$ returns the true parameter (which in our case is the joint bivariate survival $S_{T_1,T_2}(t_1^0,t_2^0)$), and $\phi(F_n)=\hat{\theta}$ returns the estimator $\hat{S}_{T_1,T_2}(t_1^0,t_2^0)$.

In general, similarly to the Taylor expansion of functions, smooth functionals can be von-Mises expanded in a similar manner, using Fréchet derivatives. It can be shown (see, for example, \citealt{graw_pseudo-values_2009,jacobsen_note_2016,overgaard_asymptotic_2017}) that the second-order von-Mises expansion of a smooth functional $\psi(F_n)$ around $F$ is
\begin{equation}\label{eq:von-Mises}
	\psi(F_n)= \psi(F)+\psi_F'(F_n-F)+\frac{1}{2} \psi_F''(F_n-F,F_n-F)+R_n,
\end{equation}
where $\psi_F'(F_n-F)$ is the Fréchet derivative of $\psi$ at $F$ in the direction of $F_n-F$, $\psi_F''(F_n-F,F_n-F)$ is the second-order Fréchet derivative, and where $R_n$ is the remainder. The (first order) influence function is defined by $\dot{\psi}(X)=\psi_F'(\delta_X-F)$, where $\delta_X$ is usually a vector of indicators such that $F_n=\frac{1}{n}\sum_{i=1}^{n}\delta_{X_i}$. The second-order influence function is defined by $\ddot{\psi}(X_1,X_2)=\psi_F''(\delta_{X_1}-F,\delta_{X_2}-F)$. By properties of these influence functions (see, for example, \citealt{vaart_asymptotic_1998, overgaard_asymptotic_2017}), Equation~\eqref{eq:von-Mises} can be written as
\begin{equation*}
	\psi(F_n)=\psi(F)+\frac{1}{n}\sum_{i=1}^{n}\dot{\psi}(X_i)+\frac{1}{2n^2}\sum_i\sum_j \ddot{\psi}(X_i,X_j)+R_n.
\end{equation*}
Recall that in our setting, $\phi$ denotes the estimating functional of an estimator of $\theta=S_{T_1,T_2}^0(t_1^0,t_2^0)$. Consequently, if $\phi$ is sufficiently smooth, then we can write 
\begin{equation*}
	\phi(F_n)= \phi(F)+\frac{1}{n}\sum_{i=1}^{n}\dot{\phi}(X_i)+\frac{1}{2n^2}\sum_i\sum_j \ddot{\phi}(X_i,X_j)+R_n,
\end{equation*}
where $\dot{\phi}(X)=\phi_F'(\delta_X-F)$ is its first order influence function, and where $\ddot{\phi}(X_1,X_2)=\phi_F''(\delta_{X_1}-F,\delta_{X_2}-F)$ is its second order influence function. Note that such an estimator is approximately unbiased in the sense that
\begin{align*}
	E\left[\hat{\theta}\right]=E\left[\phi(F_n)\right]&=E\left[\phi(F)\right]+\frac{1}{n}\sum_{i=1}^{n}E\left[\dot{\phi}(X_i)\right]+\frac{1}{2n^2}\sum_i\sum_j E\left[\ddot{\phi}(X_i,X_j)\right]+E[R_n]\\
	&=\phi(F)+\frac{1}{n}\sum_{i=1}^{n}0+\frac{1}{2n^2}\sum_i\sum_j 0+E[R_n]=\theta+E[R_n],
\end{align*}
which follows from the fact that $E\left[\dot{\phi}(X)\right]=0$ and $E\left[\ddot{\phi}(X,y)\right]=0$ for every $y$ (see \citealt{vaart_asymptotic_1998}, Chapter 20). That is, $E[\hat{\theta}]=\theta+E[R_n]$
which means that a plug-in estimator of a smooth functional is approximately unbiased, as the remainder term converges to zero. 

Similarly to the expansion above, for the average $F_n^{(k)}=\frac{1}{n-1}\sum_{i\neq k}\delta_{X_i}$ based on a sample of size $n-1$, we have that $\phi(F_n^{(k)})=\hat{\theta}^{-k}$ and that 
\begin{equation*}
	\phi(F_n^{(k)})= \phi(F)+\frac{1}{n-1}\sum_{i\neq k}\dot{\phi}(X_i)+\frac{1}{2(n-1)^2}\sum_{i\neq k}\sum_{j\neq k} \ddot{\phi}(X_i,X_j)+R_n^{-k}.
\end{equation*}

Hence, the pseudo-observation $\hat{\theta}_k=n\hat{\theta}-(n-1)\hat{\theta}^{-k}$ can be obtained by
\[
\hat{\theta}_k=n\phi(F_n)-(n-1)\phi(F_n^{(k)})=\phi(F)+\dot{\phi}(X_k)+\frac{1}{n-1}\sum_{i\neq k}\ddot{\phi}(X_k,X_i)+R_{n,k},
\]
where the remainder term $R_{n,k}$ includes also some of the second order terms. For more information see \cite{overgaard_asymptotic_2017} and the references therein.

For the theory of pseudo-observations to hold, the estimating functional $\phi$ needs to be two times (Fréchet) differentiable with a Lipschitz continuous second-order derivative, and the norm of $F_n-F$ needs to converge at a certain rate. This ensures that all remainder terms $R_{n,k}$ in the von-Mises expansion of the pseudo-observations converge (in probability) to zero, and corresponds to Theorem 3.1 of \cite{overgaard_asymptotic_2017}. The second condition is that the conditional expectation of the influence function $E\left[\dot{\phi}(X)|Z\right]$ is equal to $S_{T_1,T_2}(t_1,t_2|Z)-S_{T_1,T_2}(t_1,t_2)$, which corresponds to Equation 3.36 in Theorem 3.4 of \cite{overgaard_asymptotic_2017} for the specific case where $\theta=S_{T_1,T_2}(t_1,t_2)$. In Section~\ref{sec:LY} we prove these conditions for the Lin and Ying estimator, and in Section~\ref{sec:Dab} for the Dabrowska estimator. Finally, in Section~\ref{sec:covariance} we state the result and present the covariance matrix of the limiting Gaussian distribution.

\subsection{The Lin and Ying estimator}\label{sec:LY}
In Section~\ref{sec:LY_functional} we present the estimating functional of the Lin and Ying estimator, and discuss the convergence of $\norm{F_n-F}$. In Section~\ref{sec:IF} we discuss the smoothness of this functional and derive the associated first order influence function. In Section~\ref{sec:cond_expec} we prove that $E\left[\dot{\phi}(X)|Z\right]=S_{T_1,T_2}(t_1,t_2|Z)-S_{T_1,T_2}(t_1,t_2)$. 

\subsubsection{The estimating functional of the Lin and Ying estimator}\label{sec:LY_functional}
Recall that we observe $n$ i.i.d. quintuples $\{(Y_{11},\Delta_{11},Y_{21}, \Delta_{21}, Z_1),\ldots,(Y_{1n},\Delta_{1n},Y_{2n}, \Delta_{2n}, Z_n)\}$, where for $j=1,2$, $Y_j=\min(T_j,C_j)$ is the minimum between the survival and censoring times, and $\Delta_j=I(T_j \leq C_j)$ is the corresponding indicator. We denote by $\{X_1,\ldots, X_n\}$ the observed data without the covariates; that is, for $i=1,\ldots,n$, $X_i=(Y_{1i},\Delta_{1i},Y_{2i}, \Delta_{2i})$.
Using the notation of \cite{overgaard_asymptotic_2017}, we define similarly:
\begin{enumerate}
	\item $Y:\mathbb{R}_+^2\mapsto \{0,1\}$ such that  $Y(t_1,t_2)=1(Y_1>t_1,Y_2>t_2)$
	\item $Y_X:\mathbb{R}_+^2\mapsto \{0,1\}$ such that  $Y_X(t_1,t_2)=1(C^*\geq \max(t_1,t_2))$, where $C^*=\max(Y_1,Y_2)$
	\item $N_{X,1}:\mathbb{R}_+^2\mapsto \{0,1\}$ such that $N_{X,1}(t_1,t_2)=1(C^*\leq \max(t_1,t_2), \Delta=1)$, where $\Delta=1-\Delta_1\Delta_2$
\end{enumerate}
Define $\delta_X=(Y, Y_X, N_{X,1})$, and $F_n=\frac{1}{n}\sum_{i=1}^n\delta_{X_i}$. Note that
\[
F=\lim_{n\rightarrow \infty}F_n=(H,H_*,H_1)
\] where $H(t_1,t_2)=P(Y_1>t_1,Y_2>t_2)$, $H_*(t_1,t_2)=P(C^*\geq \max(t_1,t_2))$ and $H_1(t_1,t_2)=P(C^*\leq \max(t_1,t_2), \Delta=1)$.

Let $F_n, F \in D$, where $D$ is some space of the form $D=\{\tilde{h}: \tilde{h}=(h, h_*,h_1)\}$. That is, $D$ is a space of vectors $\tilde{h}=(h, h_*,h_1)$; the first element of the vector, $h$, is a function of a bivariate time point $(t_1,t_2)$, whereas the two other elements $h_*$ and $h_1$ are actually functions of a univariate time point $t^*=\max(t_1,t_2)$. Note also that by Lemma 2.1 in the supplement of \cite{overgaard_asymptotic_2017}, we have that $\norm{F_n-F}$ converges to zero sufficiently fast.
Next, as in \cite{overgaard_asymptotic_2017}, define the Nelson-Aalen estimating functional $\psi:D\mapsto \mathbb{R}$ by $$\psi(\tilde{h})=\int_0^{(\cdot)} \frac{1(h_*(s)>0)}{h_*(s)}dh_1(s),$$ and where we are only using the second and third coordinates of $\tilde{h}$. The corresponding Kaplan-Meier estimating functional is 
$$\chi(\tilde{h})=\prod_{0}^{(\cdot)}(1-\psi(ds;\tilde{h})),$$ where $\psi(ds;\tilde{h})=\psi(\tilde{h})(ds)$.

The estimating functional for the Lin and Ying estimator is given by 
\[
\phi(\tilde{h})=\frac{h}{\chi(\tilde{h})}.
\]

We note that 
\[
\phi(F)(t_1,t_2)=\frac{H(t_1,t_2)}{\chi(F)(\max(t_1,t_2))}=\frac{P(Y_1>t_1,Y_2>t_2)}{P(C>\max(t_1,t_2))}=S_{T_1,T_2}(t_1,t_2)
\]
is the true parameter of interest, and 
\[
\phi(F_n)(t_1,t_2)=\frac{\frac{1}{n}\sum_{i=1}^{n}I(Y_{1i}>t_1, Y_{2i}>t_2)}{\hat{G}(\max(t_1,t_2))}
\]
is the Lin and Ying estimator, and where $\hat{G}$ is the Kaplan-Meier estimator of the survival function of the univariate censoring variable.

\subsubsection{The influence function of the Lin and Ying estimator}\label{sec:IF}
To obtain the Fréchet derivative of $\phi$, we re-express $\phi(\tilde{h})$ as a quotient of two functionals
\[
\phi(\tilde{h})=\frac{\nu(\tilde{h})}{\chi(\tilde{h})},
\]
where $\nu (\tilde{h})=h$ returns the first coordinate of $\tilde{h}$. This representation of $\phi$, as the quotient of two functionals, will allow us to use the product rule for differentiable functionals. 

Note that the coordinate projection $\nu (\tilde{h})=h$ is differentiable of any order, and that the Kaplan-Meier functional $\chi(\tilde{h})$ is also differentiable of any order in a neighborhood of $F$ (\citealt[Section 4.1]{overgaard_asymptotic_2017}). Hence, the quotient $\phi(\tilde{h})=\frac{\nu(\tilde{h})}{\chi(\tilde{h})}$ is differentiable of any order in a neighborhood of $F$ (which is assumed to be bounded away from zero). Specifically, our estimating functional $\phi(\tilde{h})$ is two times (Fréchet) differentiable with a Lipschitz continuous second-order derivative (since by Theorem 1.4 of the supplement of \citealt{overgaard_asymptotic_2017}, existence of the (k+1)-order derivative guarantees Lipschitz continuity of the k-order derivative).

Consequently, the Fréchet derivative of $\phi$ at $\tilde{h}$ in the direction of $\tilde{g}$ can be obtained using the product rule (which also holds for the Fréchet derivative)
\[
\phi_{\tilde{h}}'(\tilde{g})=\frac{\nu_{\tilde{h}}'(\tilde{g})}{\chi(\tilde{h})}-\frac{\nu (\tilde{h})\chi_{\tilde{h}}'(\tilde{g})}{\chi^2(\tilde{h})}.
\]
Next, note that similarly to \cite{overgaard_asymptotic_2017}, $\chi_{\tilde{h}}'(\tilde{g})=-\chi(\tilde{h})\psi_{\tilde{h}}'(\tilde{g})$, and $$\psi_{\tilde{h}}'(\tilde{g})=\int_{0}^{(\cdot)}\frac{1}{h_*(s)}dg_1(s)-\int_{0}^{(\cdot)}\frac{g_*(s)}{[h_*(s)]^2}dh_1(s).$$
We propose $\nu_{\tilde{h}}'(\tilde{g})=g$ as the candidate for the Fréchet derivative of $\nu$ at $\tilde{h}$ in the direction of $\tilde{g}$. This is because 
\[
\frac{\nu(\tilde{h}+u\tilde{g})-\nu(\tilde{h})}{u}=\frac{h+ug-h}{u}=g\equiv \nu(\tilde{g}). 
\]
Consequently,
\[
\phi_{\tilde{h}}'(\tilde{g})=\frac{\nu(\tilde{g})}{\chi(\tilde{h})}-\frac{\nu (\tilde{h})\chi_{\tilde{h}}'(\tilde{g})}{\chi^2(\tilde{h})}.
\]
Specifically, note that the Fréchet derivative of $\phi$ at $F=(H,H_*,H_1)$ in the direction of $\tilde{g}$ is
\[
\phi_{F}'(\tilde{g})=\frac{\nu(\tilde{g})}{\chi(F)}-\frac{\nu(F)\chi_{F}'(\tilde{g})}{\chi^2(F)}=\frac{g}{\chi(F)}-\frac{H\cdot \chi_{F}'(\tilde{g})}{\chi^2(F)},
\]
where the last equality is due to the fact that $\nu_{\tilde{h}}'(\tilde{g})=g=\nu(\tilde{g})$.

Note also that $\chi(F)(\max(t_1,t_2))=P(C>\max(t_1,t_2))=G(\max(t_1,t_2))$,  $\chi_{F}'(\tilde{g})=-\chi(F)\psi_{F}'(\tilde{g})$, and that $$\psi_{F}'(\tilde{g})=\int_{0}^{(\cdot)}\frac{1}{H_*(s)}dg_1(s)-\int_{0}^{(\cdot)}\frac{g_*(s)}{[H_*(s)]^2}dH_1(s).$$ 

Consequently, for a bivariate time point $(t_1,t_2)$ with $t^*=\max(t_1,t_2)$ we have that
\[
\phi_{F}'(\tilde{g})(t_1,t_2)=\frac{g(t_1,t_2)}{G(t^*)}-\frac{H(t_1,t_2)\cdot \left(-G(t^*)\right)\left[\int_{0}^{t^*}\frac{1}{H_*(s)}dg_1(s)-\int_{0}^{t^*}\frac{g_*(s)}{[H_*(s)]^2}dH_1(s)\right]}{[G(t^*)]^2},
\]
where $\tilde{g}=(g,g_*,g_1)$, $F=(H,H_*,H_1)$, and where $H(t_1,t_2)=P(Y_1>t_1,Y_2>t_2)$, $H_*(t^*)=P(C^*\geq t^*)$ and $H_1(t^*)=P(C^*\leq t^*, \Delta=1)$.  Hence, after dividing by common factors, we obtain that
\[
\phi_{F}'(\tilde{g})(t_1,t_2)=\frac{g(t_1,t_2)}{G(t^*)}+\frac{H(t_1,t_2)\left[\int_{0}^{t^*}\frac{1}{H_*(s)}dg_1(s)-\int_{0}^{t^*}\frac{g_*(s)}{[H_*(s)]^2}dH_1(s)\right]}{G(t^*)}.
\]

Finally, recall that $\delta_X-F=(Y-H,Y_X-H_*,N_{X,1}-H_1)$, and consequently the influence function, defined by $\dot{\phi}(X)=\phi_F'(\delta_X-F)$ is 
\[
\dot{\phi}(X)=\phi_{F}'(\delta_X-F)=\frac{Y-H}{G}+\frac{H\left[\int_{0}^{(\cdot)}\frac{1}{H_*(s)}dM_{X,1}(s)\right]}{G},
\]
where as in \cite{overgaard_asymptotic_2017}, it can be shown that 
\[
\int_{0}^{t^*}\frac{1}{H_*(s)}d(N_{X,1}-H_1)(s)-\int_{0}^{t^*}\frac{Y_X(s)-H_*(s)}{[H_*(s)]^2}dH_1(s)=\int_{0}^{t^*}\frac{1}{H_*(s)}dM_{X,1}(s),
\]
where $M_{X,1}=N_{X,1}-\int_{0}^{(\cdot)}Y_X(s)d\Lambda_{C^*}(s)$ and where $\Lambda_{C^*}(t)=\int_0^t\frac{1}{H_*(s)}dH_1(s)$ is the cumulative hazard of the variable $C^*=\min(C,\max(T_1,T_2))$.

Interestingly, the influence function can be represented as
\[
\dot{\phi}(X)=\frac{Y}{G}-\frac{H}{G}\left[1-\int_{0}^{(\cdot)}\frac{1}{H_*(s)}dM_{X,1}(s)\right],
\]
where $\frac{H}{G}(t_1,t_2)=\frac{P(Y_1>t_1,Y_2>t_2)}{G(\max(t_1,t_2))}=S_{T_1,T_2}(t_1,t_2)$ is exactly our parameter of interest $\theta$, and  $\frac{Y}{G}(t_1,t_2)=\frac{1(Y_1>t_1,Y_2>t_2)}{G(\max(t_1,t_2))}$. Hence,  for $(t_1,t_2)$ with $t^*=\max(t_1,t_2)$ we have that 
\begin{equation}\label{eq:IF1}
	\dot{\phi}(X)(t_1,t_2)=\frac{1(Y_1>t_1,Y_2>t_2)}{G(\max(t_1,t_2))}-S_{T_1,T_2}(t_1,t_2)\left[1-\int_{0}^{t^*}\frac{1}{H_*(s)}dM_{X,1}(s)\right].
\end{equation}
This presentation of the influence function is equivalent to Equation (2.3) of \cite{huang_empirical_2018}, which in itself relies on Equation (A.2) of the original paper by \cite{lin_simple_1993}.
Specifically, \cite{lin_simple_1993} showed that $\sqrt{n}(\hat{S}(t_1,t_2)-S_{T_1,T_2}(t_1,t_2))$ is asymptotically equivalent to $\frac{1}{\sqrt{n}}\sum_{i=1}^{n}T_i$, where
\begin{equation}\label{eq:IF12}
	T_i=\frac{1(Y_{1i}>t_1,Y_{2i}>t_2)}{G(\max(t_1,t_2))}-S_{T_1,T_2}(t_1,t_2)\left[1-\int_{0}^{t^*}\frac{\Delta_i d1(C_i^*\leq s)-I(C_i^*\geq s)d\Lambda_{C}(s)}{G(s)P(\max(T_1,T_2))\geq s)}\right],
\end{equation}
are i.i.d. random variables, and where $\Lambda_{C}(s)$ is the cumulative hazard of the censoring variable $C$. 
Hence, in order to prove the equivalence between Equations~\eqref{eq:IF1} and \eqref{eq:IF12}, we need to prove that 
\[
\int_{0}^{t^*}\frac{1}{H_*(s)}dM_{X,1}(s)=\int_{0}^{t^*}\frac{\Delta d1(C^*\leq s)-I(C^*\geq s)d\Lambda_{C}(s)}{G(s)P(\max(T_1,T_2))\geq s)}.
\]
\subsubsection*{Proving the equivalence between Equations~\eqref{eq:IF1} and \eqref{eq:IF12}}
Indeed, $M_{X,1}=N_{X,1}-\int_{0}^{(\cdot)}Y_X(s)d\Lambda_{C^*}(s)$ and thus,
\begin{align*}
	\int_{0}^{t^*}\frac{1}{H_*(s)}dM_{X,1}(s)&=\int_{0}^{t^*}\frac{1}{H_*(s)}dN_{X,1}(s)-\int_{0}^{t^*}\frac{Y_X(s)}{H_*(s)}d\Lambda_{C^*}(s)\\
	&=\int_{0}^{t^*}\frac{1}{P(C^*\geq s)}d1(C^*\leq s, \Delta=1)-\int_{0}^{t^*}\frac{1(C^*\geq s)}{P(C^*\geq s)}d\Lambda_{C^*}(s)\\
	&\equiv A-B.
\end{align*}

Next, note that
\[
A\equiv\int_{0}^{t^*}\frac{1}{P(C^*\geq s)}d1(C^*\leq s, \Delta=1)=\int_{0}^{t^*}\frac{\Delta d1(C^*\leq  s)}{P(C^*\geq s)}.
\]
Note also that since $C$ is an independent censoring variable and that $C^*=\min(C,\max(T_1,T_2))$, we have that $$P(C^*\geq s)=P(\min(C,\max(T_1,T_2))\geq s)=P(C\geq s)P(\max(T_1,T_2)\geq s)=G(s^-)P(\max(T_1,T_2)\geq s).$$
Consequently,
\[
A=\int_{0}^{t^*}\frac{\Delta d1(C^* \leq s)}{G(s^-)P(\max(T_1,T_2)\geq s)},
\]
and 
\[
B\equiv \int_{0}^{t^*}\frac{1(C^*\geq s)}{P(C^*\geq s)}d\Lambda_{C^*}(s)=\int_{0}^{t^*}\frac{1(C^*\geq s)}{G(s^-)P(\max(T_1,T_2)\geq s)}d\Lambda_{C^*}(s).
\]
Finally, note that 
\[
\Lambda_{C^*}(t)=\int_{0}^{t}\frac{1}{H_*(s)}dH_1(s)=\int_{0}^{t}\frac{1}{P(C^*\geq s)}dP(C^*\leq s, \Delta=1),
\]
and that by the independent censoring assumption,
\[
dP(C^*\leq s, \Delta=1)=P(C^*\in ds, \Delta=1)=P(C\in ds, \max(T_1,T_2)\geq s)=P(C\in ds)P(\max(T_1,T_2)\geq s),
\]
and
\[
P(C^*\geq s)=P(C\geq s,\max(T_1,T_2)\geq s)=P(C\geq s)P(\max(T_1,T_2)\geq s).
\]
Consequently,
\begin{align*}
	\Lambda_{C^*}(t)&=\int_{0}^{t}\frac{1}{P(C^*\geq s)}dP(C^*\leq s, \Delta=1)=\int_{0}^{t}\frac{P(\max(T_1,T_2)>s)}{P(C\geq s)P(\max(T_1,T_2)\geq s)}dP(C\leq s)\\
	&=\int_{0}^{t}\frac{1}{P(C\geq s)}dP(C\leq s)\equiv \Lambda_{C}(t).
\end{align*}
That is, the hazard of the observed censoring time $C^*$ is equivalent to the hazard of the true censoring time $C$.

In summary, we showed that 
\begin{align*}
	\int_{0}^{t^*}\frac{1}{H_*(s)}dM_{X,1}(s)&=\int_{0}^{t^*}\frac{\Delta d1(C^* \leq s)}{G(s^-)P(\max(T_1,T_2)\geq s)}-\int_{0}^{t^*}\frac{1(C^*\geq s)}{G(s^-)P(\max(T_1,T_2)\geq s)}d\Lambda_{C}(s),
\end{align*}
which is exactly what we wanted to prove (providing that $C$ is a continuous random variable such that $G(s^-)=G(s)$).

\subsubsection{The conditional expectation of the Lin and Ying influence function}\label{sec:cond_expec}
Our last step is to show that when adding covariates $Z$, we have that $$E[\dot{\phi}(X)\mid Z]=S_{T_1,T_2}(t_1,t_2\mid Z)-S_{T_1,T_2}(t_1,t_2),$$ which corresponds to \citealt[Assumption 3.36, Therorem 3.4]{overgaard_asymptotic_2017}. 
We first recall the definition of $M_{X,1}$: 
\begin{align*}
	M_{X,1}(s)=N_{X,1}(s)-\int_{0}^{s}Y_X(u)d\Lambda_{C^*}(u)&=1(C^*\leq s, \Delta=1)-\int_0^s 1(C^* \geq u)d\Lambda_{C^*}(u)\\
	&=1(C^*\leq s, \Delta=1)-\int_0^s 1(C^* \geq u)d\Lambda_{C}(u),
\end{align*}
where the last equality is due to the equality between the hazards of the observed censoring time $C^*$ and the true censoring time $C$.
Consequently,
\begin{align*}
	E[dM_{X,1}(s)\mid Z]&=E[1(C^*\in ds, \Delta=1)\mid Z]-E[1(C^*\geq s)\mid Z]d\Lambda_{C}(s)\\
	&=P(C^*\in ds, \Delta=1 \mid Z)-P(C^*\geq s \mid Z)d\Lambda_{C}(s).
\end{align*}
Next, by the independent censoring assumption,
\begin{align*}
	P(C^*\in ds, \Delta=1 \mid Z)&=P(C\in ds, \max(T_1,T_2)\geq s\mid  Z)=P(C\in ds)P(\max(T_1,T_2)\geq s \mid Z),
\end{align*}
and 
\begin{align*}
	P(C^*\geq s \mid Z)=P(C\geq s, \max(T_1,T_2)\geq s \mid Z)=P(C\geq s)P(\max(T_1,T_2)\geq s \mid Z)
\end{align*}

Consequently,
\begin{align*}
	E[dM_{X,1}(s)\mid Z]=P(C\in ds)P(\max(T_1,T_2)\geq s \mid Z)-P(C\geq s)P(\max(T_1,T_2)\geq s \mid Z)d\Lambda_{C}(s).
\end{align*}

Recall also that 
\[
\Lambda_{C}(t)\equiv \int_{0}^{t}\frac{1}{P(C\geq s)}dP(C\leq s),
\]
and thus,
\begin{align*}
	E[dM_{X,1}(s)\mid Z]&=P(C\in ds)P(\max(T_1,T_2)\geq s \mid Z)-P(C\geq s)P(\max(T_1,T_2)\geq s \mid Z)d\Lambda_{C}(s)\\
	&=P(C\in ds)P(\max(T_1,T_2)\geq s \mid Z)-\frac{P(C\geq s)P(\max(T_1,T_2)\geq s \mid Z)}{P(C\geq s)}dP(C \leq s)\\
	&=P(C\in ds)P(\max(T_1,T_2)\geq s \mid Z)-P(\max(T_1,T_2)\geq s \mid Z)dP(C \leq s)=0.
\end{align*}
That is, conditional expectations of integrals with respect to $dM_{X,1}(s)$ will also be equal to zero. Specifically, $E\left[\int_{0}^{(\cdot)}\frac{1}{H_*(s)}dM_{X,1}(s)\mid Z\right]=0$.
Consequently,
\begin{align*}
	E[\dot{\phi}(X)\mid Z]&=E\left[\frac{Y}{G}-\frac{H}{G}\left[1-\int_{0}^{(\cdot)}\frac{1}{H_*(s)}dM_{X,1}(s)\right]\mid Z\right]\\
	&=E\left[\frac{Y}{G}-\frac{H}{G}\mid Z\right]=\frac{1}{G}E[Y-H\mid Z]\\
	&=\frac{1}{G}\left(E[Y\mid Z]-H\right),
\end{align*}
where the last equality is due to the fact that $H(\cdot,\cdot)=P(Y_1>\cdot, Y_2>\cdot)$ is not a random variable.

Finally, for $(t_1,t_2)$ with $t^*=\max(t_1,t_2)$ we have that 
\begin{align*}
	E[\dot{\phi}(X)(t_1,t_2)\mid Z]&=\frac{1}{G(t^*)}\left(E[Y(t_1,t_2)\mid Z]-H(t_1,t_2)\right)\\
	&=\frac{1}{G(t^*)}\left(E[1(Y_1>t_1,Y_2>t_2)\mid Z]-P(Y_1>t_1,Y_2>t_2)\right)\\
	&=\frac{P(Y_1>t_1,Y_2>t_2\mid Z)}{G(t^*)}-\frac{P(Y_1>t_1,Y_2>t_2)}{G(t^*)}\\
	&=S_{T_1,T_2}(t_1,t_2\mid Z)-S_{T_1,T_2}(t_1,t_2),
\end{align*}
where the last equality is due to the fact that $C\indep (T_1,T_2,Z)$.

\subsection{The Dabrowska estimator}\label{sec:Dab}
In Section~\ref{sec:Dab_pre} we first define the bivariate hazard vector, and then present the Dabrowska estimator. In Section~\ref{sec:Dab_functional} we present the estimating functional of the Dabrowska estimator, and discuss the convergence of $\norm{F_n-F}$. In Section~\ref{sec:IF_Dab} we discuss the smoothness of this functional and derive the associated first order influence function. Note that both the estimating functional of the Dabrowska estimator and its influence function have already been presented in \cite{gill_inefficient_1995}, but are also derived here for completeness. In Section~\ref{sec:cond_expec_Dab} we prove that $E\left[\dot{\phi}(X)\mid Z\right]=S_{T_1,T_2}(t_1,t_2\mid Z)-S_{T_1,T_2}(t_1,t_2)$. 
\subsubsection{Preliminaries}\label{sec:Dab_pre}
As before, we denote by $S_{T_1,T_2}(t_1,t_2)$ the bivariate survival function of the pair $(T_1,T_2)$, and by $S_{T_1}(t_1)$ and $S_{T_2}(t_2)$ the separate marginal survival functions of $T_1$ and $T_2$, respectively.
We write $P(T_1 \in dt_1, T_2\in dt_2)$ instead of $P(t_1\leq T_1 \leq t_1+dt_1, t_2\leq T_2 \leq t_2+dt_2)$. Denote by $$\Lambda_{11}(ds,dt)=P\left(T_1 \in ds, T_2\in dt \mid T_1\geq s, T_2\geq t\right)=\frac{P(T_1 \in ds, T_2\in dt)}{P(T_1\geq s, T_2\geq t)}=\frac{S_{T_1,T_2}(ds,st)}{S_{T_1,T_2}(s^-,t^-)}$$ the instantaneous risk of a double failure, and by $$\Lambda_{10}(ds,t)=P\left(T_1 \in ds, T_2>t \mid T_1\geq s, T_2> t\right)=\frac{P(T_1 \in ds, T_2>t)}{P(T_1\geq s, T_2>t)}=\frac{P(T_1 \in ds \mid T_2>t)}{P(T_1\geq s \mid T_2>t)}$$ and $$ \Lambda_{01}(s,dt)=P\left(T_1 >s, T_2\in dt \mid T_1> s, T_2\geq t\right)=\frac{P(T_1 >s, T_2\in dt)}{P(T_1> s, T_2\geq t)}=\frac{P(T_2 \in dt \mid T_1>s)}{P(T_2\geq t \mid T_1>s)}$$ the instantaneous risks of single failures. Note that $\Lambda_{10}(ds,t)$ and $\Lambda_{01}(s,dt)$ can also be interpreted as conditional univariate hazards, given the events $T_2>t$ and $T_1>s$, respectively. We denote by $\Lambda(s,t)=(\Lambda_{10}(s,t),\Lambda_{01}(s,t),\Lambda_{11}(s,t))$ the bivariate cumulative hazard vector, consisting of the three bivariate cumulative hazard functions.

Set $A(t_1,t_2)=\log(S_{T_1,T_2}(t_1,t_2))$. Then, as in \cite{dabrowska_kaplan-meier_1988}, for $(t_1,t_2)\in [0,\tau_1]\times [0,\tau_2]$ such that $S_{T_1,T_2}(\tau_1,\tau_2)>0$, we have that
\begin{equation}\label{eq1}
	\begin{aligned}
		S_{T_1,T_2}(t_1,t_2)&=\exp\{A(t_1,t_2)\}=\exp\left\{\int_{0}^{t_1}\int_{0}^{t_2}A(du,dv)+A(t_1,0)+A(0,t_2)\right\}\\
		&=\exp\{A(t_1,0)\}\exp\{A(0,t_2)\}\exp\left\{\int_{0}^{t_1}\int_{0}^{t_2}A(du,dv)\right\}\\
		&=S_{T_1}(t_1)S_{T_2}(t_2)\exp\left\{\int_{0}^{t_1}\int_{0}^{t_2}A(du,dv)\right\},
	\end{aligned}
\end{equation}
where $\exp\left\{\int_{0}^{t_1}\int_{0}^{t_2}A(du,dv)\right\}=\frac{S_{T_1,T_2}(t_1,t_2)}{S_{T_1}(t_1)S_{T_2}(t_2)}$ is the cross-ratio term.
Next, as in \cite{dabrowska_kaplan-meier_1988}, each of the three terms in this factorization of the bivariate survival function can be defined using product integration:
\begin{align*}
	S_{T_1,T_2}(t_1,t_2)&=\prod_{(0,t_1]}\{1-\Lambda_{10}(du,0)\}\prod_{(0,t_2]}\{1-\Lambda_{01}(0,dv)\}
	\prod_{\substack{
			(0,t_1] \\
			(0,t_2]}}\left(1+L(du,dv)\right),
\end{align*}
where $\prod$ denotes the product integral, \[
L(ds_1,ds_2)=\frac{\Lambda_{11}(ds_1,ds_2)-\Lambda_{10}(ds_1,s_2)\Lambda_{01}(s_1,ds_2)}{\left(1-\Lambda_{10}(\Delta s_1, s_2^-)\right)\left(1-\Lambda_{01}( s_1^-, \Delta s_2)\right)},
\]
and where $\Lambda_{10}(\Delta s_1, s_2^-)=\Lambda_{10}(s_1, s_2^-)-\Lambda_{10}(s_1^-, s_2^-)$ and $\Lambda_{01}( s_1^-, \Delta s_2)=\Lambda_{01}(s_1^-, s_2)-\Lambda_{01}(s_1^-, s_2^-)$.
That is, the bivariate survival function is presented in terms of the bivariate hazard vector. Consequently, the Dabrowska estimator is given by \[
\hat{S}_{T_1,T_2}(t_1,t_2)=\prod_{(0,t_1]}\{1-\hat{\Lambda}_{10}(du,0)\}\prod_{(0,t_2]}\{1-\hat{\Lambda}_{01}(0,dv)\}
\prod_{\substack{
		(0,t_1] \\
		(0,t_2]}}\left(1+\hat{L}(du,dv)\right),
\]
where we just plug-in the estimates of the bivariate hazard vector, as will be be discussed below (for more information see \citealt{dabrowska_kaplan-meier_1988}). Finally, note that $\prod_{(0,t_1]}\{1-\hat{\Lambda}_{10}(du,0)\}$ and $\prod_{(0,t_2]}\{1-\hat{\Lambda}_{01}(0,dv)\}$ are just the univariate Kaplan-Meier estimators of the marginal survival functions $S_{T_1}(t_1)$ and $S_{T_2}(t_2)$, respectively. Hence, for an element $\tilde{h} \in D$, the estimating functional of the Dabrowska estimator is some functional $\phi:D\mapsto\mathbb{R}$ which is given by the product of three estimating functionals
\[
\phi(\tilde{h})=\chi_1(\tilde{h})\chi_2(\tilde{h})\Gamma(\rho(\tilde{h})),
\]
where for $j=1,2$, $\chi_j(\tilde{h})$ denotes the estimating functional of the Kaplan-Meier estimator, and $\Gamma(\rho(\tilde{h}))$ denotes the estimating functional of the cross-ratio term $\prod_{\substack{
		(0,t_1] \\
		(0,t_2]}}\left(1+\hat{L}(du,dv)\right)$. In Section~\ref{sec:Dab_functional} below we present each of these three estimating functionals. 

\subsubsection{The estimating functional of the Dabrowska estimator}\label{sec:Dab_functional}
Recall that the observed data consists of $n$ i.i.d. copies $\left\{X_1,\ldots, X_n\right\}$.
Define the following functions 
\begin{enumerate}
	\item $Y:\mathbb{R}^2_+\mapsto \{0,1\}$ such that  $Y(t_1,t_2)=1(Y_1> t_1, Y_2> t_2)$
	\item $N_{10}:\mathbb{R}^2_+\mapsto \{0,1\}$ such that $N_{10}(t_1,t_2)=1(Y_1\leq t_1, Y_2> t_2, \Delta_1=1)$
	\item $N_{01}:\mathbb{R}^2_+\mapsto \{0,1\}$ such that $N_{01}(t_1,t_2)=1(Y_1> t_1, Y_2\leq t_2, \Delta_2=1)$
	\item $N_{11}:\mathbb{R}^2_+\mapsto \{0,1\}$ such that $N_{11}(t_1,t_2)=1(Y_1\leq t_1, Y_2\leq t_2, \Delta_1=1,\Delta_2=1)$
\end{enumerate}
Define $\delta_X=(Y,N_{10},N_{01},N_{11})$, and define the sample average of $\delta_X$ by $F_n=\frac{1}{n}\sum_{i=1}^n
\delta_{X_i}$. Note that $F_n$ is a vector where each coordinate is the mean of the corresponding indicator function. That is, $F_n=(\hat{H}^n,\hat{H}_{10}^n,\hat{H}_{01}^n,\hat{H}_{11}^n)$, where 
\begin{enumerate}
	\item $\hat{H}^n=\frac{1}{n}\sum _{i=1}^n 1(Y_{1i}> t_1, Y_{2i}> t_2)$
	\item $\hat{H}_{10}^n=\frac{1}{n}\sum _{i=1}^n 1(Y_{1i}\leq t_1, Y_{2i}> t_2, \Delta_{1i}=1)$
	\item $\hat{H}_{01}^n=\frac{1}{n}\sum _{i=1}^n 1(Y_{1i}> t_1, Y_{2i}\leq t_2, \Delta_{2i}=1)$
	\item $\hat{H}_{11}^n=\frac{1}{n}\sum _{i=1}^n 1(Y_{1i}\leq t_1, Y_{2i}\leq t_2, \Delta_{1i}=1,\Delta_{2i}=1)$
\end{enumerate}
and its limit is $F=\lim_{n\rightarrow \infty}F_n=(H,H_{10},H_{01},H_{11})$ where $H(t_1,t_2)=P(Y_1> t_1,Y_2> t_2)$,  $H_{10}(t_1,t_2)=P(Y_1\leq t_1, Y_2> t_2, \Delta_1=1)$,  $H_{01}(t_1,t_2)=P(Y_1> t_1, Y_2\leq t_2, \Delta_2=1)$, and  $H_{11}(t_1,t_2)=P(Y_1\leq t_1, Y_2\leq t_2, \Delta_1=1, \Delta_2=1)$. Let $F_n, F \in D$, where $D$ is some space of the form $D=\{\tilde{h}: \tilde{h}=(h,h_{10},h_{01},h_{11})\}$. Define the two univariate Nelson-Aalen estimating functionals $\psi_1:D\mapsto \mathbb{R}$ and $\psi_2:D\mapsto \mathbb{R}$ by $$\psi_1(\tilde{h})=\int_0^{(\cdot)} \frac{1(h(s^-,0)>0)}{h(s^-,0)}dh_{10}(s,0) \ \text{and} \ \psi_2(\tilde{h})=\int_0^{(\cdot)} \frac{1(h(0,t^-)>0)}{h(0,t^-)}dh_{01}(0,t).$$

Note that $$\psi_1(F)=\int_0^{(\cdot)} \frac{1(H(s^-,0)>0)}{H(s^-,0)}dH_{10}(s,0)=\int_0^{(\cdot)} \frac{1}{H(s^-,0)}dH_{10}(s,0)=\Lambda_1(\cdot)=\Lambda_{10}(\cdot,0)$$ 
and
$$\psi_2(F)=\int_0^{(\cdot)} \frac{1(H(0,t^-)>0)}{H(0,t^-)}dH_{01}(0,t)=\int_0^{(\cdot)} \frac{1}{H(0,t^-)}dH_{01}(0,t)=\Lambda_2(\cdot)=\Lambda_{01}(0,\cdot)$$
return the univariate cumulative hazard functions, and that  $$\psi_1(F_n)=\int_0^{(\cdot)} \frac{1(Y^1(s)>0)}{Y^1(s)}dN_{1}(s)=\hat{\Lambda}_1(\cdot)$$ is the Nelson-Aalen estimator of the cumulative hazard of $T_1$, and 
$$\psi_2(F_n)=\int_0^{(\cdot)} \frac{1(Y^2(t)>0)}{Y^2(t)}dN_{2}(t)=\hat{\Lambda}_2(\cdot)$$ is the Nelson-Aalen estimator of the cumulative hazard of $T_2$, where we used $$Y^1(s)=\#\{i: Y_{1i}\geq s\}=\sum _{i=1}^n 1(Y_{1i}\geq s)=\sum _{i=1}^n 1(Y_{1i}> s^-, Y_{2i}> 0),$$
$$Y^2(t)=\#\{i: Y_{2i}\geq t\}=\sum _{i=1}^n 1(Y_{2i}\geq t)=\sum _{i=1}^n 1(Y_{1i}> 0, Y_{2i}> t^-),$$
$$N_1(s)=\#\{i: Y_{1i}\leq s, \Delta_{1i}=1\}=\sum _{i=1}^n 1(Y_{1i}\leq s,\Delta_{1i}=1)=\sum_{i=1}^n 1(Y_{1i}\leq s, Y_{2i} > 0, \Delta_{1i}=1),$$
and
$$N_2(t)=\#\{i: Y_{2i}\leq t, \Delta_{2i}=1\}=\sum _{i=1}^n 1(Y_{2i}\leq t,\Delta_{2i}=1)=\sum_{i=1}^n 1(Y_{1i}> 0, Y_{2i} \leq t, \Delta_{2i}=1)$$
(note the relationship to $F_n$).

Next, when analyzing each failure time separately, then for $j=1,2$ we can define the Kaplan-Meier estimating functionals $\chi_j:D\mapsto\mathbb{R}$ by 
$$\chi_1(\tilde{h})=\prod_{0}^{(\cdot)}(1-\psi_1(ds;\tilde{h})). \ \text{and} \ \chi_2(\tilde{h})=\prod_{0}^{(\cdot)}(1-\psi_2(dt;\tilde{h})),$$
where $\psi_j(ds;\tilde{h})=\psi_j(\tilde{h})(ds)$.
We note that for $j=1,2$, we have that
\[
\chi_j(F)=\prod_{0}^{(\cdot)}(1-\psi_j(ds;F))=\prod_{0}^{(\cdot)}(1-d\Lambda_j(s))
\]
is the true survival function of the failure time $T_j$, and that 
\[
\chi_j(F_n)=\prod_{0}^{(\cdot)}(1-\psi_j(ds;F_n))=\prod_{0}^{(\cdot)}(1-d\hat{\Lambda}_j(s))
\]
is the Kaplan-Meier estimator of the survival function of $T_j$ (where $\hat{\Lambda}_j$ is the Nelson-Aalen estimator). That is, $\chi_1$ and $\chi_2$ are the two estimating functionals of the Kaplan-Meier estimator of $T_1$ and $T_2$, respectively.

Next, as in \cite{dabrowska_kaplan-meier_1988}, the estimator of the cumulative hazards vector $\Lambda(s,t)=(\Lambda_{10}(s,t), \Lambda_{01}(s,t), \Lambda_{11}(s,t))$ is given by 
\begin{equation*}
	\begin{aligned}
		\hat{\Lambda}_{11}(s,t)&=\int_{0}^{s}\int_{0}^{t} \frac{\hat{H}_{11}^n(du,dv)}{\hat{H}^n(u^-,v^-)}\\
		\hat{\Lambda}_{10}(s,t)&=\int_{0}^{s} \frac{\hat{H}_{10}^n(du,t)}{\hat{H}^n(u^-,t)}\\
		\hat{\Lambda}_{01}(s,t)&=\int_{0}^{t} \frac{\hat{H}_{01}^n(s,dv)}{\hat{H}^n(s,v^-)}.
	\end{aligned}
\end{equation*}

More generally, for an element $\tilde{h} \in D=\{\tilde{h}: \tilde{h}=(h,h_{10},h_{01},h_{11})\}$ we can define the functional $\rho(\tilde{h})(s,t)=\left(\int_{0}^{s}\int_{0}^{t} \frac{h_{11}(du,dv)}{h(u^-,v^-)}, \int_{0}^{s} \frac{h_{10}(du,t)}{h(u^-,t)}, \int_{0}^{t} \frac{h_{01}(s,dv)}{h(s,v^-)}\right)$ such that
\[
\rho(F_n)(s,t)=\left(\int_{0}^{s}\int_{0}^{t} \frac{\hat{H}_{11}^n(du,dv)}{\hat{H}^n(u^-,v^-)}, \int_{0}^{s} \frac{\hat{H}_{10}^n(du,t)}{\hat{H}^n(u^-,t)}, \int_{0}^{t} \frac{\hat{H}_{01}^n(s,dv)}{\hat{H}^n(s,v^-)}\right)=\left(\hat{\Lambda}_{11}(s,t), \hat{\Lambda}_{10}(s,t), \hat{\Lambda}_{01}(s,t)\right),
\]
and for  $F=\lim_{n\rightarrow \infty}F_n=(H,H_{10},H_{01},H_{11})$ where $H(t_1,t_2)=P(Y_1> t_1,Y_2> t_2)$,  $H_{10}(t_1,t_2)=P(Y_1\leq t_1, Y_2> t_2, \Delta_1=1)$,  $H_{01}(t_1,t_2)=P(Y_1> t_1, Y_2\leq t_2, \Delta_2=1)$, and  $H_{11}(t_1,t_2)=P(Y_1\leq t_1, Y_2\leq t_2, \Delta_1=1, \Delta_2=1)$, we have that
\[
\rho(F)(s,t)=\left(\int_{0}^{s}\int_{0}^{t} \frac{H_{11}(du,dv)}{H(u^-,v^-)}, \int_{0}^{s} \frac{H_{10}(du,t)}{H(u^-,t)}, \int_{0}^{t} \frac{H_{01}(s,dv)}{H(s,v^-)}\right)=\left(\Lambda_{11}(s,t), \Lambda_{10}(s,t), \Lambda_{01}(s,t)\right).
\]
That is, $\rho$ is the estimating functional for the Dabrowska estimator of the cumulative hazard vector.
For ease of notation, we will write $\rho(\tilde{h})=\left(\rho_1(\tilde{h}),\rho_2(\tilde{h}),\rho_2(\tilde{h})\right)$,
where $\rho_1(\tilde{h})$ is the estimating functional of the double failure rate $\Lambda_{11}$, and $\rho_2(\tilde{h})$ and $\rho_3(\tilde{h})$ are the estimating functionals of the two single-failure rates $\Lambda_{10}$ and $\Lambda_{01}$, respectively.

Finally, we can now define the estimating functional for the cross-ratio of $S$ over $(0,t_1]\times (0,t_2]$ by 
\begin{align*}
	\Gamma(\rho(\tilde{h}))(t_1,t_2)=\prod_{\substack{
			(0,t_1] \\
			(0,t_2]} }\left(1+\frac{\rho_1(\tilde{h})(ds_1,ds_2)-\rho_2(\tilde{h})(ds_1,s_2)\rho_3(\tilde{h})(s_1,ds_2)}{\left(1-\rho_2(\tilde{h})(\Delta s_1, s_2^-)\right)\left(1-\rho_3(\tilde{h})( s_1^-, \Delta s_2)\right)}\right).
\end{align*}

In summary, the estimating functional of the Dabrowska estimator is
\[
\phi(\tilde{h})=\chi_1(\tilde{h})\chi_2(\tilde{h})\Gamma(\rho(\tilde{h})),
\]
where $\chi_1(\tilde{h})\ \text{and} \ \chi_2(\tilde{h})$ are the univariate Kaplan-Meier estimating functionals, and where $\Gamma(\rho(\tilde{h}))$ is the estimating functional of the cross-ratio term $\prod_{\substack{
		(0,t_1] \\
		(0,t_2]}}\left(1+\hat{L}(du,dv)\right)$.

\subsubsection{The influence function of the Dabrowska estimator}\label{sec:IF_Dab}
Recall that the (first order) influence function is defined by $\dot{\phi}(X)=\phi_F'(\delta_X-F)$. Hence, in order to find the influence function, we need to first find the Fréchet derivative of $\phi$, where
\[
\phi(\tilde{h})=\chi_1(\tilde{h})\chi_2(\tilde{h})\Gamma(\rho(\tilde{h})).
\]

Note that the Kaplan-Meier functional $\chi_j(\tilde{h})$ is differentiable of any order in a neighborhood of $F$ (\citealt[Section 4.1]{overgaard_asymptotic_2017}) as it is a product integral. Similarly, the estimating functional of the cross-ratio term, which is also a product integral, is differentiable of any order. Hence, the product $\phi(\tilde{h})=\chi_1(\tilde{h})\chi_2(\tilde{h})\Gamma(\rho(\tilde{h}))$ is differentiable of any order in a neighborhood of $F$. Specifically, our estimating functional $\phi(\tilde{h})$ is two times (Fréchet) differentiable with a Lipschitz continuous second-order derivative (since by Proposition 1.4 of the supplement of \cite{overgaard_asymptotic_2017}, existence of the (k+1)-order derivative guarantees Lipschitz continuity of the k-order derivative).

By the product rule, 
\[
\phi_{\tilde{h}}'(\tilde{g})=\left[{\chi_1}_{\tilde{h}}'(\tilde{g})\chi_2(\tilde{h})+\chi_1(\tilde{h}){\chi_2}_{\tilde{h}}'(\tilde{g})\right](\Gamma\circ \rho)(\tilde{h})+\chi_1(\tilde{h})\chi_2(\tilde{h})(\Gamma\circ \rho)_{\tilde{h}}'(\tilde{g}),
\]
where by the chain rule, $(\Gamma\circ \rho)_{\tilde{h}}'(\tilde{g})=\Gamma_{\rho(\tilde{h})}'\left(\rho_{\tilde{h}}'(\tilde{g})\right)$. We start by finding $\rho_{\tilde{h}}'(\tilde{g})=\left({\rho_1}_{\tilde{h}}'(\tilde{g}),{\rho_2}_{\tilde{h}}'(\tilde{g}),{\rho_3}_{\tilde{h}}'(\tilde{g})\right)$, where for $j=1,2,3$ and for $\tilde{h},\tilde{g} \in D$,

\[
{\rho_j}_{\tilde{h}}'(\tilde{g})=\lim_{\epsilon\rightarrow 0}\frac{\rho_j\left(\tilde{h}+\epsilon\tilde{g}\right)-\rho_j(\tilde{h})}{\epsilon}.
\]
First, note that
\begin{align*}
	&\rho_2\left(\tilde{h}+\epsilon\tilde{g}\right)-\rho_2(\tilde{h})=\int_{0}^{s}\frac{1}{h(u^-,t)+\epsilon g(u^-,t)}d(h_{10}+\epsilon g_{10})(u,t)-\int_{0}^{s}\frac{1}{h(u^-,t)}dh_{10}(u,t)\\
	=&\int_{0}^{s}\frac{1}{h(u^-,t)+\epsilon g(u^-,t)}h_{10}(du,t)+\int_{0}^{s}\frac{\epsilon}{h(u^-,t)+\epsilon g(u^-,t)}g_{10}(du,t)-\int_{0}^{s}\frac{1}{h(u^-,t)}h_{10}(du,t)\\
	=&\int_{0}^{s}\left[\frac{1}{h(u^-,t)+\epsilon g(u^-,t)}-\frac{1}{h(u^-,t)}\right]h_{10}(du,t)+\int_{0}^{s}\frac{\epsilon}{h(u^-,t)+\epsilon g(u^-,t)}g_{10}(du,t)\\
	=&-\epsilon\int_{0}^{s}\left[\frac{g(u^-,t)}{h(u^-,t)\left(h(u^-,t)+\epsilon g(u^-,t)\right)}\right]h_{10}(du,t)+\epsilon\int_{0}^{s}\frac{1}{h(u^-,t)+\epsilon g(u^-,t)}g_{10}(du,t)
\end{align*}
Hence, if we divide by $\epsilon$ and take the limit as $\epsilon$ approaches zero, we obtain that
\[
{\rho_2}_{\tilde{h}}'(\tilde{g})=\lim_{\epsilon\rightarrow 0}\frac{\rho_2\left(\tilde{h}+\epsilon\tilde{g}\right)-\rho_2(\tilde{h})}{\epsilon}=
\int_{0}^{s}\frac{1}{h(u^-,t)}g_{10}(du,t)-
\int_{0}^{s}\frac{g(u^-,t)}{h^2(u^-,t)}h_{10}(du,t).
\]

Note also that the influence function associated with the Dabrowska estimator of $\Lambda_{10}$ is
\begin{align*}
	\dot{\rho}_2(X)\equiv {\rho_2}_{F}'(\delta_X-F)&=\int_{0}^{s}\frac{1}{H(u^-,t)}d(N_{10}-H_{10})(u,t)-
	\int_{0}^{s}\frac{Y(u^-,t)-H(u^-,t)}{H^2(u^-,t)}H_{10}(du,t)\\
	&=\int_{0}^{s}\frac{1}{H(u^-,t)}N_{10}(du,t)-\int_{0}^{s}\frac{1}{H(u^-,t)}H_{10}(du,t)\\
	&-
	\int_{0}^{s}\frac{Y(u^-,t)}{H^2(u^-,t)}H_{10}(du,t)+\int_{0}^{s}\frac{H(u^-,t)}{H^2(u^-,t)}H_{10}(du,t)\\
	&=\int_{0}^{s}\frac{1}{H(u^-,t)}N_{10}(du,t)-\int_{0}^{s}\frac{Y(u^-,t)}{H^2(u^-,t)}H_{10}(du,t),
\end{align*}
where $X=(Y_1,Y_2,\Delta_1,\Delta_2)$ denotes the observed data without the covariates, and where $\delta_X=(Y,N_{10},N_{01},N_{11})$ and $F=(H,H_{10},H_{01},H_{11})$. Recall also that $\int_{0}^{s}\frac{1}{H^(u^-,t)}H_{10}(du,t)=\Lambda_{10}(s,t)$ and hence $$\int_{0}^{s}\frac{Y(u^-,t)}{H^2(u^-,t)}H_{10}(du,t)=\int_{0}^{s}\frac{Y(u^-,t)}{H(u^-,t)}\Lambda_{10}(du,t).$$
Consequently,
\[
\dot{\rho}_2(X)=\int_{0}^{s}\frac{1}{H(u^-,t)}N_{10}(du,t)-\int_{0}^{s}\frac{Y(u^-,t)}{H(u^-,t)}\Lambda_{10}(du,t)=\int_{0}^{s}\frac{1}{H(u^-,t)}M_{10}(du,t),
\]
where $M_{10}(s,t)=N_{10}(s,t)-\int_{0}^{s}Y(u^-,t)\Lambda_{10}(du,t)$.

A similar analysis for $\rho_3$ gives
\[
{\rho_3}_{\tilde{h}}'(\tilde{g})=\lim_{\epsilon\rightarrow 0}\frac{\rho_3\left(\tilde{h}+\epsilon\tilde{g}\right)-\rho_3(\tilde{h})}{\epsilon}=
\int_{0}^{t}\frac{1}{h(s,v^-)}g_{01}(s,dv)-
\int_{0}^{t}\frac{g(s,v^-)}{h^2(s,v^-)}h_{01}(s,dv),
\]
and the corresponding influence function for $\Lambda_{01}$ is
\[
\dot{\rho}_3(X)=\int_{0}^{t}\frac{1}{H(s,v^-)}M_{01}(s,dv),
\]
where $M_{01}(s,t)=N_{01}(s,t)-\int_{0}^{t}Y(s,v^-)\Lambda_{01}(s,dv)$.

As for $\rho_1$,
\[
{\rho_1}_{\tilde{h}}'(\tilde{g})=\lim_{\epsilon\rightarrow 0}\frac{\rho_1\left(\tilde{h}+\epsilon\tilde{g}\right)-\rho_1(\tilde{h})}{\epsilon}=
\int_0^s\int_{0}^{t}\frac{1}{h(u^-,v^-)}g_{11}(du,dv)-
\int_0^s\int_{0}^{t}\frac{g(u^-,v^-)}{h^2(u^-,v^-)}h_{11}(du,dv),
\]
and the corresponding influence function for the Dabrowska estimator of $\Lambda_{11}$ is
\[
\dot{\rho}_1(X)=\int_0^s\int_{0}^{t}\frac{1}{H(u^-,v^-)}M_{11}(du,dv),
\]
where  $M_{11}(s,t)=N_{11}(s,t)-\int_0^s\int_{0}^{t}Y(u^-,v^-)\Lambda_{11}(du,dv)$.

Next, we will now find the derivative of the functional composition $(\Gamma\circ \rho)_{\tilde{h}}'(\tilde{g})=\Gamma_{\rho(\tilde{h})}'\left(\rho_{\tilde{h}}'(\tilde{g})\right)$. Note that $\rho_{\tilde{h}}'(\tilde{g})=\left({\rho_1}_{\tilde{h}}'(\tilde{g}),{\rho_2}_{\tilde{h}}'(\tilde{g}),{\rho_3}_{\tilde{h}}'(\tilde{g})\right)$ is a vector where each coordinate is a bivariate function. We will denote the space of such vectors by $R^3$. Note also that $\rho(\tilde{h})=(\rho_1(\tilde{h}),\rho_2(\tilde{h}), \rho_3(\tilde{h}))$ is also a member of $R^3$. Let $r,c\in R^3$ be such vectors, that is $r=(r_1(\cdot,\cdot),r_2(\cdot,\cdot),r_3(\cdot,\cdot))$ and $c=(c_1(\cdot,\cdot),c_2(\cdot,\cdot),c_3(\cdot,\cdot))$. Define the functional $\nu:R^3\rightarrow\Theta$ by
\[
\nu(r)(t_1,t_2)=\int_0^{t_1}\int_0^{t_2}\frac{r_1(ds_1,ds_2)-r_2(ds_1,s_2)r_3(s_1,ds_2)}{[1-r_2(\Delta s_1,s_2)][1-r_3(s_1,\Delta s_2)]},
\]
and note that \[
\Gamma(r)=\prod_{\substack{
		(0,t_1] \\
		(0,t_2]}}\left(1+\nu(r)(ds_1,ds_2)\right)=\exp\left\{\int_{0}^{t_1}\int_{0}^{t_2}\nu(r)(ds_1,ds_2)\right\}=\exp\left\{\nu(r)(t_1,t_2)\right\}
\]
is the bivariate product integral. Consequently, its Fréchet derivative is
\[
\Gamma_r'(c)=\Gamma(r)\nu_r'(c).
\]
Hence, we need to find the Fréchett derivative of $\nu$. As in \cite{gill_inefficient_1995}, we will assume for simplicity that the bivariate times are continuous, in which case we have that
\[
\nu(r)(t_1,t_2)=\int_0^{t_1}\int_0^{t_2}r_1(ds_1,ds_2)-r_2(ds_1,s_2)r_3(s_1,ds_2),
\]
and 
\[
\nu(r)(ds_1,ds_2)=r_1(ds_1,ds_2)-r_2(ds_1,s_2)r_3(s_1,ds_2).
\]
Hence,
\begin{align*}
	\nu(r+\epsilon c)-\nu(r)=\epsilon\left[c_1(ds_1,ds_2)-c_2(ds_1,s_2)r_3(s_1,ds_2)-r_2(ds_1,s_2)c_3(s_1,ds_2)-\epsilon c_2(ds_1,s_2)c_3(s_1,ds_2)\right],
\end{align*}
and after dividing by $\epsilon$ and taking the limit as $\epsilon\rightarrow 0$, we obtain that
\[
\nu_r'(c)=\lim_{\epsilon\rightarrow 0}\frac{\nu(r+\epsilon c)-\nu(r)}{\epsilon}=c_1(ds_1,ds_2)-c_2(ds_1,s_2)r_3(s_1,ds_2)-r_2(ds_1,s_2)c_3(s_1,ds_2).
\]

Specifically, for $r=\rho(\tilde{h})$ and $c=\rho_{\tilde{h}}'(\tilde{g})$, we obtain that
\begin{align*}
	\nu_{\rho(\tilde{h})}'(\rho_{\tilde{h}}'(\tilde{g}))&=\lim_{\epsilon\rightarrow 0}\frac{\nu(\rho(\tilde{h})+\epsilon \rho_{\tilde{h}}'(\tilde{g}))-\nu(\rho(\tilde{h}))}{\epsilon}\\
	&={\rho_1}_{\tilde{h}}'(\tilde{g})(ds_1,ds_2)-{\rho_2}_{\tilde{h}}'(\tilde{g})(ds_1,s_2)\rho_3(\tilde{h})(s_1,ds_2)-\rho_2(\tilde{h})(ds_1,s_2){\rho_3}_{\tilde{h}}'(\tilde{g})(s_1,ds_2)\\
	&=\frac{1}{h(s_1^-,s_2^-)}g_{11}(ds_1,ds_2)-
	\frac{g(s_1^-,s_2^-)}{h^2(s_1^-,s_2^-)}h_{11}(ds_1,ds_2)\\
	&-\left(\frac{1}{h(s_1^-,s_2)}g_{10}(ds_1,s_2)-
	\frac{g(s_1^-,s_2)}{h^2(s_1^-,s_2)}h_{10}(ds_1,s_2)\right)\left(\frac{h_{01}(s_1,ds_2)}{h(s_1,s_2^-)}\right)\\
	&-\left(\frac{h_{10}(ds_1,s_2)}{h(s_1^-,s_2)}\right)\left(\frac{1}{h(s_1,s_2^-)}g_{01}(s_1,ds_2)-
	\frac{g(s_1,s_2^-)}{h^2(s_1,s_2^-)}h_{10}(s_1,ds_2)\right).
\end{align*}

Finally, for $\tilde{h}=F=(H,H_{10},H_{01},H_{11})$, $\tilde{g}=\delta_X-F=(Y-H,N_{10}-H_{10},N_{01}-H_{01},N_{11}-H_{11})$,  $M_{10}(s,t)=N_{10}(s,t)-\int_{0}^{s}Y(u^-,t)\Lambda_{10}(du,t)$, $M_{01}(s,t)=N_{01}(s,t)-\int_{0}^{t}Y(s,v^-)\Lambda_{01}(u,dv)$,  and $M_{11}(s,t)=N_{11}(s,t)-\int_0^s\int_{0}^{t}Y(u^-,v^-)\Lambda_{11}(du,dv)$, we have that

\begin{align*}
	\nu_{\rho(F)}'(\rho_{F}'(\delta_X-F))=\nu_{\vec{\Lambda}}'(IF_1,IF_2,IF_3),
\end{align*}
where $\rho(F)=\vec{\Lambda}=(\Lambda_{11},\Lambda_{10},\Lambda_{01})$ is the true bivariate hazard vector, and where 
\begin{align*}
	IF_1(s,t)&={\rho_1}_{F}'(\delta_X-F)(s,t)=\dot{\rho}_1(X)(s,t)=\int_0^{s}\int_0^{t}\frac{1}{H(u^-,v^-)}M_{11}(du,dv),\\
	IF_2(s,t)&={\rho_2}_{F}'(\delta_X-F)(s,t)=\dot{\rho}_2(X)(s,t)=\int_0^{s}\frac{1}{H(u^-,t)}M_{10}(du,t),\\
	IF_3(s,t)&={\rho_3}_{F}'(\delta_X-F)(s,t)=\dot{\rho}_3(X)(s,t)=\int_0^{t}\frac{1}{H(s,v^-)}M_{01}(s,dv),
\end{align*}
are the influence functions (IF) of the estimators of the three bivariate hazards $\hat{\Lambda}_{11}$, $\hat{\Lambda}_{10}$,  and $\hat{\Lambda}_{01}$.
That is,
\begin{align*}
	&\nu_{\rho(F)}'(\rho_{F}'(\delta_X-F))(s,t)\\
	=&\int_0^{s}\int_0^{t}\frac{1}{H(s_1^-,s_2^-)}M_{11}(ds_1,ds_2)
	-\frac{1}{H(s_1^-,s_2)}M_{10}(ds_1,s_2)\Lambda_{01}(s_1,ds_2)-\frac{1}{H(s_1,s_2^-)}M_{01}(s_1,ds_2)\Lambda_{10}(ds_1,s_2).
\end{align*}
Summing it up, we obtain that,
\begin{align*}
	&\Gamma_{\rho(F)}'(\rho_{F}'(\delta_X-F))=\Gamma(\rho(F))\nu_{\rho(F)}'(\rho_{F}'(\delta_X-F))=\text{C-R}\cdot \nu_{\rho(F)}'(\rho_{F}'(\delta_X-F))\\
	=&\text{C-R} \cdot \left(\int_{0}^{(\cdot)}\int_{0}^{(\cdot)}\frac{M_{11}(ds_1,ds_2)}{H(s_1^-,s_2^-)}
	-\frac{1}{H(s_1^-,s_2)}M_{10}(ds_1,s_2)\Lambda_{01}(s_1,ds_2)-\frac{1}{H(s_1,s_2^-)}M_{01}(s_1,ds_2)\Lambda_{10}(ds_1,s_2)\right),
\end{align*}
where $\text{C-R}$ is the true cross ratio term
\[
\text{C-R}=\Gamma(\rho(F))=\frac{S(\cdot,\cdot)}{S_{T_1}(\cdot)S_{T_2}(\cdot)},
\]
and where we write from now on $S(t_1,t_2)$ instead of $S_{T_1,T_2}(t_1,t_2)$.

Next, note that as in \cite{overgaard_asymptotic_2017}, the influence functions of the two Kaplan-Meier estimators are given by
\begin{align*}
	IF(KM_1)&={\chi_1}_{F}'(\delta_X-F)=-\chi_1(F){\psi_1}_F'(\delta_X-F)=-S_{T_1}(\cdot)IF(\Lambda_{1}(\cdot)),\\	IF(KM_2)&={\chi_2}_{F}'(\delta_X-F)=-\chi_2(F){\psi_2}_F'(\delta_X-F)=-S_{T_2}(\cdot)IF(\Lambda_{2}(\cdot)),
\end{align*}
where $IF(\Lambda_{1}(\cdot))=IF(\Lambda_{10}(\cdot,0))=\dot{\rho}_2(X)(\cdot,0)=\int_{0}^{(\cdot)}\frac{1}{H(s_1^-,0)}M_{10}(ds_1,0)$, and $IF(\Lambda_{2}(\cdot))=IF(\Lambda_{01}(0,\cdot))=\dot{\rho}_3(X)(0,\cdot)=\int_{0}^{(\cdot)}\frac{1}{H(0,s_2^-)}M_{01}(0,ds_2)$.

Hence, we can write
\begin{align*}
	\phi_{\tilde{h}}'(\tilde{g})&=\left[{\chi_1}_{\tilde{h}}'(\tilde{g})\chi_2(\tilde{h})+\chi_1(\tilde{h}){\chi_2}_{\tilde{h}}'(\tilde{g})\right](\Gamma\circ \rho)(\tilde{h})+\chi_1(\tilde{h})\chi_2(\tilde{h})(\Gamma\circ \rho)_{\tilde{h}}'(\tilde{g})\\
	&=\left[-\chi_1(\tilde{h}){\psi_1}_{\tilde{h}}'(\tilde{g})\chi_2(\tilde{h})-\chi_1(\tilde{h})\chi_2(\tilde{h}){\psi_2}_{\tilde{h}}'(\tilde{g})\right](\Gamma\circ \rho)(\tilde{h})+\chi_1(\tilde{h})\chi_2(\tilde{h})(\Gamma\circ \rho)_{\tilde{h}}'(\tilde{g})\\
	&=\left[-\chi_1(\tilde{h}){\psi_1}_{\tilde{h}}'(\tilde{g})\chi_2(\tilde{h})-\chi_1(\tilde{h})\chi_2(\tilde{h}){\psi_2}_{\tilde{h}}'(\tilde{g})\right](\Gamma\circ \rho)(\tilde{h})+\chi_1(\tilde{h})\chi_2(\tilde{h})\Gamma_{\rho(\tilde{h})}'\left(\rho_{\tilde{h}}'(\tilde{g})\right),
\end{align*}
where in the last equality we used the chain rule $(\Gamma\circ \rho)_{\tilde{h}}'(\tilde{g})=\Gamma_{\rho(\tilde{h})}'\left(\rho_{\tilde{h}}'(\tilde{g})\right)$.

Thus,
\begin{align*}
	\dot{\phi}(X)&=\phi_{F}'(\delta_X-F)\\
	&=\left[-\chi_1(F){\psi_1}_{F}'(\delta_X-F)\chi_2(F)-\chi_1(F)\chi_2(F){\psi_2}_{F}'(\delta_X-F)\right](\Gamma\circ \rho)(F)+\chi_1(F)\chi_2(F)\Gamma_{\rho(F)}'\left(\rho_{F}'(\delta_X-F)\right)\\
	&=\left[-S_{T_1}(\cdot) IF(\Lambda_{1}(\cdot))S_{T_2}(\cdot)-S_{T_1}(\cdot)S_{T_2}(\cdot)IF(\Lambda_{2}(\cdot))\right]\text{C-R}+S_{T_1}(\cdot)S_{T_2}(\cdot)\text{C-R}\cdot \nu_{\rho(F)}'(\rho_{F}'(\delta_X-F))\\
	&=-S_{T_1}(\cdot)S_{T_2}(\cdot)\text{C-R}\left[IF(\Lambda_{1}(\cdot))+IF(\Lambda_{1}(\cdot))-\nu_{\rho(F)}'(\rho_{F}'(\delta_X-F))\right],
\end{align*}
where $\nu_{\rho(F)}'(\rho_{F}'(\delta_X-F))=\int_{0}^{(\cdot)}\int_{0}^{(\cdot)}\frac{M_{11}(ds_1,ds_2)}{H(s_1^-,s_2^-)}
-\frac{M_{10}(ds_1,s_2)}{H(s_1^-,s_2)}\Lambda_{01}(s_1,ds_2)-\frac{M_{01}(s_1,ds_2)}{H(s_1,s_2^-)}\Lambda_{10}(ds_1,s_2)$.
Note also that $S(s,t)=S_{T_1}(s)S_{T_2}(t)\text{C-R}(s,t)$. Consequently,

\begin{align*}
	&	\dot{\phi}(X)(s,t)=\phi_{F}'(\delta_X-F)(s,t)=-S(s,t)\left[IF(\Lambda_{1}(s))+IF(\Lambda_{2}(t))-\nu_{\rho(F)}'(\rho_{F}'(\delta_X-F))(s,t)\right]\\
	=&-S(s,t)\left[\int_{0}^{s}\frac{M_{1}(du)}{H_1(u)}+\int_{0}^{t}\frac{M_{2}(dv)}{H_2(v)}-\int_{0}^s\int_{0}^t\frac{M_{11}(du,dv)-M_{10}(du,v)\Lambda_{01}(u,dv)-M_{01}(u,dv)\Lambda_{10}(du,v)}{H(u^-,v^-)}\right],
\end{align*}
which is equivalent to the influence function given by \cite{gill_inefficient_1995}.
\subsubsection{The conditional expectation of the Dabrowska influence function}\label{sec:cond_expec_Dab}

Our goal here is to show that $E[\dot{\phi}(X)\mid Z]=S(t_1,t_2\mid Z)-S(t_1,t_2)$.
The influence function of the Dabrowska estimator is 
\begin{equation}
	\begin{aligned}
		&\dot{\phi}(X)(s,t)\\
		=&-S(s,t)\left[\int_{0}^{s}\frac{M_{1}(du)}{H_1(u)}+\int_{0}^{t}\frac{M_{2}(dv)}{H_2(v)}-\int_{0}^s\int_{0}^t\frac{M_{11}(du,dv)-M_{10}(du,v)\Lambda_{01}(u,dv)-M_{01}(u,dv)\Lambda_{10}(du,v)}{H(u^-,v^-)}\right],
	\end{aligned}
\end{equation}
where $H_1(u)=P(Y_1\geq u)=P(Y_1> u^-)=P(Y_1> u^-,Y_2> 0)=H(u^-,0)$, $H_2(v)=H(0,v^-)$, 
$M_1(s)=N_{1}(s)-\int_{0}^{s}1(Y_1\geq u)\Lambda_{1}(du)=N_{10}(s,0)-\int_{0}^{s}Y(u^-,0)\Lambda_{10}(du,0)=M_{10}(s,0)$, and similarly $M_2(t)=M_{01}(0,t)$. 

Lemma~\ref{lem:1} and Proposition~\ref{prop:1} below are due to \cite{jacobsen_note_2016}. For completeness, we also present here their proofs. Lemma~\ref{lem:2d} and Lemma~\ref{prop:2} are the bivariate extensions of Lemma~\ref{lem:1}, and Proposition~\ref{prop:3} is the bivariate extension of Proposition~\ref{prop:1}.

\begin{lem}[\cite{jacobsen_note_2016}]\label{lem:1}
	For $j=1,2$, denote by $\delta_j^Z(\cdot)=\lambda_j(\cdot \mid Z)-\lambda_j(\cdot)$, the difference between the conditional and marginal univariate hazards, by $y_j(u)=1(Y_j\geq u)$ an indicator function, by $H_j^Z(\cdot)=P(Y_j\geq \cdot\mid Z)$ and $S_j^Z(\cdot)=P(T_j\geq \cdot\mid Z)$ conditional survival probabilities, and by $M_j^Z(du)=1(Y_j \in du, \Delta_j=1)-I(Y_1\geq u)\Lambda_j(du\mid Z)$ the conditional martingale. Then, 
	\begin{enumerate}
		\item $M_j(du)=M_j^Z(du)+\delta_j^Z(u)y_j(u)du$
		\item Denote $q_j^Z(\cdot)=\frac{H_j^Z(\cdot)}{H_j(\cdot)}\delta_j^Z(\cdot)$. Then,
		\[
		\frac{\partial }{\partial t}\left(\frac{H_j^Z(t)}{H_j(t)}\right)=\frac{\partial }{\partial t}\left(\frac{S_j^Z(t)}{S_j(t)}\right)=-q_j^Z(t).
		\]
	\end{enumerate}
\end{lem}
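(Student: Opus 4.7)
The plan is to dispatch the two claims separately; both reduce to elementary manipulations once the independent censoring assumption $(C_1,C_2) \indep (T_1,T_2,Z)$ is invoked.

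For part~1, I would write out the two martingale increments from their definitions: $M_j(du) = 1(Y_j \in du,\Delta_j=1) - y_j(u)\Lambda_j(du)$ and $M_j^Z(du) = 1(Y_j \in du,\Delta_j=1) - y_j(u)\Lambda_j(du\mid Z)$. Subtracting the two yields $M_j(du) - M_j^Z(du) = y_j(u)\bigl(\Lambda_j(du\mid Z) - \Lambda_j(du)\bigr) = y_j(u)\delta_j^Z(u)\,du$, since the hazards are assumed to admit the densities $\lambda_j$ and $\lambda_j(\cdot\mid Z)$ used to define $\delta_j^Z$. Rearranging gives the stated identity $M_j(du) = M_j^Z(du) + \delta_j^Z(u) y_j(u)\,du$.

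For part~2, the key first step is to replace the observed-time survival ratio with the true-time survival ratio. Under independent censoring, $H_j(t) = P(Y_j \geq t) = S_j(t) G_j(t)$, where $G_j$ is the survival function of $C_j$; because $(C_1,C_2)$ is also independent of $Z$, the same factorisation holds conditional on $Z$, giving $H_j^Z(t) = S_j^Z(t) G_j(t)$. The common factor $G_j(t)$ cancels in the ratio, establishing the first equality $H_j^Z(t)/H_j(t) = S_j^Z(t)/S_j(t)$. For the derivative I would apply the quotient rule and then substitute $\tfrac{\partial}{\partial t}S_j(t) = -S_j(t)\lambda_j(t)$ and $\tfrac{\partial}{\partial t}S_j^Z(t) = -S_j^Z(t)\lambda_j(t\mid Z)$; the numerator collapses to $-S_j(t) S_j^Z(t)\bigl(\lambda_j(t\mid Z) - \lambda_j(t)\bigr)$, and after dividing by $S_j(t)^2$ the expression equals $-\bigl(S_j^Z(t)/S_j(t)\bigr)\delta_j^Z(t) = -q_j^Z(t)$, as desired.

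The main substantive input, and the only place the argument uses anything beyond bookkeeping, is the factorisation $H_j^Z = S_j^Z G_j$; this is precisely where the assumption of independent censoring of both $(T_1,T_2)$ and $Z$ enters. Once that factorisation is in hand both claims are calculus, so I do not anticipate a real obstacle beyond stating the independence assumption carefully.
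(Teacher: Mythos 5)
Your proposal is correct and follows essentially the same route as the paper: part~1 is the same bookkeeping identity between the two martingale increments (you subtract where the paper adds, which is immaterial), and part~2 uses the same cancellation of the censoring survival function under the independence assumption followed by the quotient rule, with the only cosmetic difference that you substitute $\partial_t S_j = -S_j\lambda_j$ directly while the paper writes the derivative via densities $f_j$ before converting to hazards. No gaps.
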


\begin{proof}[Proof of Lemma~\ref{lem:1}]
	\begin{enumerate}
		\item Recall that $M_j^Z(du)=1(Y_j \in du, \Delta_j=1)-I(Y_1\geq u)\Lambda_j(du\mid Z)$.  Note also that
		\[
		\delta_j^Z(u)y_j(u)du=1(Y_j\geq u)\left[\lambda_j(u \mid Z)-\lambda_j(u)\right]du=1(Y_j\geq u)\lambda_j(u \mid Z)du-\lambda_j(u)1(Y_j\geq u)du.
		\]
		Hence,
		\begin{align*}
			M_j^Z(du)+\delta_j^Z(u)y_j(u)du&=1(Y_j \in du, \Delta_j=1)-I(Y_j\geq u)\Lambda_j(du\mid Z)\\
			&+1(Y_j\geq u)\lambda_j(u \mid Z)du-\lambda_j(u)1(Y_j\geq u)du\\
			&=1(Y_j \in du, \Delta_j=1)-\lambda_j(u)1(Y_j\geq u)du\equiv M_j(du),
		\end{align*}
		where $\Lambda_j(du)=\lambda_j(u)du$.
		\item First note that
		\[
		\frac{H_j^Z(t)}{H_j(t)}=\frac{P(Y_j\geq t \mid Z )}{P(Y_j\geq t)}=\frac{P(\min(T_j,C_j)\geq t \mid Z)}{P(\min(T_j,C_j)\geq t )}=\frac{P(T_j\geq t \mid Z)P(C_j\geq t)}{P(T_j\geq t )P(C_j\geq t)}=\frac{S_j^Z(t)}{S_j(t)}
		\]
		where the equality before last is due to the independent censoring assumption. Consequently, the fractions $\frac{H_j^Z(t)}{H_j(t)}$ and $\frac{S_j^Z(t)}{S_j(t)}$ have the same derivative. The derivative is
		\begin{align*}
			\frac{\partial }{\partial t}\left(\frac{S_j^Z(t)}{S_j(t)}\right)&=\frac{-f_j(t\mid Z)}{S_j(t)}-\frac{S_j(t\mid Z)}{(S_j(t))^2}(-f_j(t))=\frac{1}{S_j(t)}\left[\frac{f_j(t)}{S_j(t)}S_j(t\mid Z)-f_j(t\mid Z)\right]\\
			&=\frac{1}{S_j(t)}\left[\lambda_j(t)S_j(t\mid Z)-f_j(t\mid Z)\right]=\frac{S_j(t\mid Z)}{S_j(t)}\left[\lambda_j(t)-\frac{f_j(t\mid Z)}{S_j(t\mid Z)}\right]\\
			&=\frac{S_j(t\mid Z)}{S_j(t)}\left[\lambda_j(t)-\lambda_j(t\mid Z)\right]= -\frac{S_j(t\mid Z)}{S_j(t)}[\lambda_j(t\mid Z)-\lambda_j(t)]\\
			&=-\frac{H_j(t\mid Z)}{H_j(t)}[\lambda_j(t\mid Z)-\lambda_j(t)]\equiv-q_j^Z(t).
		\end{align*}
	\end{enumerate}
\end{proof}

\begin{prop}[\cite{jacobsen_note_2016}]\label{prop:1}
	For $j=1,2$,
	\[
	E\left[\int_0^{t_0}\frac{M_j(du)}{H_j(u)}\mid Z\right]=1-\frac{S_j(t_0\mid Z)}{S_j(t_0)}.
	\]
\end{prop}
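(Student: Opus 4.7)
The plan is to apply Lemma~\ref{lem:1} twice: part 1 to decompose $M_j(du)$ into the conditional martingale piece plus a compensator-difference piece, and part 2 to identify the integrand of the compensator-difference piece as an exact derivative.

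First, I would substitute the decomposition $M_j(du) = M_j^Z(du) + \delta_j^Z(u)\, y_j(u)\, du$ from Lemma~\ref{lem:1}(1) and split the conditional expectation into two terms:
\[
E\!\left[\int_0^{t_0}\frac{M_j(du)}{H_j(u)}\,\Big|\,Z\right] = E\!\left[\int_0^{t_0}\frac{M_j^Z(du)}{H_j(u)}\,\Big|\,Z\right] + E\!\left[\int_0^{t_0}\frac{\delta_j^Z(u)\, y_j(u)}{H_j(u)}\,du\,\Big|\,Z\right].
\]
For the first term, since $H_j(u)$ is deterministic and since conditional on $Z$ the process $M_j^Z$ is a martingale with respect to the natural filtration (being the difference between the counting process $1(Y_j\le u,\Delta_j=1)$ and its conditional compensator $\int_0^u I(Y_j\ge s)\lambda_j(s\mid Z)\,ds$), the conditional mean of the stochastic integral vanishes. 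Hence the first term is zero.

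For the second term, I would pull the conditional expectation inside the (deterministic) integral to obtain
\[
\int_0^{t_0}\frac{\delta_j^Z(u)\, E[y_j(u)\mid Z]}{H_j(u)}\,du = \int_0^{t_0}\frac{H_j^Z(u)}{H_j(u)}\,\delta_j^Z(u)\,du = \int_0^{t_0} q_j^Z(u)\,du,
\]
using the definition $q_j^Z(u) = \frac{H_j^Z(u)}{H_j(u)}\delta_j^Z(u)$. Now I would invoke Lemma~\ref{lem:1}(2), which gives $q_j^Z(u) = -\frac{\partial}{\partial u}\!\left(\frac{S_j^Z(u)}{S_j(u)}\right)$, so the integral telescopes:
\[
\int_0^{t_0} q_j^Z(u)\,du = -\left[\frac{S_j^Z(t_0)}{S_j(t_0)} - \frac{S_j^Z(0)}{S_j(0)}\right] = 1 - \frac{S_j(t_0\mid Z)}{S_j(t_0)},
\]
using $S_j^Z(0)=S_j(0)=1$. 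This yields the claim.

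There is no serious obstacle; the entire argument is a direct application of Lemma~\ref{lem:1}, together with the martingale property of $M_j^Z$ conditional on $Z$. The only subtlety worth double-checking is the interchange of conditional expectation with the stochastic and ordinary integrals, which is justified because $H_j$ is deterministic and bounded away from zero on $[0,t_0]$ (ensuring integrability) and because $M_j^Z$ is a square-integrable martingale under $P(\cdot\mid Z)$.
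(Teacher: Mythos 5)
Your proposal is correct and follows essentially the same route as the paper's proof: decompose $M_j$ via Lemma~\ref{lem:1}(1), kill the conditional-martingale term, and integrate $q_j^Z$ using the derivative identity of Lemma~\ref{lem:1}(2) together with $S_j^Z(0)=S_j(0)=1$. The added remarks on interchanging expectation and integration are a minor refinement the paper leaves implicit.
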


\begin{proof}[Proof of Proposition~\ref{prop:1}]
	By Lemma~\ref{lem:1} we have that $M_j(du)=M_j^Z(du)+\delta_j^Z(u)y_j(u)du$. Consequently, 
	\[
	E\left[\int_0^{t_0}\frac{M_j(du)}{H_j(u)}\mid Z\right]=\underbrace{E\left[\int_0^{t_0}\frac{M_j^Z(du)}{H_j(u)}\mid Z \right]}_A+\underbrace{E\left[\int_0^{t_0}\frac{\delta_j^Z(u)y_j(u)}{H_j(u)}du \mid Z\right]}_B.	
	\]
	We have that $A=0$ since $M_j^Z(u)$ is a martingale.
	Note also that
	\begin{align*}
		B&=E\left[\int_0^{t_0}\frac{\delta_j^Z(u)y_j(u)}{H_j(u)}du \mid Z\right]=\int_0^{t_0}\frac{\delta_j^Z(u)P(Y_j\geq u \mid Z)}{H_j(u)}du\\
		&=\int_0^{t_0}\frac{H_j^Z(u)\delta_j^Z(u)}{H_j(u)}du=\int_0^{t_0}q_j^Z(u)du=-\int_0^{t_0}-q_j^Z(u)du.
	\end{align*}
	Next, by Lemma~\ref{lem:1} (part 2), we have that $-q_j^Z(t)=\frac{\partial }{\partial t}\left(\frac{S_j^Z(t)}{S_j(t)}\right)=\frac{\partial }{\partial t}\left(\frac{H_j^Z(t)}{H_j(t)}\right)$. Hence,
	\begin{align*}
		B&=-\int_0^{t_0}-q_j^Z(u)du=-\int_0^{t_0}\frac{\partial }{\partial u}\left(\frac{S_j^Z(u)}{S_j(u)}\right)du=-\frac{S_j^Z(u)}{S_j(u)}\bigg\rvert_0^{t_0}=1-\frac{S_j^Z(t_0)}{S_j(t_0)}.
	\end{align*}
	Summing it up, we obtain that
	\[
	E\left[\int_0^{t_0}\frac{M_j(du)}{H_j(u)}\mid Z\right]=\underbrace{E\left[\int_0^{t_0}\frac{M_j^Z(du)}{H_j(u)}\mid Z \right]}_A+\underbrace{E\left[\int_0^{t_0}\frac{\delta_j^Z(u)y_1(u)}{H_j(u)}du \mid Z\right]}_B=1-\frac{S_j^Z(t_0)}{S_j(t_0)}.	
	\]
\end{proof}

\begin{lem}\label{lem:2d}
	Denote by $\delta_{10}^Z(u,v)=\Lambda_{10}(du,v\mid Z)-\Lambda_{10}(du,v)$, $\delta_{01}^Z(u,v)=\Lambda_{01}(u,dv\mid Z)-\Lambda_{01}(u,dv)$, and $\delta_{11}^Z(u,v)=\Lambda_{11}(du,dv\mid Z)-\Lambda_{11}(du,dv)$ the difference between the conditional and marginal bivariate hazards. Then,
	\begin{enumerate}
		\item $M_{10}(du,v)=M_{10}^Z(du,v)+\delta_{10}^Z(u,v)Y(u^-,v)du$
		\item $M_{01}(u,dv)=M_{01}^Z(u,dv)+\delta_{01}^Z(u,v)Y(u,v^-)dv$
		\item $M_{11}(du,dv)=M_{11}^Z(du,dv)+\delta_{11}^Z(u,v)Y(u^-,v^-)dudv$
	\end{enumerate}
\end{lem}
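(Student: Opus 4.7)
The plan is to prove each of the three identities by direct substitution, exactly mirroring the proof of part 1 of Lemma~\ref{lem:1}. The three cases are structurally identical, so I would carry out the first one in detail and then note that the other two are completely analogous, requiring only the obvious relabeling.

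First, I would write down the definitions in differential form. The three marginal compensated processes appearing in Section~\ref{sec:IF_Dab} satisfy
\begin{align*}
M_{10}(du,v)&=N_{10}(du,v)-Y(u^-,v)\,\Lambda_{10}(du,v),\\
M_{01}(u,dv)&=N_{01}(u,dv)-Y(u,v^-)\,\Lambda_{01}(u,dv),\\
M_{11}(du,dv)&=N_{11}(du,dv)-Y(u^-,v^-)\,\Lambda_{11}(du,dv),
\end{align*}
and I would define the conditional analogues $M_{10}^Z, M_{01}^Z, M_{11}^Z$ by replacing each unconditional bivariate hazard by its conditional counterpart $\Lambda_{10}(du,v\mid Z)$, $\Lambda_{01}(u,dv\mid Z)$, and $\Lambda_{11}(du,dv\mid Z)$, respectively. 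This is the direct bivariate analogue of the univariate definition $M_j^Z(du)=\mathbf{1}(Y_j\in du,\Delta_j=1)-\mathbf{1}(Y_j\ge u)\Lambda_j(du\mid Z)$ used in Lemma~\ref{lem:1}.

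Next, I would observe that $\delta_{10}^Z(u,v)\,du$, $\delta_{01}^Z(u,v)\,dv$, and $\delta_{11}^Z(u,v)\,du\,dv$ are, by the defining equations in the lemma, exactly the differences $\Lambda_{10}(du,v\mid Z)-\Lambda_{10}(du,v)$, $\Lambda_{01}(u,dv\mid Z)-\Lambda_{01}(u,dv)$, and $\Lambda_{11}(du,dv\mid Z)-\Lambda_{11}(du,dv)$. Adding $\delta_{10}^Z(u,v)Y(u^-,v)\,du$ to $M_{10}^Z(du,v)$ therefore yields
\[
N_{10}(du,v)-Y(u^-,v)\Lambda_{10}(du,v\mid Z)+Y(u^-,v)\bigl[\Lambda_{10}(du,v\mid Z)-\Lambda_{10}(du,v)\bigr],
\]
in which the conditional hazard terms cancel to leave exactly $N_{10}(du,v)-Y(u^-,v)\Lambda_{10}(du,v)=M_{10}(du,v)$, proving part 1. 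The at-risk factor $Y(u^-,v)$ is consistent in both summands because the conditional and unconditional compensators share the same risk process, which is what makes the cancellation clean.

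The proofs of parts 2 and 3 proceed identically: subtracting $M_{01}^Z$ from $M_{01}$ (respectively, $M_{11}^Z$ from $M_{11}$) cancels the counting-process increment, leaving $Y(u,v^-)\bigl[\Lambda_{01}(u,dv)-\Lambda_{01}(u,dv\mid Z)\bigr]$ (respectively, $Y(u^-,v^-)\bigl[\Lambda_{11}(du,dv)-\Lambda_{11}(du,dv\mid Z)\bigr]$) on the right-hand side, which by definition equals $-\delta_{01}^Z(u,v)Y(u,v^-)\,dv$ (respectively, $-\delta_{11}^Z(u,v)Y(u^-,v^-)\,du\,dv$), giving the stated identities after rearrangement. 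There is no substantive obstacle here: the statement is an algebraic decomposition, not a martingale property, so no filtration argument or Doob--Meyer style decomposition is required. The only thing to be careful about is the notational convention that $\delta_{10}^Z(u,v)$, $\delta_{01}^Z(u,v)$, $\delta_{11}^Z(u,v)$ are understood as (conditional minus unconditional) bivariate hazard intensities, so that multiplying by the appropriate $du$, $dv$, or $du\,dv$ recovers the corresponding cumulative-hazard differential used above.
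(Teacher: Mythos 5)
Your proof is correct and is exactly the argument the paper intends: the paper gives no separate proof of this lemma, stating only that it is ``similar to the proof of Lemma~1,'' and your direct-substitution cancellation is precisely the bivariate analogue of that univariate argument (including the correct reading of $\delta^Z$ as a hazard-intensity difference to be paired with $du$, $dv$, or $du\,dv$). No further comment is needed.
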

The proof is similar to the proof of Lemma~\ref{lem:1}.
\begin{lem}\label{prop:2}
	Denote by $H^Z(u,v)=P(Y_1>u,Y_2>v \mid Z)$ and $S^Z(u,v)=P(T_1>u,T_2>v \mid Z)$ the conditional joint survival functions of the observed times $(Y_1,Y_2)$, and the actual failure-times $(T_1,T_2)$, respectively. Denote further $q_{11}^Z(u,v)=\frac{H^Z(u,v)}{H(u,v)}\delta_{11}^Z(u,v)$, $q_{10}^Z(u,v)=\frac{H^Z(u,v)}{H(u,v)}\delta_{10}^Z(u,v)$, and $q_{01}^Z(u,v)=\frac{H^Z(u,v)}{H(u,v)}\delta_{01}^Z(u,v)$. Then,
	\[
	\frac{\partial^2 }{\partial s\partial t}\left(\frac{H^Z(s,t)}{H(s,t)}\right)=\frac{\partial^2 }{\partial s\partial t}\left(\frac{S^Z(s,t)}{S(s,t)}\right)=q_{11}^Z(s,t)-q_{10}^Z(u,v)\lambda_{01}(s,t)-q_{01}^Z(u,v)\lambda_{10}(s,t)
	\]
\end{lem}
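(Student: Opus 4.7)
The plan is to first reduce the identity to one involving only the failure-time survival functions, and then compute the mixed partial derivative of $R(s,t):=S^Z(s,t)/S(s,t)$ by differentiating $S^Z=RS$ and substituting the continuous-case hazard--survival relations.

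First, by the standing assumption $(C_1,C_2)\indep(T_1,T_2,Z)$, one has
$H^Z(s,t)=P(T_1>s,T_2>t\mid Z)\,P(C_1>s,C_2>t)=S^Z(s,t)G(s,t)$
and $H(s,t)=S(s,t)G(s,t)$ with $G(s,t)=P(C_1>s,C_2>t)$. Hence $H^Z/H=S^Z/S$ as functions of $(s,t)$, so the first equality in the statement is immediate and it suffices to compute $\partial^2 R/\partial s\,\partial t$.

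For the second step, I would use the continuous-case identities implicit in the definitions of $\Lambda_{10},\Lambda_{01},\Lambda_{11}$ given in Section~\ref{sec:Dab_pre}: $\partial_s S=-\lambda_{10}S$, $\partial_t S=-\lambda_{01}S$, $\partial_s\partial_t S=\lambda_{11}S$, and analogously with $S^Z$ and conditional hazards $\lambda_{ij}^Z$. Differentiating the product $S^Z=RS$ in $s$ gives $-\lambda_{10}^Z RS=R_s S-R\lambda_{10}S$, hence $R_s=R(\lambda_{10}-\lambda_{10}^Z)=-R\delta_{10}^Z=-q_{10}^Z$; symmetrically $R_t=-q_{01}^Z$. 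These two identities are the bivariate analogues of part~2 of Lemma~\ref{lem:1}. Then expanding
$\partial_s\partial_t S^Z=R_{st}S+R_s S_t+R_t S_s+RS_{st}$,
substituting $\partial_s\partial_t S^Z=\lambda_{11}^Z R S$, $S_s=-\lambda_{10}S$, $S_t=-\lambda_{01}S$, $S_{st}=\lambda_{11}S$ together with the expressions already obtained for $R_s$ and $R_t$, and dividing through by $S$, I arrive at
$R_{st}=R(\lambda_{11}^Z-\lambda_{11})+R_s\lambda_{01}+R_t\lambda_{10}=q_{11}^Z-q_{10}^Z\lambda_{01}-q_{01}^Z\lambda_{10}$,
which is exactly the claim.

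The derivation is essentially a one-line application of the quotient rule once the hazard--survival identities are written out, so I do not anticipate a genuine obstacle. The only delicate point is the use of the continuous-case formulas $\partial_s S=-\lambda_{10}S$ and $\partial_s\partial_t S=\lambda_{11}S$, which requires absolute continuity of the joint distribution of $(T_1,T_2)$; this assumption is already in force in Section~\ref{sec:IF_Dab} (``we will assume for simplicity that the bivariate times are continuous'') where the Dabrowska influence function was derived, so it is consistent with the paper's standing framework.
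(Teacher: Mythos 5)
Your proposal is correct and follows essentially the same route as the paper: the first equality via the independent-censoring factorization $H^Z=S^ZG$, $H=SG$, and the second by differentiating the ratio $S^Z/S$ using the continuous-case hazard--survival relations $\partial_s S=-\lambda_{10}S$, $\partial_t S=-\lambda_{01}S$, $\partial_s\partial_t S=\lambda_{11}S$. The only difference is organizational: you factor the computation through the intermediate identities $R_s=-q_{10}^Z$ and $R_t=-q_{01}^Z$ (the bivariate analogue of Lemma~\ref{lem:1}, part 2) by applying the product rule to $S^Z=RS$, whereas the paper expands the quotient $\partial^2_{st}(S^Z/S)$ directly and regroups; the algebra is equivalent.
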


\begin{proof}[Proof of Lemma~\ref{prop:2}]
	First note that
	\begin{align*}
		\frac{H^Z(s,t)}{H(s,t)}&=\frac{P(Y_1\geq s,Y_2 \geq t \mid Z )}{P(Y_1\geq s,Y_2 \geq t)}=\frac{P(\min(T_1,C_1)\geq s, \min(T_2,C_2)\geq t\mid Z)}{P(\min(T_1,C_1)\geq s, \min(T_2,C_2)\geq t)}\\
		&=\frac{P(T_1\geq t, T_2\geq t \mid Z)P(C_1\geq t, C_2\geq t)}{P(T_1\geq t, T_2\geq t)P(C_1\geq t, C_2\geq t)}=\frac{S^Z(s,t)}{S(s,t)},
	\end{align*}
	where again, the equality before last is due to the independent censoring assumption. Consequently, 
	\[
	\frac{\partial^2 }{\partial s\partial t}\left(\frac{H^Z(s,t)}{H(s,t)}\right)=\frac{\partial^2 }{\partial s\partial t}\left(\frac{S^Z(s,t)}{S(s,t)}\right). 
	\]
	It can be shown that the cross partial derivative is equal to
	\begin{align*}
		&\frac{\partial^2 }{\partial s\partial t}\left(\frac{S^Z(s,t)}{S(s,t)}\right)=\frac{f(s,t \mid Z)}{S(s,t)}-\frac{\frac{\partial S^Z(s,t)}{\partial s}}{S(s,t)}\frac{\frac{\partial S(s,t)}{\partial t}}{S(s,t)}-\frac{\frac{\partial}{\partial s}S(s,t)\frac{\partial}{\partial t}S^z(s,t)}{S^2(s,t)}+\frac{S^Z(s,t)\frac{\partial}{\partial s}S(s,t)\frac{\partial}{\partial t}S(s,t)}{S^3(s,t)}\\
		-&\frac{S^Z(s,t)f(s,t)}{S^2(s,t)}+\frac{s^Z(s,t)\frac{\partial}{\partial s}S(s,t)\frac{\partial}{\partial t}S(s,t)}{S^3(s,t)}\\
		=&\frac{f(s,t \mid Z)}{S(s,t)}-\frac{\frac{\partial S^Z(s,t)}{\partial s}}{S(s,t)}\frac{\frac{\partial S(s,t)}{\partial t}}{S(s,t)}-\frac{\frac{\partial}{\partial s}S(s,t)\frac{\partial}{\partial t}S^z(s,t)}{S^2(s,t)}-\frac{S^Z(s,t)f(s,t)}{S^2(s,t)}+2\frac{S^Z(s,t)\frac{\partial}{\partial s}S(s,t)\frac{\partial}{\partial t}S(s,t)}{S^3(s,t)}\\
		=&\frac{S^Z(s,t)}{S(s,t)}\left[\frac{f(s,t \mid Z)}{S^Z(s,t)}-\frac{\frac{\partial S^Z(s,t)}{\partial s}}{S^Z(s,t)}\frac{\frac{\partial S(s,t)}{\partial t}}{S(s,t)}-\frac{\frac{\partial}{\partial s}S(s,t)}{S(s,t)}\frac{\frac{\partial}{\partial t}S^z(s,t)}{S^z(s,t)}-\frac{f(s,t)}{S(s,t)}+2\frac{\frac{\partial}{\partial s}S(s,t)\frac{\partial}{\partial t}S(s,t)}{S^2(s,t)}\right]\\
		=&\frac{S^Z(s,t)}{S(s,t)}\left[\lambda_{11}(s,t \mid Z)-\lambda_{10}(s,t\mid Z)\lambda_{01}(s,t)-\lambda_{10}(s,t)\lambda_{01}(s,t\mid Z)-\lambda_{11}(s,t)+2\lambda_{10}(s,t)\lambda_{01}(s,t)\right]\\
		=&\frac{S^Z(s,t)}{S(s,t)}\left[\lambda_{11}(s,t \mid Z)-\lambda_{11}(s,t)\right]-\frac{S^Z(s,t)\lambda_{01}(s,t)}{S(s,t)}\left[\lambda_{10}(s,t \mid Z)-\lambda_{10}(s,t)\right]\\
		-&\frac{S^Z(s,t)\lambda_{10}(s,t)}{S(s,t)}\left[\lambda_{01}(s,t \mid Z)-\lambda_{01}(s,t)\right]\\
		=&\frac{S^Z(s,t)}{S(s,t)}\left[\delta_{11}^Z(s,t)-\delta_{10}^Z(s,t)\lambda_{01}(s,t)-\delta_{01}^Z(s,t)\lambda_{10}(s,t)\right]\\
		=&\frac{H^Z(s,t)}{H(s,t)}\left[\delta_{11}^Z(s,t)-\delta_{10}^Z(s,t)\lambda_{01}(s,t)-\delta_{01}^Z(s,t)\lambda_{10}(s,t)\right]\\
		=&q_{11}^Z(s,t)-q_{10}^Z(u,v)\lambda_{01}(s,t)-q_{01}^Z(u,v)\lambda_{10}(s,t).
	\end{align*}
\end{proof}
\begin{prop}\label{prop:3}
	\begin{enumerate}
		\item \[
		E\left[\int_{0}^s\int_{0}^t\frac{M_{11}(du,dv)}{H(u,v)} \mid Z\right]=\int_{0}^s\int_{0}^tq_{11}^Z(u,v)dudv
		\]
		\item \[
		E\left[\int_{0}^s\int_{0}^t\frac{M_{10}(du,v)\Lambda_{01}(u,dv)}{H(u,v)} \mid Z\right]=\int_{0}^s\int_{0}^tq_{10}^Z(u,v)\lambda_{01}(u,v)dudv
		\]
		\item \[
		E\left[\int_{0}^s\int_{0}^t\frac{M_{01}(u,dv)\Lambda_{10}(du,v)}{H(u,v)} \mid Z\right]=\int_{0}^s\int_{0}^tq_{01}^Z(u,v)\lambda_{10}(u,v)dudv
		\]
	\end{enumerate}
\end{prop}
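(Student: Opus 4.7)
The plan is to handle the three identities in parallel: each reduces to first computing $E[dM_{ij}(u,v)\mid Z]$ for $(ij)\in\{(11),(10),(01)\}$ in closed form, and then substituting back into the iterated integral. The common thread is that, under independent censoring and continuity of the bivariate failure-time distribution, the conditional mean of each martingale increment given $Z$ factors into $H^Z(\cdot)$ times $\delta_{ij}^Z(\cdot)\,du\,dv$; dividing by the deterministic $H(u,v)$ then produces the ratio $H^Z/H$, which is exactly the $q_{ij}^Z$ in the statement.

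First, I would compute the conditional mean of each counting-process differential using the definition of the conditional bivariate hazards together with the independent censoring assumption. Specifically, $E[N_{11}(du,dv)\mid Z]=P(Y_1\in du,Y_2\in dv,\Delta_1{=}1,\Delta_2{=}1\mid Z)=H^Z(u^-,v^-)\Lambda_{11}(du,dv\mid Z)$, and analogously $E[N_{10}(du,v)\mid Z]=H^Z(u^-,v)\Lambda_{10}(du,v\mid Z)$ and $E[N_{01}(u,dv)\mid Z]=H^Z(u,v^-)\Lambda_{01}(u,dv\mid Z)$. Since $E[Y(u^-,v^-)\mid Z]=H^Z(u^-,v^-)$ (and the single-time analogues), subtracting gives $E[dM_{11}\mid Z]=H^Z(u^-,v^-)\,\delta_{11}^Z(u,v)\,du\,dv$, $E[dM_{10}(\cdot,v)\mid Z]=H^Z(u^-,v)\,\delta_{10}^Z(u,v)\,du$, and $E[dM_{01}(u,\cdot)\mid Z]=H^Z(u,v^-)\,\delta_{01}^Z(u,v)\,dv$. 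This is precisely the content of Lemma~\ref{lem:2d} once we take conditional expectations (the $M_{ij}^Z$ pieces vanish, being conditional zero-mean martingale differences).

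Second, I would apply Fubini's theorem to move the conditional expectation inside the double integral. The integrands $1/H(u,v)$ in Part 1, and the products $\Lambda_{01}(u,dv)/H(u,v)$ and $\Lambda_{10}(du,v)/H(u,v)$ in Parts 2 and 3, are non-random (they involve only the marginal survival and hazard of the observed data), so they pass through the conditional expectation unchanged. Substituting the expressions from the previous step and using continuity to identify $H^Z(u^-,v^-)$ with $H^Z(u,v)$ (and similarly for the single-argument versions) produces the ratio $H^Z(u,v)/H(u,v)\,\delta_{ij}^Z(u,v)=q_{ij}^Z(u,v)$. The extra factor $\lambda_{01}(u,v)$ or $\lambda_{10}(u,v)$ coming from the deterministic marginal-hazard measure in Parts 2 and 3 then yields exactly the right-hand sides of the three claimed identities.

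The main obstacle is the rigorous exchange of conditional expectation and the (double) integral with respect to signed measures built from $N_{ij}$ and its compensator. The cleanest route is to avoid any bivariate martingale subtleties by writing $\int\!\!\int \phi\,dM_{ij}=\int\!\!\int \phi\,dN_{ij}-\int\!\!\int \phi\,Y(\cdot)\,d\Lambda_{ij}$ and taking conditional expectations of each piece separately: the $dN_{ij}$ piece is controlled by independent censoring, and the compensator piece reduces to a Lebesgue integral whose integrand can be bounded (near any singular point the $1/H(u,v)$ factor stays away from $0$ on compacts $[0,s]\times[0,t]$ lying in the support), so dominated convergence and Fubini apply without incident. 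As in the earlier derivation of the Dabrowska influence function following \cite{gill_inefficient_1995}, the continuity assumption removes all left-limit/jump subtleties, and Parts 2 and 3 then follow from essentially one-dimensional specializations of the argument used for Part 1.
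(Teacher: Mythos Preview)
Your proposal is correct and follows essentially the same route as the paper. The paper splits each integral via Lemma~\ref{lem:2d} into an $M_{ij}^Z$ piece (which vanishes as a conditional martingale) and a $\delta_{ij}^Z Y$ piece, then uses $E[Y(u^-,v^-)\mid Z]=H^Z(u,v)$ to obtain $q_{ij}^Z$; you instead compute $E[dM_{ij}\mid Z]$ directly and arrive at the same $H^Z\delta_{ij}^Z$ expression, which you rightly identify as equivalent to the Lemma~\ref{lem:2d} decomposition after taking expectations.
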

\begin{proof}[Proof of Proposition~\ref{prop:3}]
	Based on Lemma~\ref{lem:2d}, we have that
	\begin{enumerate}
		\item 	\begin{align*}
			E\left[\int_{0}^s\int_{0}^t\frac{M_{11}(du,dv)}{H(u,v)} \mid Z\right]&=\underbrace{E\left[\int_{0}^s\int_{0}^t\frac{M_{11}^Z(du,dv)}{H(u,v)} \mid Z\right]}_a+\underbrace{E\left[\int_{0}^s\int_{0}^t\frac{\delta_{11}^Z(u,v)Y(u^-,v^-)}{H(u,v)}dudv \mid Z\right]}_b,
		\end{align*}
		where $M_{11}^Z(du,dv)=1(Y_1\in du, Y_2 \in dv, \Delta_1=1,\Delta_2=1)-1(Y_1\geq u, Y_2 \geq v)\Lambda_{11}(du,dv\mid Z)$. Hence, $a=0$ since $M_{11}^Z(u,v)$ is a martingale.
		Note also that 
		\begin{align*}
			b&=E\left[\int_{0}^s\int_{0}^t\frac{\delta_{11}^Z(u,v)Y(u^-,v^-)}{H(u,v)}dudv \mid Z\right]=\int_{0}^s\int_{0}^t\frac{\delta_{11}^Z(u,v)P(Y_1\geq u, Y_2 \geq v \mid Z)}{H(u,v)}dudv\\
			&=\int_{0}^s\int_{0}^t\frac{H^Z(u,v)\delta_{11}^Z(u,v)}{H(u,v)}dudv=\int_{0}^s\int_{0}^tq_{11}^Z(u,v)dudv.
		\end{align*}
		Consequently, 
		\[
		E\left[\int_{0}^s\int_{0}^t\frac{M_{11}(du,dv)}{H(u,v)} \mid Z\right]=a+b=0+b=\int_{0}^s\int_{0}^tq_{11}^Z(u,v)dudv.
		\]
		\item \begin{align*}
			&E\left[\int_{0}^s\int_{0}^t\frac{M_{10}(du,v)\Lambda_{01}(u,dv)}{H(u,v)} \mid Z\right]\\
			=&\underbrace{E\left[\int_{0}^s\int_{0}^t\frac{M_{10}^Z(du,v)\Lambda_{01}(u,dv)}{H(u,v)} \mid Z\right]}_a+\underbrace{E\left[\int_{0}^s\int_{0}^t\frac{\delta_{10}^Z(u,v)Y(u^-,v^-)\Lambda_{01}(u,dv)}{H(u,v)}dudv \mid Z\right]}_b,
		\end{align*}
		where $M_{10}^Z(du,v)=1(Y_1\in du, Y_2 \geq v, \Delta_1=1)-1(Y_1\geq u, Y_2 \geq v)\Lambda_{10}(du,v\mid Z)$. Hence, $a=0$ since $M_{10}^Z(u,v)$ is a martingale.
		Note also that 
		\begin{align*}
			b&=E\left[\int_{0}^s\int_{0}^t\frac{\delta_{10}^Z(u,v)Y(u^-,v^-)\Lambda_{01}(u,dv)}{H(u,v)}dudv \mid Z\right]\\
			&=\int_{0}^s\int_{0}^t\frac{\delta_{10}^Z(u,v)P(Y_1\geq u, Y_2 \geq v \mid Z)\Lambda_{01}(u,dv)}{H(u,v)}dudv\\
			&=\int_{0}^s\int_{0}^t\frac{H^Z(u,v)\delta_{10}^Z(u,v)\Lambda_{01}(u,dv)}{H(u,v)}dudv=\int_{0}^s\int_{0}^tq_{10}^Z(u,v)\lambda_{01}(u,v)dudv.
		\end{align*}
		Consequently, 
		\[
		E\left[\int_{0}^s\int_{0}^t\frac{M_{10}(du,v)\Lambda_{01}(u,dv)}{H(u,v)} \mid Z\right]=a+b=0+b=\int_{0}^s\int_{0}^tq_{10}^Z(u,v)\lambda_{01}(u,v)dudv.
		\]
		\item Use the same proof as above.
	\end{enumerate}
\end{proof}
\textbf{Consequence}\\
By Proposition~\ref{prop:3}, we have that
\begin{align*}
	&E\left[\int_{0}^s\int_{0}^t\frac{M_{11}(du,dv)-M_{10}(du,v)\Lambda_{01}(u,dv)-M_{01}(u,dv)\Lambda_{10}(du,v)}{H(u^-,v^-)}\mid Z\right]\\
	=&\int_{0}^s\int_{0}^tq_{11}^Z(u,v)dudv-\int_{0}^s\int_{0}^tq_{10}^Z(u,v)\lambda_{01}(u,v)dudv-\int_{0}^s\int_{0}^tq_{01}^Z(u,v)\lambda_{10}(u,v)dudv\\
	=&\int_{0}^s\int_{0}^t\left[q_{11}^Z(u,v)-q_{10}^Z(u,v)\lambda_{01}(u,v)-q_{01}^Z(u,v)\lambda_{10}(u,v)\right]dudv\\
	=&\int_{0}^s\int_{0}^t\frac{\partial^2 }{\partial u\partial v}\left(\frac{S^Z(u,v)}{S(u,v)}\right)dudv,
\end{align*}
where the last equality is due to Lemma~\ref{prop:2}. Note also that by the fundamental theorem of calculus (also known as the Newton–Leibniz theorem) on the plane, we have that
\begin{align*}
	\int_{0}^s\int_{0}^t\frac{\partial^2 }{\partial u\partial v}\left(\frac{S^Z(u,v)}{S(u,v)}\right)dudv&=\frac{S^Z(s,t)}{S(s,t)}-\frac{S^Z(s,0)}{S(s,0)}-\frac{S^Z(0,t)}{S(0,t)}+\frac{S^Z(0,0)}{S(0,0)}\\&=\frac{S^Z(s,t)}{S(s,t)}-\frac{S^Z(s,0)}{S(s,0)}-\frac{S^Z(0,t)}{S(0,t)}+1.
\end{align*}
Combining this with Proposition~\ref{prop:1}, we obtain that
\begin{align*}
	&E\left[\int_{0}^{s}\frac{M_{1}(du)}{H_1(u)}+\int_{0}^{t}\frac{M_{2}(dv)}{H_2(v)}-\int_{0}^s\int_{0}^t\frac{M_{11}(du,dv)-M_{10}(du,v)\Lambda_{01}(u,dv)-M_{01}(u,dv)\Lambda_{10}(du,v)}{H(u^-,v^-)}\mid Z\right]\\
	=&\left[1-\frac{S_{T_1}(s\mid Z)}{S_{T_1}(s)}\right]+\left[1-\frac{S_{T_2}(t\mid Z)}{S_{T_2}(t)}\right]-\left[\frac{S^Z(s,t)}{S(s,t)}-\frac{S^Z(s,0)}{S(s,0)}-\frac{S^Z(0,t)}{S(0,t)}+1\right]\\
	=& 1-\frac{S^Z(s,t)}{S(s,t)},
\end{align*}
where we used the fact that $\frac{S^Z(s,0)}{S(s,0)}=\frac{S_{T_1}(s\mid Z)}{S_{T_1}(s)}$ and $\frac{S^Z(0,t)}{S(0,t)}=\frac{S_{T_2}(t\mid Z)}{S_{T_2}(t)}$. In summary, we obtain that 
\begin{align*}
	&E\left[\dot{\phi}(X)(s,t)\mid Z\right]\\
	=&E\left[-S(s,t)\left[\int_{0}^{s}\frac{M_{1}(du)}{H_1(u)}+\int_{0}^{t}\frac{M_{2}(dv)}{H_2(v)}-\int_{0}^s\int_{0}^t\frac{M_{11}(du,dv)-M_{10}(du,v)\Lambda_{01}(u,dv)-M_{01}(u,dv)\Lambda_{10}(du,v)}{H(u^-,v^-)}\right]\mid Z\right]\\
	=&-S(s,t)\left[ 1-\frac{S^Z(s,t)}{S(s,t)}\right]=S^Z(s,t)-S(s,t),
\end{align*}
which is exactly what we needed to prove for the theory of pseudo-observations to hold.

\subsection{Conclusion}\label{sec:covariance}
Based on Theorem 3.4 of \cite{overgaard_asymptotic_2017}, we obtain that $\hat{\beta}$ is consistent and asymptotically normal, with covariance matrix given by $M^{-1}\Sigma M^{-1}$,
where $M=N(\beta^*)$, $N(\beta)=E\left[\left(\frac{\partial}{\partial \beta}g^{-1}(\beta^T Z_i)\right)^T V_i^{-1}\frac{\partial}{\partial \beta}g^{-1}(\beta^T Z_i)\right]$,
\begin{equation*}
	\Sigma=Var\left[\frac{\partial g^{-1}(\beta^TZ_i)}{\partial \beta}\Bigr|_{\beta=\beta^*} V_i^{-1}\left(\phi(F)+\dot{\phi}(X_i)-g^{-1}({\beta^*}^TZ_i)\right)+h_1(X_i)\right],
\end{equation*}
$\phi$ is the estimating functional of either the Lin and Ying estimator or the Dabrowska estimator, 
and where $h_1(x)=E\left[\frac{\partial g^{-1}(\beta^TZ_i)}{\partial \beta}\Bigr|_{\beta=\beta^*} V_i^{-1}\ddot{\phi}(x,X_i)\right]$.

\section{}
We repeat the analysis in Section 3.1 using six separate regression models for the same six time points $\{(0.5,0.7),(1,0.7),(0.5,1.2),(1,1.2),(0.5,1.5),(1,1.5)\}$. We use the same bivariate logistic data generating mechanism with both univariate and bivariate independent censoring.
For $j=1,\ldots, 6$, the corresponding regression models are
\begin{equation}\label{eq:six_reg}
	g(S(t_1^j,t_2^j\mid Z))=\beta_0^j+\beta_1^jZ,
\end{equation}
where both $\beta_0^j$ and $\beta_1^j$ depend on the $j$th time point, and $g$ is the logit link function. For the response, we use the bivariate pseudo-observations that are based on estimates of the bivariate survival function using either (1) the Lin and Ying estimator or (2) the Dabrowska estimator. We emphasize that for the case of bivariate censoring we first modify the Lin and Ying estimator, by substituting ${\hat{G}(\max(t_1,t_2))}$ in the denominator with $\hat{G_1}(t_1)\hat{G_2}(t_2)$.

We estimate $\beta_0$ and $\beta_1$ based on 500 simulations, for each of the six time points, for both types of estimators, and for both censoring mechanisms. Web Table~\ref{table:logistic_6_models} (top) presents the true values and the estimated values of the regression coefficients for the univariate censoring mechanism, and Web Table~\ref{table:logistic_6_models} (bottom) presents the true values and the estimated values of the regression coefficients for the bivariate censoring mechanism. As can be seen, for the univariate censoring scenario, the regression estimates that are based on either the Dabrowska estimator or the Lin and Ying estimator are relatively close to their true values, and the standard deviations of the Dabrowska estimator are lower than those based on the Lin and Ying estimator. The coverage is usually either correct or conservative. In the bivariate censoring scenario, the performance of the pseudo-observations approach based on the Dabrowska estimator is better than that of the modified Lin and Ying estimator, both in terms of lower bias and lower variance, and the coverage obtained from both estimators is generally correct. Interestingly, the bivariate pseudo-observations approach based on the Dabrowska estimator resulted in better estimates in the bivariate censoring scenario rather than in the univariate censoring scenario, both in terms of lower bias and lower variance.

\begin{table}[ht]
	\begin{subtable}{1\textwidth}
		\centering
		\subcaption{Univariate censoring}
		\begin{tabular}{llrrrrrrrr}
			\hline
			\multicolumn{2}{r}{} & 	\multicolumn{4}{c}{Dabrowska} & \multicolumn{4}{c}{Lin and Ying} \\	
			\hline
			\multicolumn{4}{r}{} & 	\multicolumn{2}{l}{$\beta_0$} & \multicolumn{4}{r}{} \\	
			\hline
			Time point & True value	& $\hat{\beta}$ & sd & se & cov (\%) & $\hat{\beta}$ & sd & se & cov (\%) \\ 
			\hline
			$(0.5,0.7)$ & -0.96 & -0.98 & 0.72 & 0.68 & 97.20 & -0.98 & 0.75 & 0.72 & 95.80 \\ 
			$(1,0.7)$ & -1.13 & -1.15 & 0.70 & 0.66 & 96.00 & -1.15 & 0.77 & 0.73 & 96.60 \\ 
			$(0.5,1.2)$ & -1.41 & -1.46 & 0.70 & 0.66 & 94.20 & -1.46 & 0.78 & 0.72 & 94.00 \\ 
			$(1,1.2)$  & -1.53 & -1.58 & 0.70 & 0.65 & 94.80 & -1.58 & 0.77 & 0.72 & 94.00 \\ 
			$(0.5,1.5)$ & -1.61 & -1.65 & 0.70 & 0.66 & 95.00 & -1.66 & 0.78 & 0.75 & 95.60 \\ 
			$(1,1.5)$ & -1.70 & -1.75 & 0.70 & 0.66 & 95.00 & -1.75 & 0.78 & 0.74 & 96.40 \\ 
			\hline
			\multicolumn{4}{r}{} & 	\multicolumn{2}{l}{$\beta_1$} & \multicolumn{4}{r}{} \\	
			\hline
			$(0.5,0.7)$ &  \multirow{6}{*}{2.00} & 2.04 & 0.78 & 0.71 & 97.00 & 2.05 & 0.80 & 0.75 & 95.60 \\ 
			$(1,0.7)$ &   & 2.03 & 0.73 & 0.68 & 95.60 & 2.04 & 0.80 & 0.75 & 96.40 \\ 
			$(0.5,1.2)$ &  & 2.06 & 0.71 & 0.66 & 95.00 & 2.07 & 0.79 & 0.73 & 94.80 \\ 
			$(1,1.2)$ &  &  2.06 & 0.69 & 0.65 & 95.60 & 2.07 & 0.77 & 0.72 & 96.00 \\ 
			$(0.5,1.5)$ &  & 2.06 & 0.70 & 0.66 & 95.80 & 2.07 & 0.78 & 0.74 & 95.40 \\ 
			$(1,1.5)$ &  & 2.06 & 0.69 & 0.65 & 94.80 & 2.06 & 0.78 & 0.74 & 95.80 \\ 
			\hline
		\end{tabular}
	\end{subtable}
	\hfill
	\hfill\\
	\hfill\\
	\begin{subtable}{1\textwidth}
		\centering
		\subcaption{Independent bivariate censoring and a modification of Lin and Ying}
		\begin{tabular}{llrrrrrrrr}
			\hline
			\multicolumn{2}{r}{} & 	\multicolumn{4}{c}{Dabrowska} & \multicolumn{4}{c}{Lin and Ying} \\	
			\hline
			\multicolumn{4}{r}{} & 	\multicolumn{2}{l}{$\beta_0$} & \multicolumn{4}{r}{} \\	
			\hline
			Time point & True value	& $\hat{\beta}$ & sd & se & cov (\%) & $\hat{\beta}$ & sd & se & cov (\%) \\ 
			\hline
			$(0.5,0.7)$  & -0.96 & -0.96 & 0.68 & 0.66 & 95.20 & -1.00 & 0.81 & 0.75 & 96.20 \\ 
			$(1,0.7)$ & -1.13 & -1.13 & 0.69 & 0.65 & 95.20 & -1.16 & 0.83 & 0.79 & 95.40 \\ 
			$(0.5,1.2)$ &  -1.41 & -1.46 & 0.65 & 0.63 & 94.00 & -1.47 & 0.75 & 0.73 & 95.00 \\ 
			$(1,1.2)$ &   -1.53 & -1.57 & 0.66 & 0.63 & 94.00 & -1.60 & 0.84 & 0.80 & 95.60 \\ 
			$(0.5,1.5)$ &  -1.61 & -1.67 & 0.66 & 0.63 & 94.60 & -1.69 & 0.77 & 0.73 & 94.60 \\ 
			$(1,1.5)$ &   -1.70 & -1.76 & 0.67 & 0.63 & 95.00 & -1.78 & 0.85 & 0.82 & 94.80 \\ 
			\hline
			\multicolumn{4}{r}{} & 	\multicolumn{2}{l}{$\beta_1$} & \multicolumn{4}{r}{} \\	
			\hline
			$(0.5,0.7)$ &  \multirow{6}{*}{2.00} & 2.01 & 0.70 & 0.69 & 95.40 & 2.06 & 0.88 & 0.80 & 95.80 \\ 
			$(1,0.7)$ &   &  2.01 & 0.71 & 0.67 & 95.00 & 2.05 & 0.86 & 0.83 & 95.20 \\ 
			$(0.5,1.2)$ &  & 2.05 & 0.65 & 0.64 & 94.80 & 2.07 & 0.77 & 0.74 & 95.20 \\ 
			$(1,1.2)$ &  & 2.05 & 0.66 & 0.63 & 95.00 & 2.09 & 0.86 & 0.82 & 95.60 \\ 
			$(0.5,1.5)$ &  & 2.06 & 0.66 & 0.63 & 94.60 & 2.09 & 0.78 & 0.73 & 95.20 \\ 
			$(1,1.5)$ &  & 2.06 & 0.66 & 0.62 & 94.60 & 2.09 & 0.87 & 0.82 & 94.80 \\ 
			\hline
		\end{tabular}
	\end{subtable}
	\hfill\\
	\caption{Top: univariate censoring. Bottom: independent bivariate censoring. Mean, sd, se, and coverage of the estimated regression parameters $\beta_0$ and $\beta_1$, for both the Dabrowska estimator, and the Lin and Ying estimator. These estimates are based on six separate regression models for the six time points, and on 500 simulations from the bivariate logistic model. The estimate $\hat{\beta}$ is the average of 500 parameter estimates; sd, square root of empirical variance of 500 simulation replications; se, square root of average estimated variance based on the ordinary sandwich estimator.
	}\label{table:logistic_6_models}
\end{table}

Web Figure~\ref{fig:MAEs_correct_model_6_reg} and Web Figure~\ref{fig:MAEs_correct_model_6_reg_bivar_cens} present the boxplot of the MAEs (top) and the standardized MAEs (bottom) for the univariate censoring scenario and the bivariate censoring scenario, respectively. For both censoring mechanisms, the relatively low MAEs show that the bivariate pseudo-observations approach estimates quite well the conditional joint survival probability, with a slight preference to the Dabrowska estimator over the Lin and Ying estimator.
Note also that the standardized MAEs are by definition larger than the MAEs, as we are dividing the MAEs with the joint survival probability (a value between 0 and 1).

\begin{figure}[p]
	\centering
	\begin{subfigure}[b]{1\textwidth}
		\centering
		\includegraphics[width=1\textwidth]{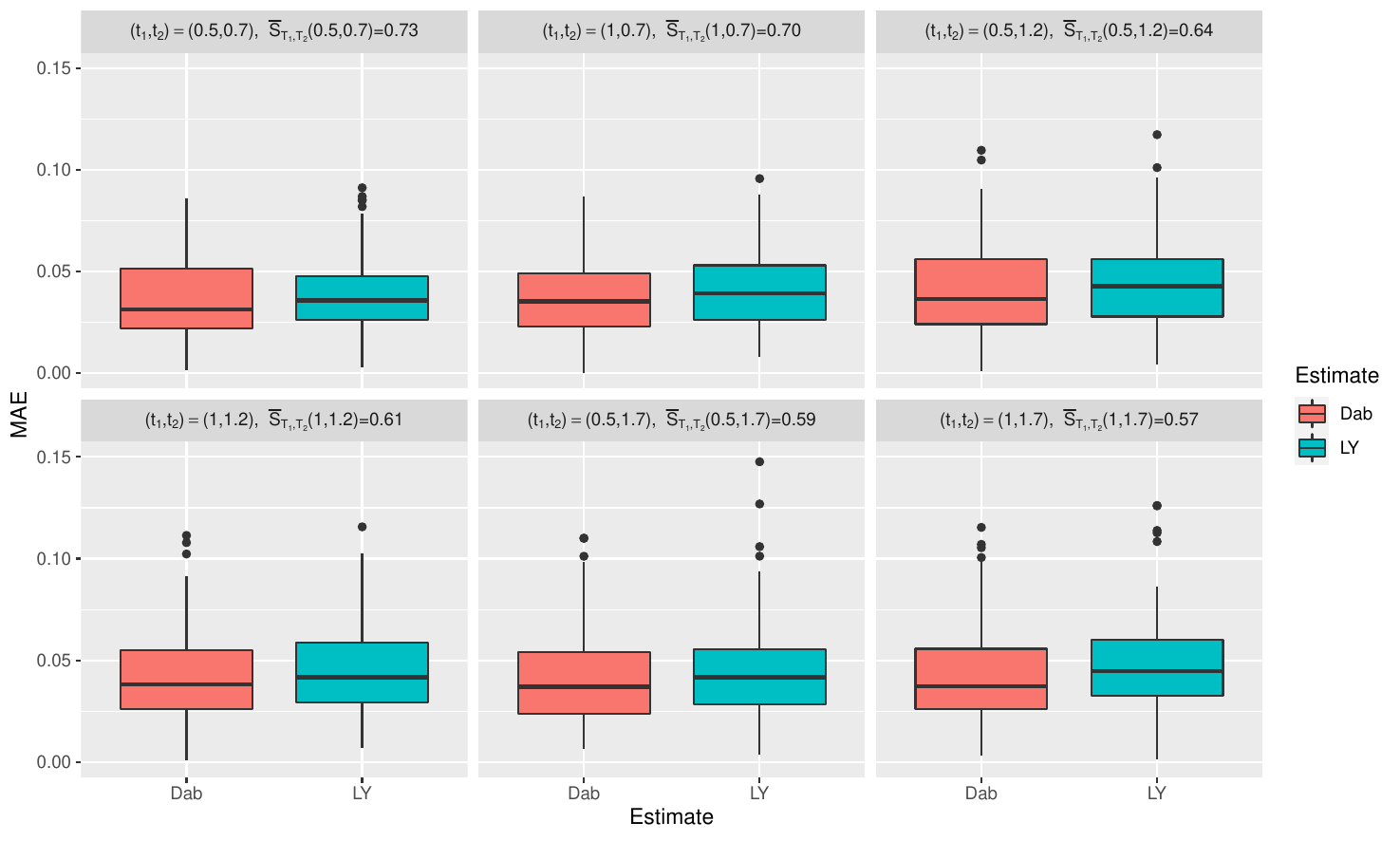}
		\caption{MAEs} \label{fig:MAES_1_reg_logis_data_m=500_n=200}
	\end{subfigure}
	\hfill
	\begin{subfigure}[b]{1\textwidth}
		\centering
		\includegraphics[width=1\textwidth]{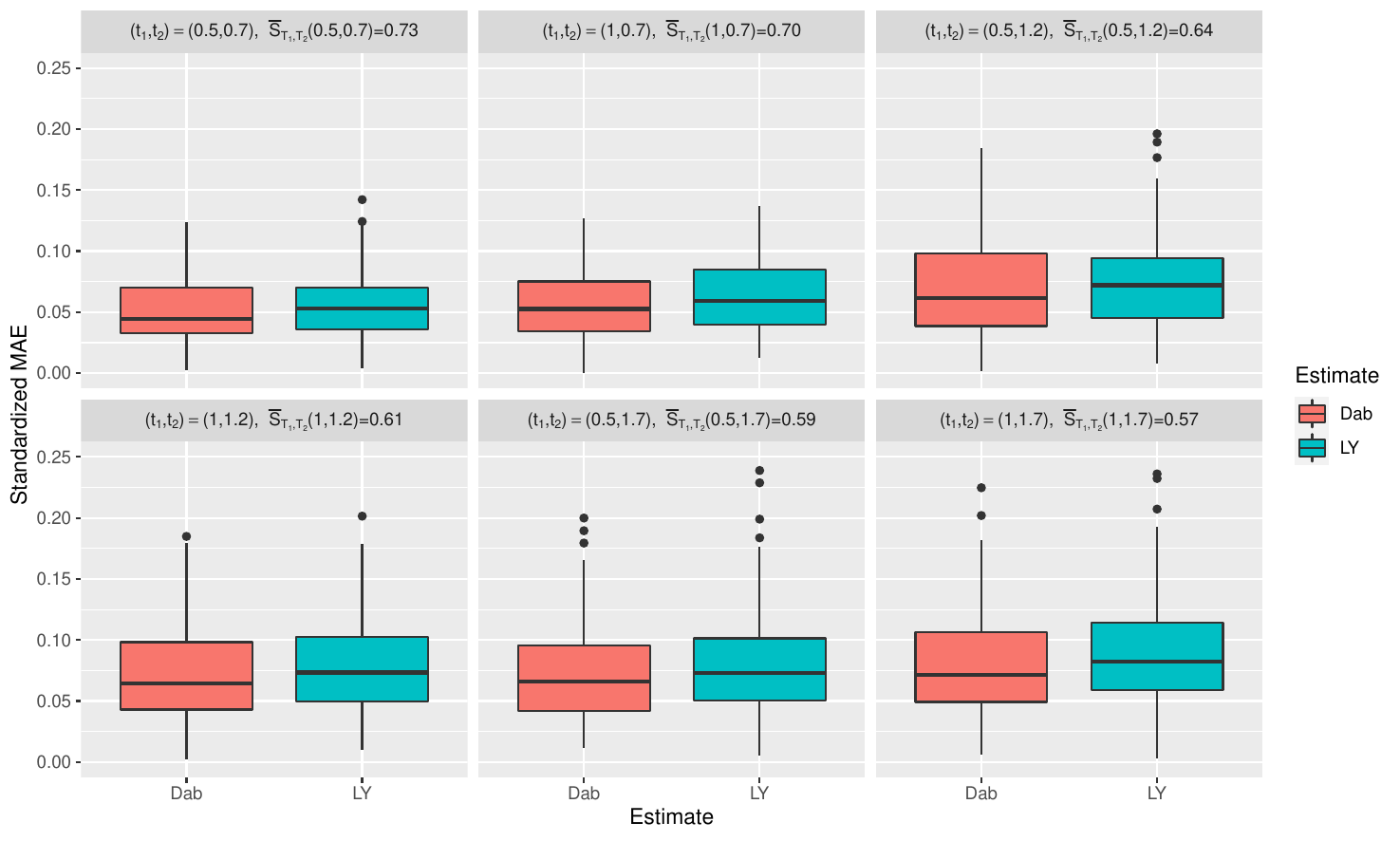}
		\caption{Standardized MAEs} \label{fig:SMAES_1_reg_logis_data_m=500_n=200}
	\end{subfigure}
	\caption{Bivariate logistic times, univariate censoring, $n=200$, six fixed time points, a single regression model. MAEs and standardized MAEs between the true joint survival of the bivariate logistic failure times, and the estimated survival that is based on a single regression model for all six time points, for both types of estimators and for 500 simulations from the univariate censoring scenario. The top row of each panel specifies the time point $(t_1^j,t_2^j)$, and the mean value of the true joint survival probability $\bar{S}\equiv \bar{S}_{T_1,T_2}(t_1^j,t_2^j)=\frac{1}{n}\sum_{i=1}^{n}S_{T_1,T_2}(t_1^j,t_2^j\mid Z_i)$.}
	\label{fig:MAEs_correct_model}
\end{figure}

\begin{figure}[p]
	\centering
	\begin{subfigure}[b]{1\textwidth}
		\centering
		\includegraphics[width=1\textwidth]{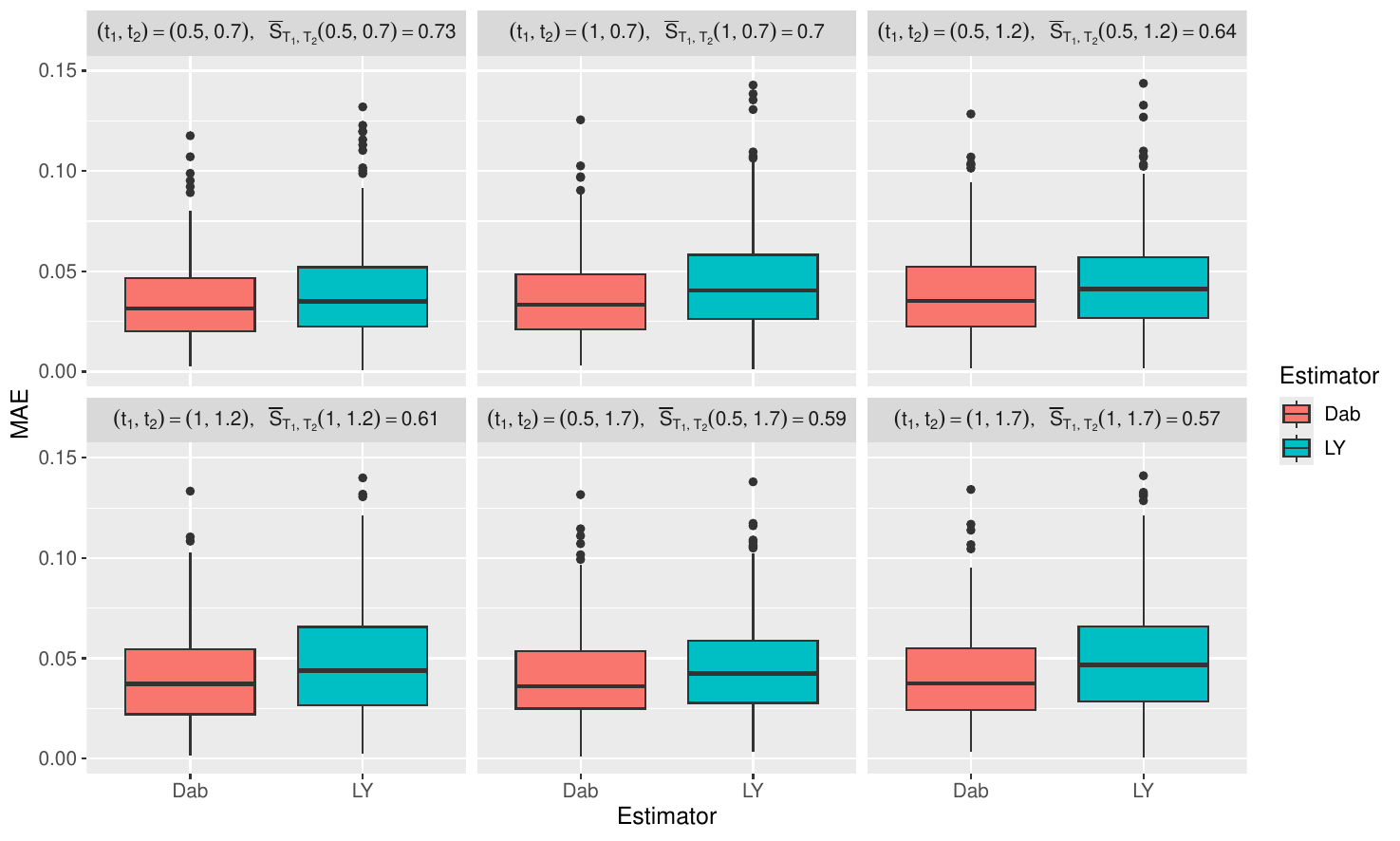}
		\caption{MAEs} \label{fig:MAES_1_reg_logis_data_m=500_n=200_bivar_cens}
	\end{subfigure}
	\hfill
	\begin{subfigure}[b]{1\textwidth}
		\centering
		\includegraphics[width=1\textwidth]{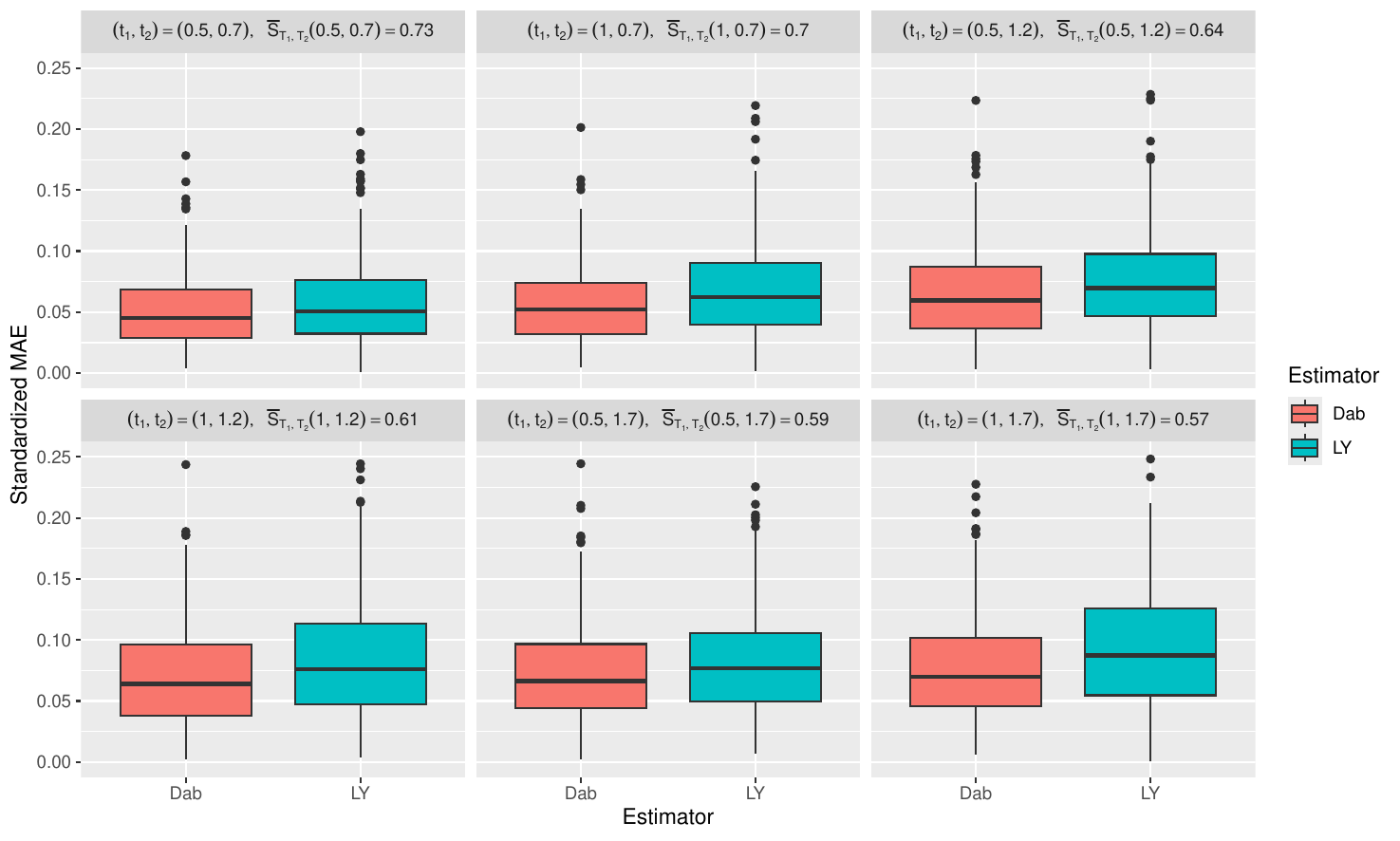}
		\caption{Standardized MAEs} \label{fig:SMAES_1_reg_logis_data_m=500_n=200_bivar_cens}
	\end{subfigure}
	\caption{Bivariate logistic times, bivariate censoring, $n=200$, six fixed time points, a single regression model. MAEs and standardized MAEs between the true joint survival of the bivariate logistic failure times, and the estimated survival that is based on a single regression model for all six time points, for both types of estimators and for 500 simulations from the bivariate censoring scenario. The top row of each panel specifies the time point $(t_1^j,t_2^j)$, and the mean value of the true joint survival probability $\bar{S}\equiv \bar{S}_{T_1,T_2}(t_1^j,t_2^j)=\frac{1}{n}\sum_{i=1}^{n}S_{T_1,T_2}(t_1^j,t_2^j\mid Z_i)$.}
	\label{fig:MAEs_correct_model_bivar_cens}
\end{figure}

\begin{figure}[p]
	\centering
	\includegraphics[width=1\textwidth]{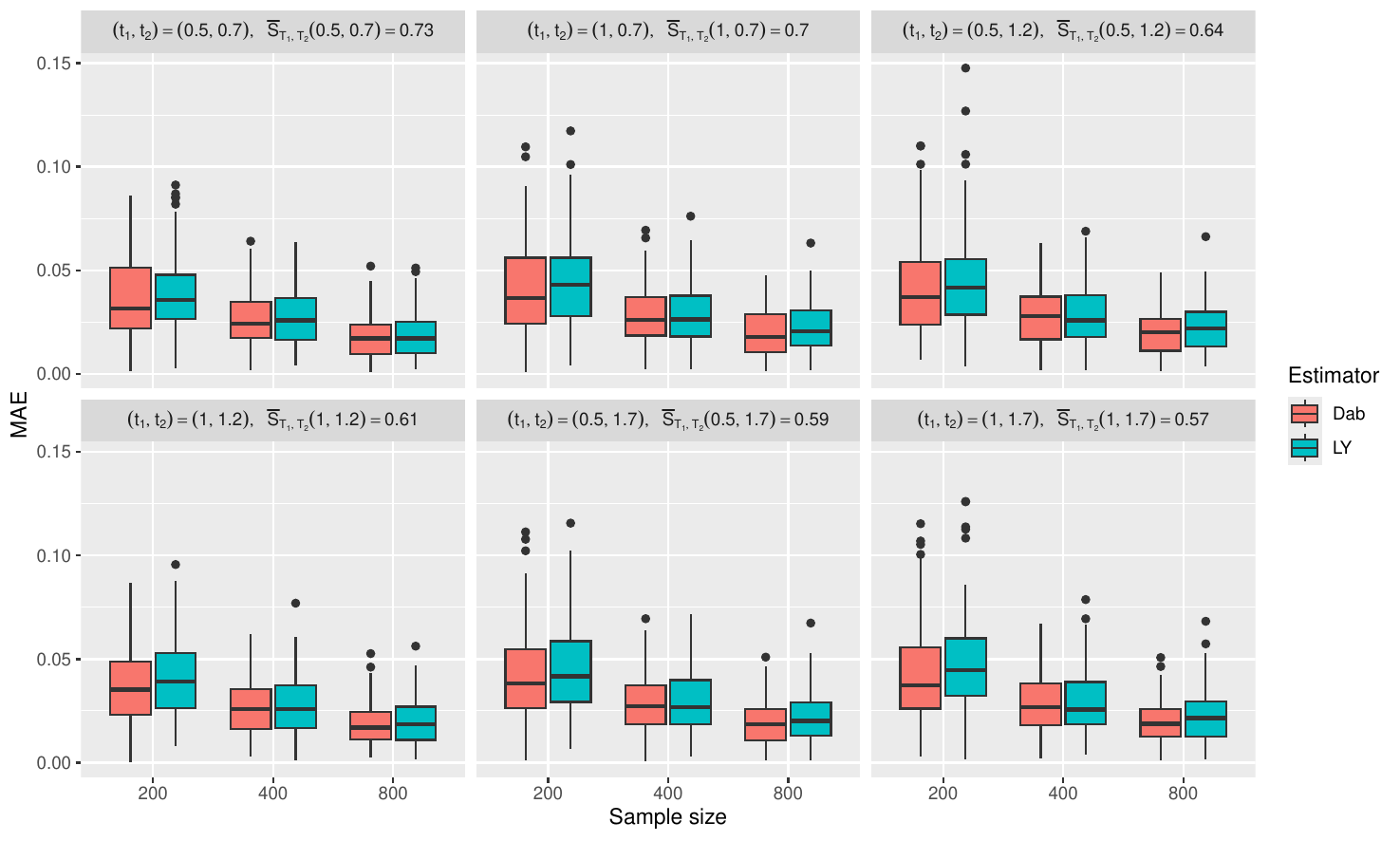}
	\caption{Bivariate logistic times, univariate censoring, $n=200,400,800$, six fixed time points, a single regression model. MAEs between the true joint survival of the bivariate logistic failure times, and the estimated survival that is based on a single regression model for all six time points, for both types of estimators and for 100 simulations for each sample size. The top row of each panel specifies the time point $(t_1^j,t_2^j)$, and the mean value of the true joint survival probability $\bar{S}\equiv \bar{S}_{T_1,T_2}(t_1^j,t_2^j)=\frac{1}{n}\sum_{i=1}^{n}S_{T_1,T_2}(t_1^j,t_2^j\mid Z_i)$.}
	\label{fig:MAEs_sample_size}
\end{figure}

\begin{figure}[p]
	\centering
	\begin{subfigure}[b]{1\textwidth}
		\centering
		\includegraphics[width=1\textwidth]{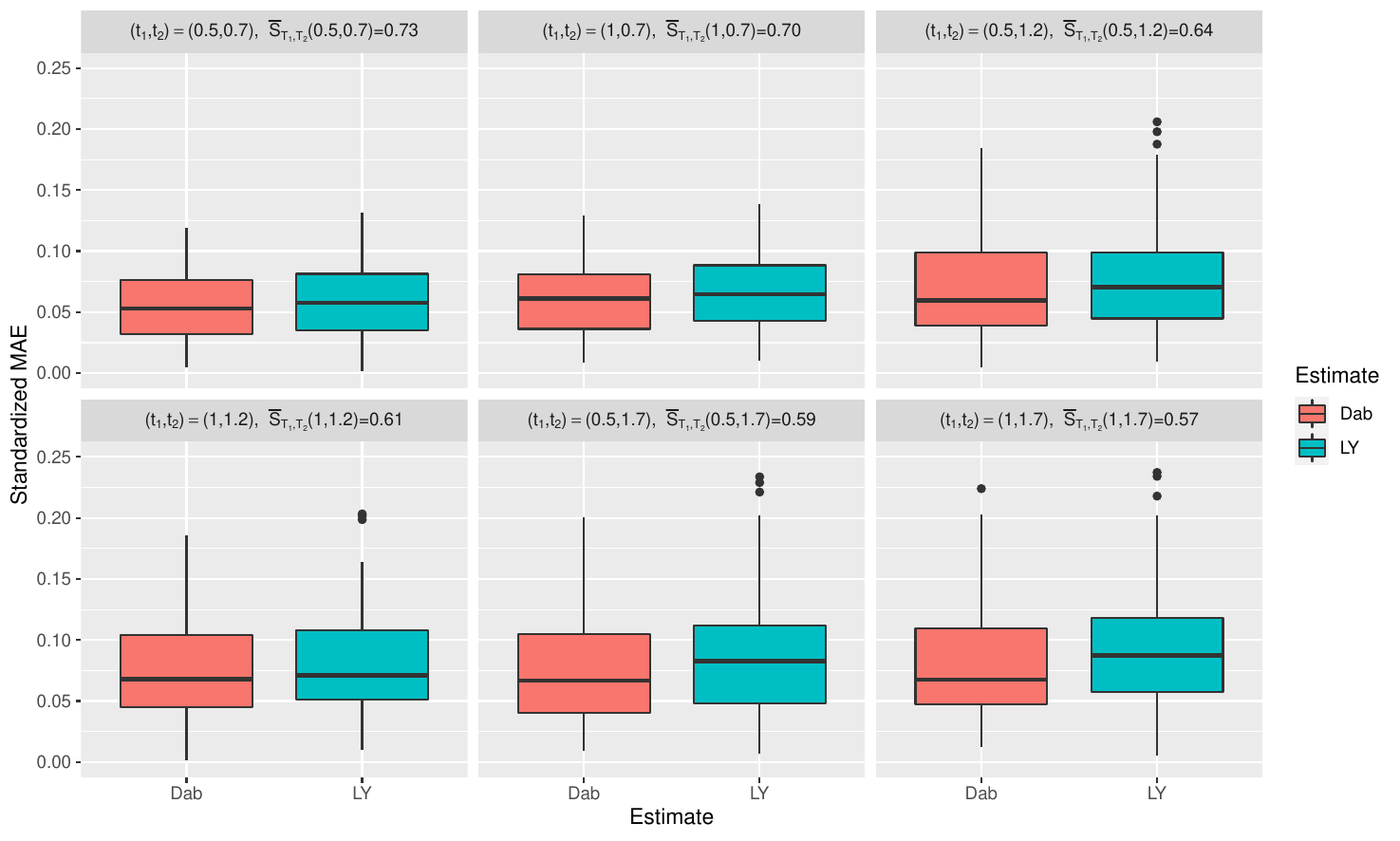}
		\caption{MAEs} \label{fig:MAES_6_reg_logis_data_m=500_n=200}
	\end{subfigure}
	\hfill
	\begin{subfigure}[b]{1\textwidth}
		\centering
		\includegraphics[width=1\textwidth]{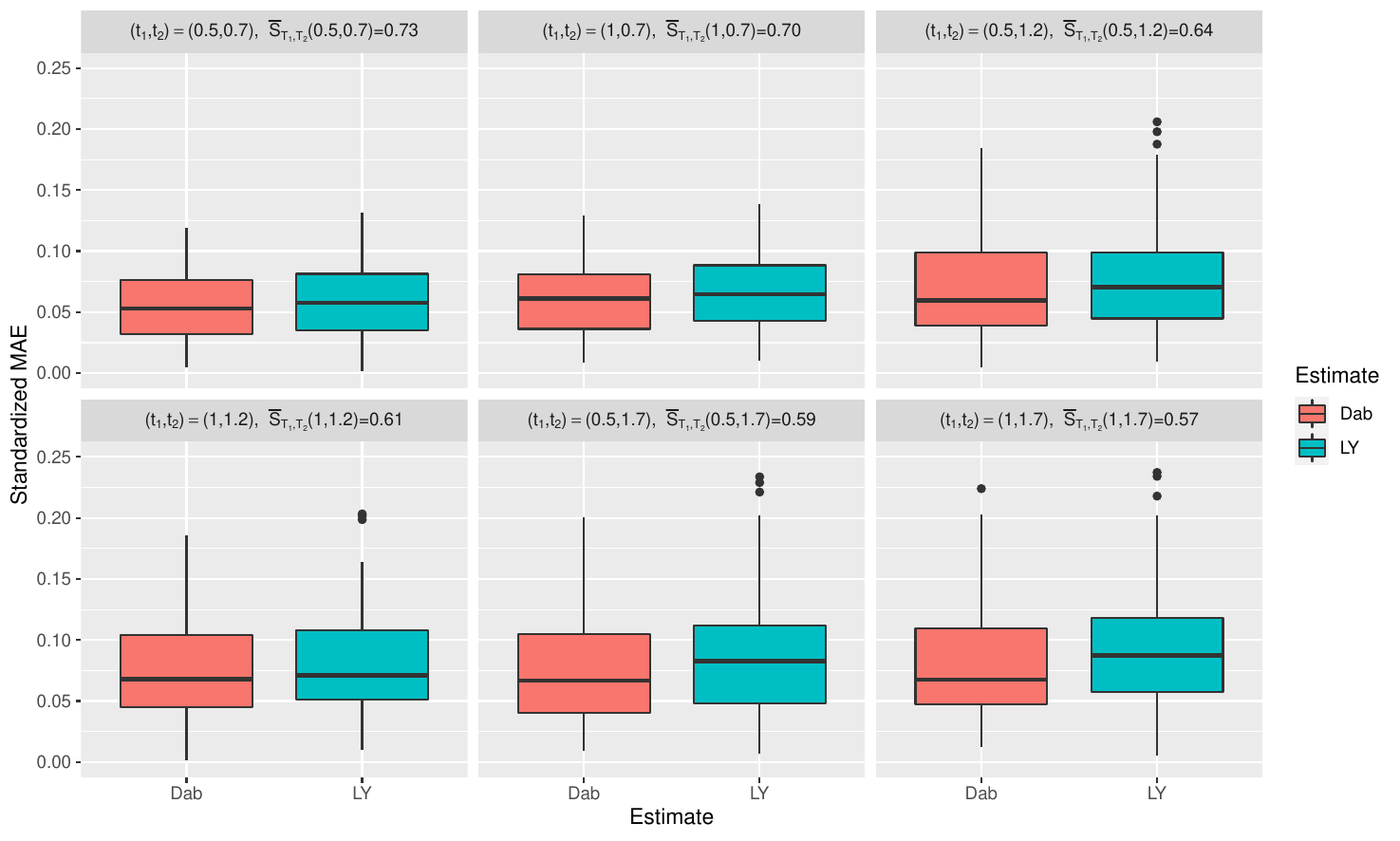}
		\caption{Standardized MAEs} \label{fig:SMAES_6_reg_logis_data_m=500_n=200}
	\end{subfigure}
	\caption{Bivariate logistic times, univariate censoring, $n=200$, six fixed time points, six separate regression models. MAEs and standardized MAEs between the true joint survival of the bivariate logistic failure times, and the estimated survival that is based on six separate regression models (one for each time point), for both types of estimators and for 500 simulations from the univariate censoring scenario. The top row of each panel specifies the time point $(t_1^j,t_2^j)$, and the mean value of the true joint survival probability $\bar{S}\equiv \bar{S}_{T_1,T_2}(t_1^j,t_2^j)=\frac{1}{n}\sum_{i=1}^{n}S_{T_1,T_2}(t_1^j,t_2^j\mid Z_i)$.}
	\label{fig:MAEs_correct_model_6_reg}
\end{figure}

\begin{figure}[p]
	\centering
	\begin{subfigure}[b]{1\textwidth}
		\centering
		\includegraphics[width=1\textwidth]{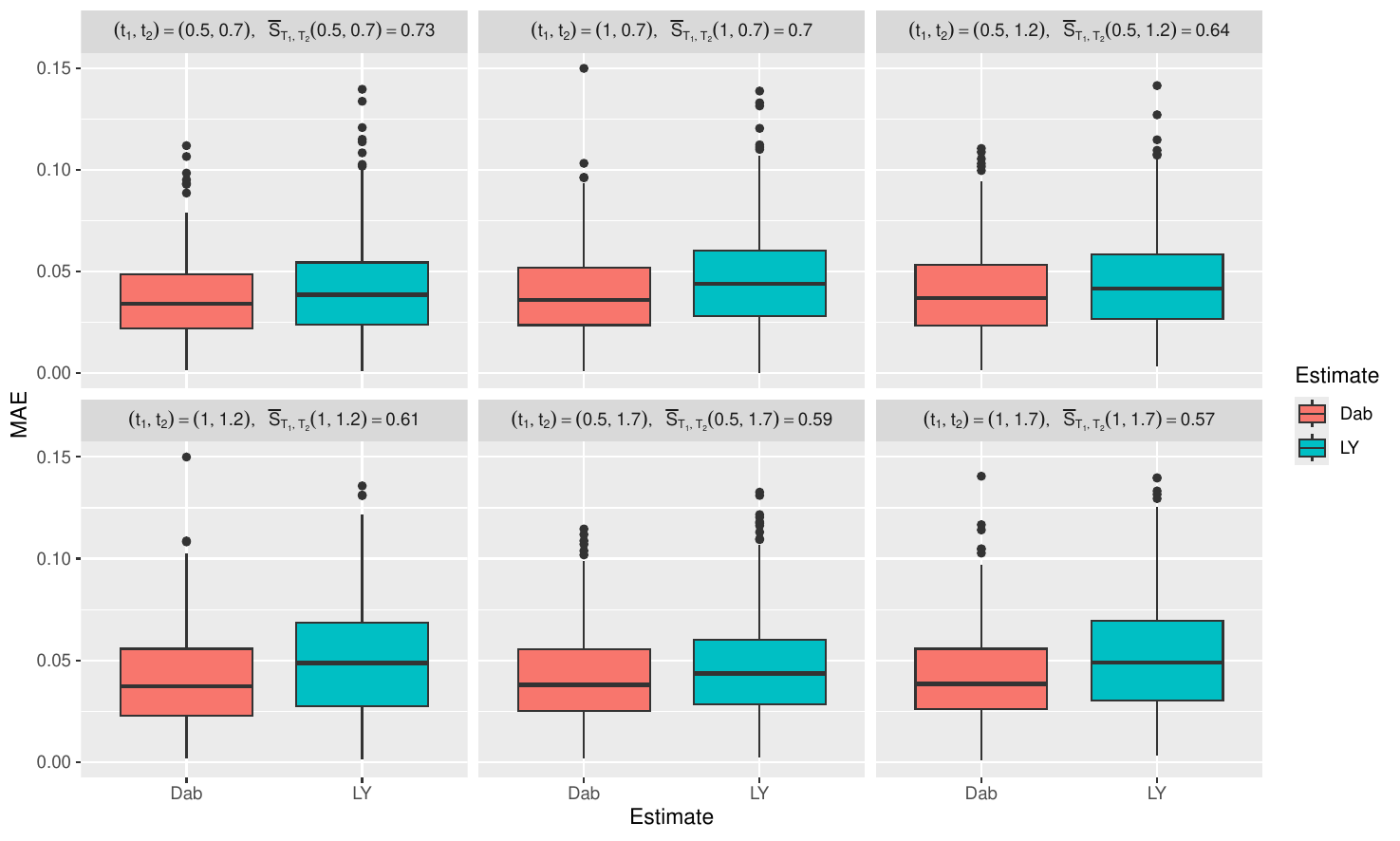}
		\caption{MAEs} \label{fig:MAES_6_reg_logis_data_m=500_n=200_bivar_cens}
	\end{subfigure}
	\hfill
	\begin{subfigure}[b]{1\textwidth}
		\centering
		\includegraphics[width=1\textwidth]{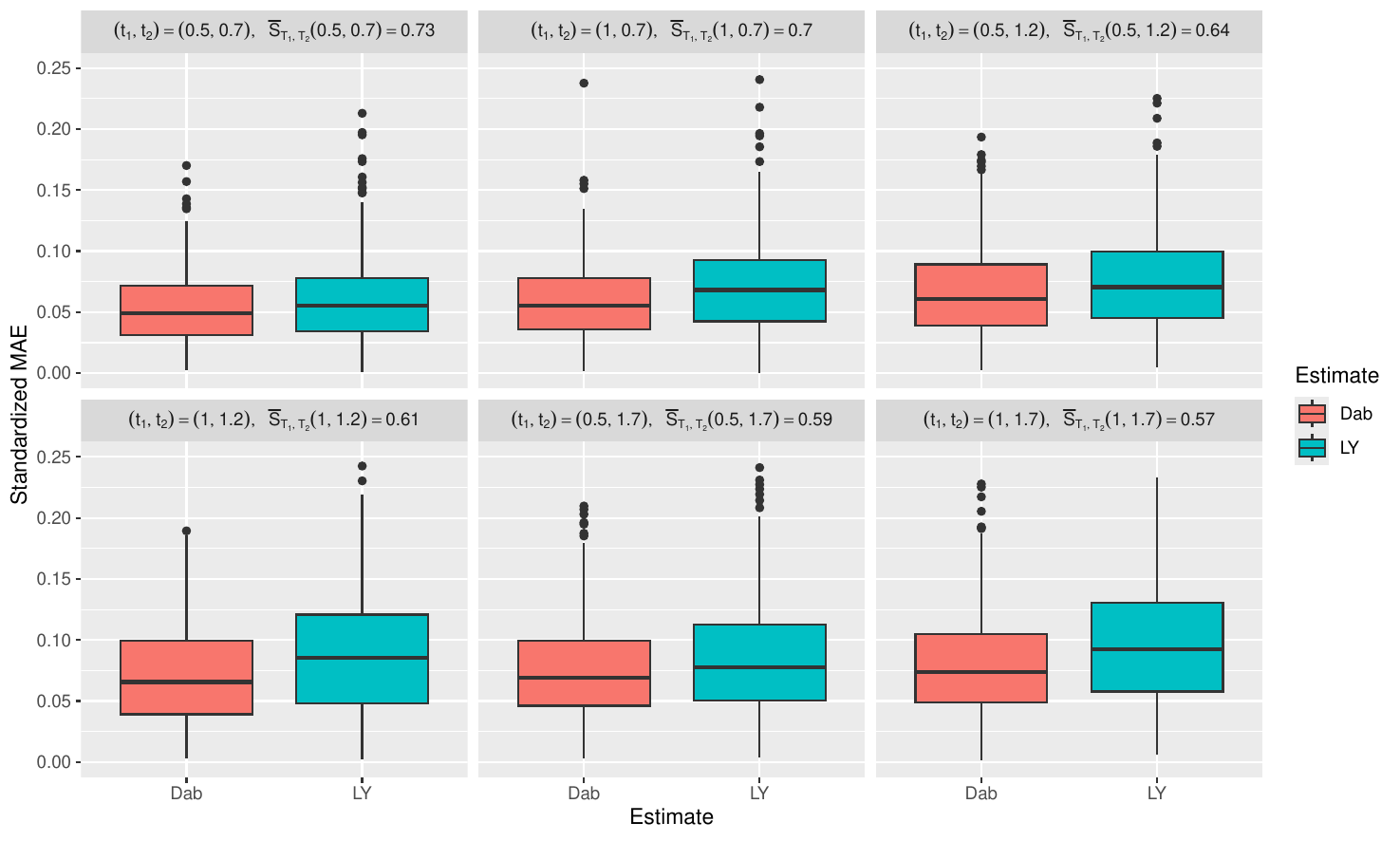}
		\caption{Standardized MAEs} \label{fig:SMAES_6_reg_logis_data_m=500_n=200_bivar_cens}
	\end{subfigure}
	\caption{Bivariate logistic times, bivariate censoring, $n=200$, six fixed time points, six separate regression models. MAEs and standardized MAEs between the true joint survival of the bivariate logistic failure times, and the estimated survival that is based on six separate regression models (one for each time point), for both types of estimators and for 500 simulations from the bivariate censoring scenario. The top row of each panel specifies the time point $(t_1^j,t_2^j)$, and the mean value of the true joint survival probability $\bar{S}\equiv \bar{S}_{T_1,T_2}(t_1^j,t_2^j)=\frac{1}{n}\sum_{i=1}^{n}S_{T_1,T_2}(t_1^j,t_2^j\mid Z_i)$.}
	\label{fig:MAEs_correct_model_6_reg_bivar_cens}
\end{figure}

\begin{figure}[p]
	\centering
	\begin{subfigure}[b]{1\textwidth}
		\centering
		\includegraphics[width=1\textwidth]{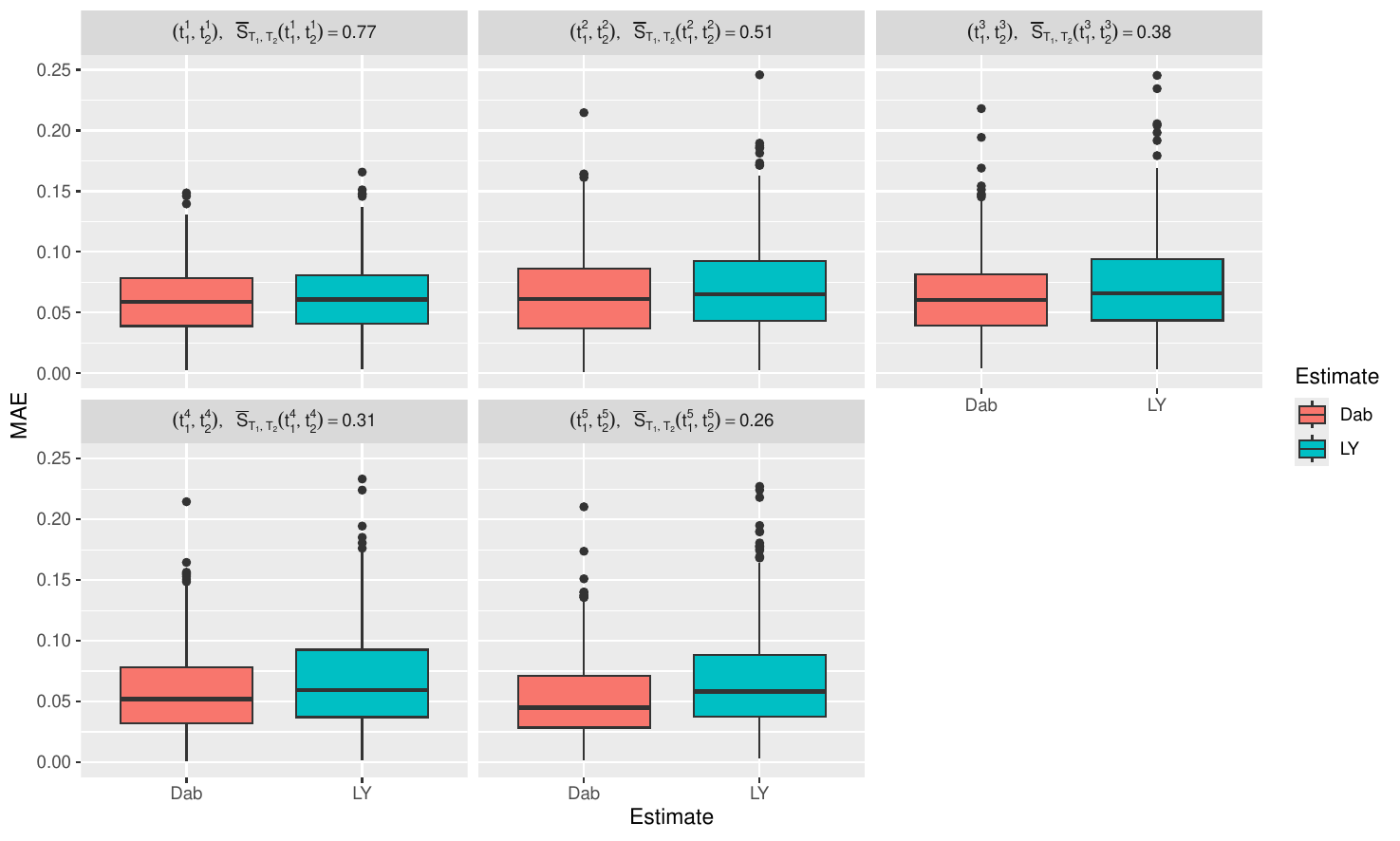}
		\caption{MAEs} \label{fig:MAEs_logitistic_data_5k_univ}
	\end{subfigure}
	\hfill
	\begin{subfigure}[b]{1\textwidth}
		\centering
		\includegraphics[width=1\textwidth]{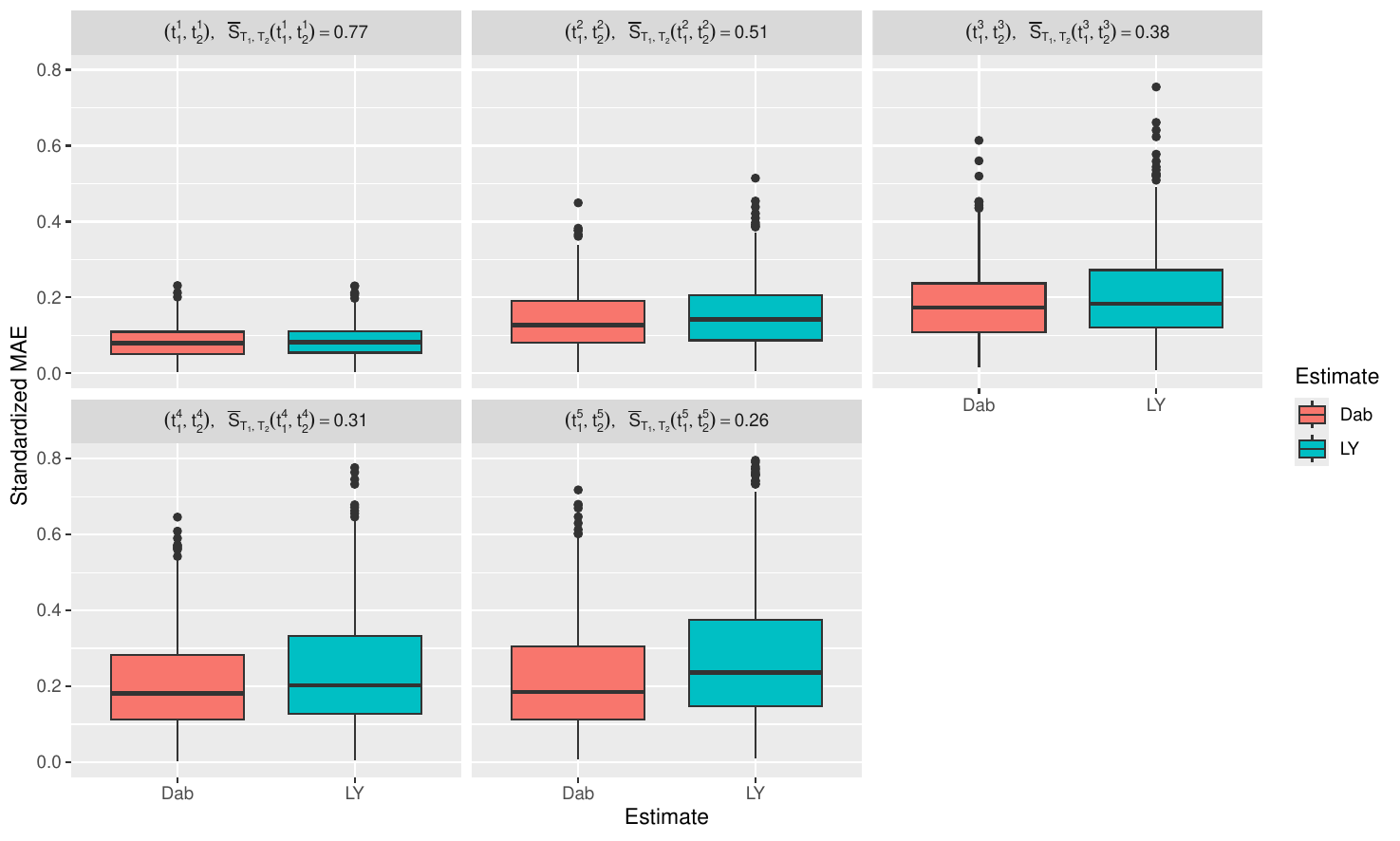}
		\caption{Standardized MAEs} \label{fig:SMAEs_logistic_data_5k_univ}
	\end{subfigure}
	\caption{Bivariate logistic times, univariate censoring, $n=200$, $k=5$ selected time points, a single regression model .MAEs and standardized MAEs between the true joint survival of the bivariate logistic failure times, and the estimated survival that is based on a single regression model for $k=5$ time points, for both types of estimators and for $m=500$ simulations from the univariate censoring scenario. The top row of each panel specifies the time point $(t_1^j,t_2^j)$ (which is different for each iteration), where $j=1,\ldots, k$ and the mean value of the true joint survival probability is $\bar{S}\equiv \bar{S}_{T_1,T_2}(t_1^j,t_2^j)=\frac{1}{nm}\sum_{i=1}^{n}\sum_{l=1}^{m}S_{T_1,T_2}(t_{1l}^j,t_{2l}^j\mid Z_i)$.}
	\label{fig:MAEs_correct_model_5k_univ}
\end{figure}

\begin{figure}[p]
	\centering
	\begin{subfigure}[b]{1\textwidth}
		\centering
		\includegraphics[width=1\textwidth]{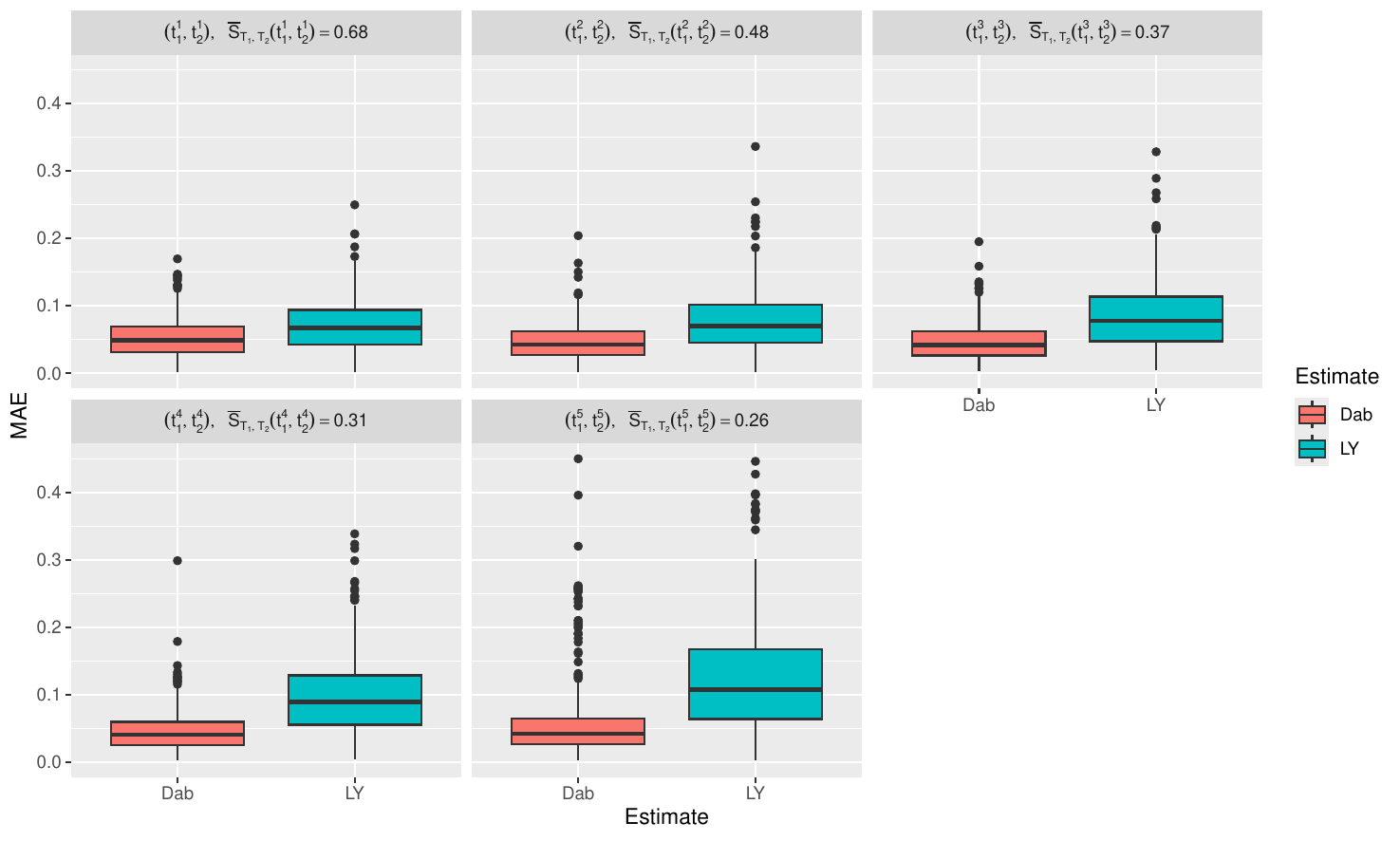}
		\caption{MAEs} \label{fig:MAEs_logitistic_data_5k_bivar_cens}
	\end{subfigure}
	\hfill
	\begin{subfigure}[b]{1\textwidth}
		\centering
		\includegraphics[width=1\textwidth]{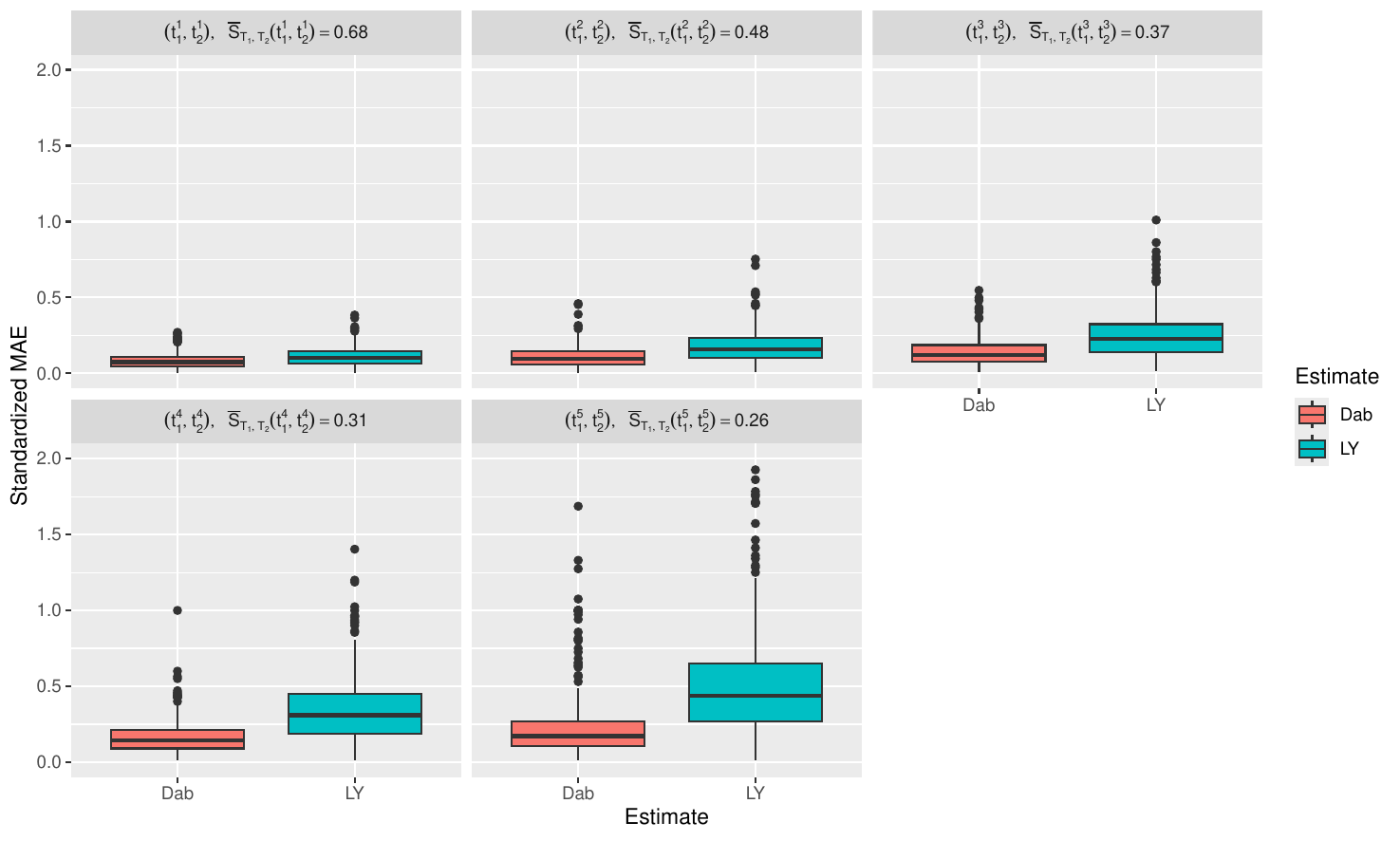}
		\caption{Standardized MAEs} \label{fig:SMAEs_logistic_data_5k_bivar_cens}
	\end{subfigure}
	\caption{Bivariate logistic times, bivariate censoring, $n=200$, $k=5$ selected time points, a single regression model .MAEs and standardized MAEs between the true joint survival of the bivariate logistic failure times, and the estimated survival that is based on a single regression model for $k=5$ time points, for both types of estimators and for $m=500$ simulations from the bivariate censoring scenario. The top row of each panel specifies the time point $(t_1^j,t_2^j)$ (which is different for each iteration), where $j=1,\ldots, k$ and the mean value of the true joint survival probability is $\bar{S}\equiv \bar{S}_{T_1,T_2}(t_1^j,t_2^j)=\frac{1}{nm}\sum_{i=1}^{n}\sum_{l=1}^{m}S_{T_1,T_2}(t_{1l}^j,t_{2l}^j\mid Z_i)$.}
	\label{fig:MAEs_correct_model_5k_bivar_cens}
\end{figure}

\begin{figure}[p]
	\centering
	\begin{subfigure}[b]{1\textwidth}
		\centering
		\includegraphics[width=1\textwidth]{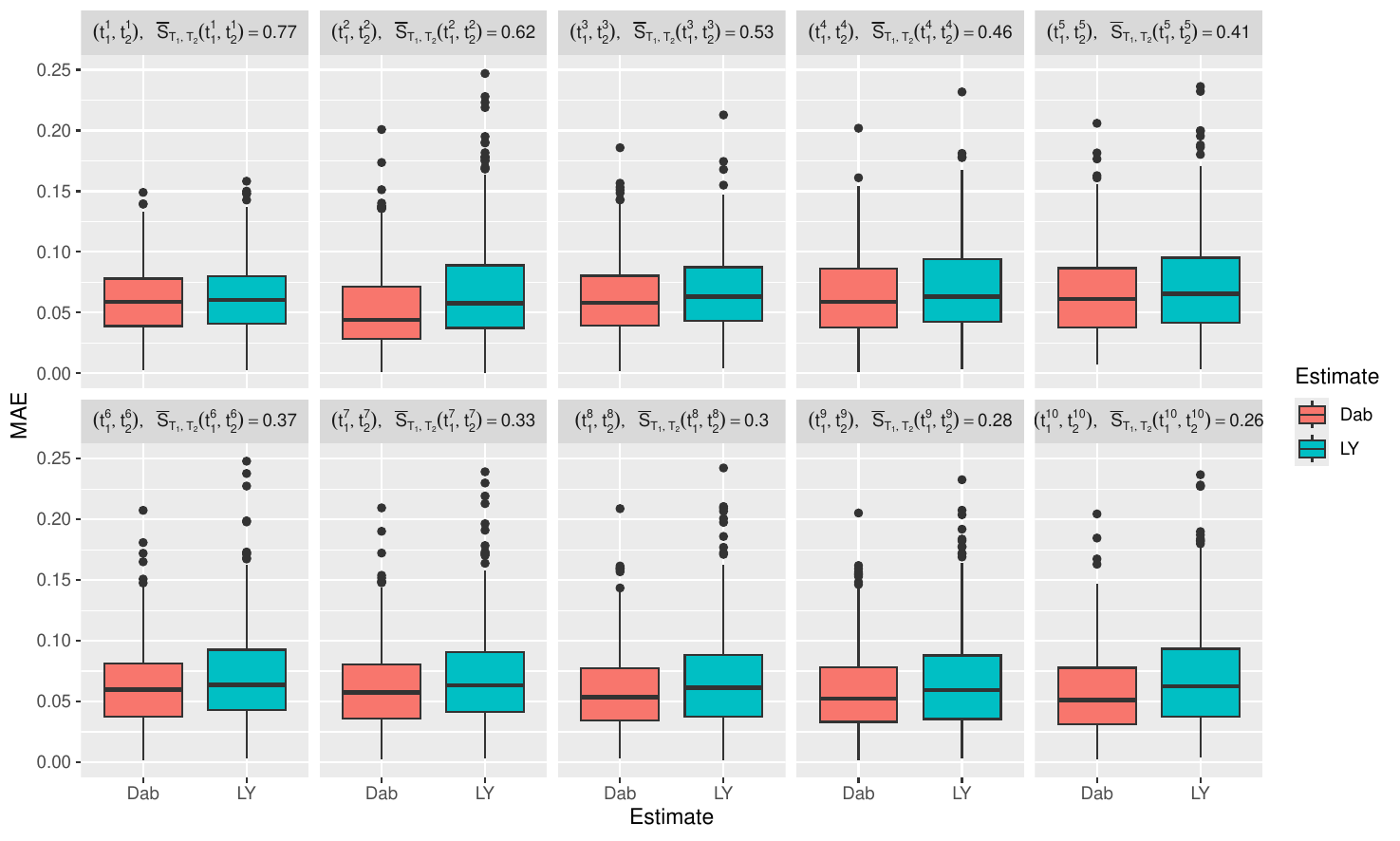}
		\caption{MAEs} \label{fig:MAEs_logitistic_data_10k_univ}
	\end{subfigure}
	\hfill
	\begin{subfigure}[b]{1\textwidth}
		\centering
		\includegraphics[width=1\textwidth]{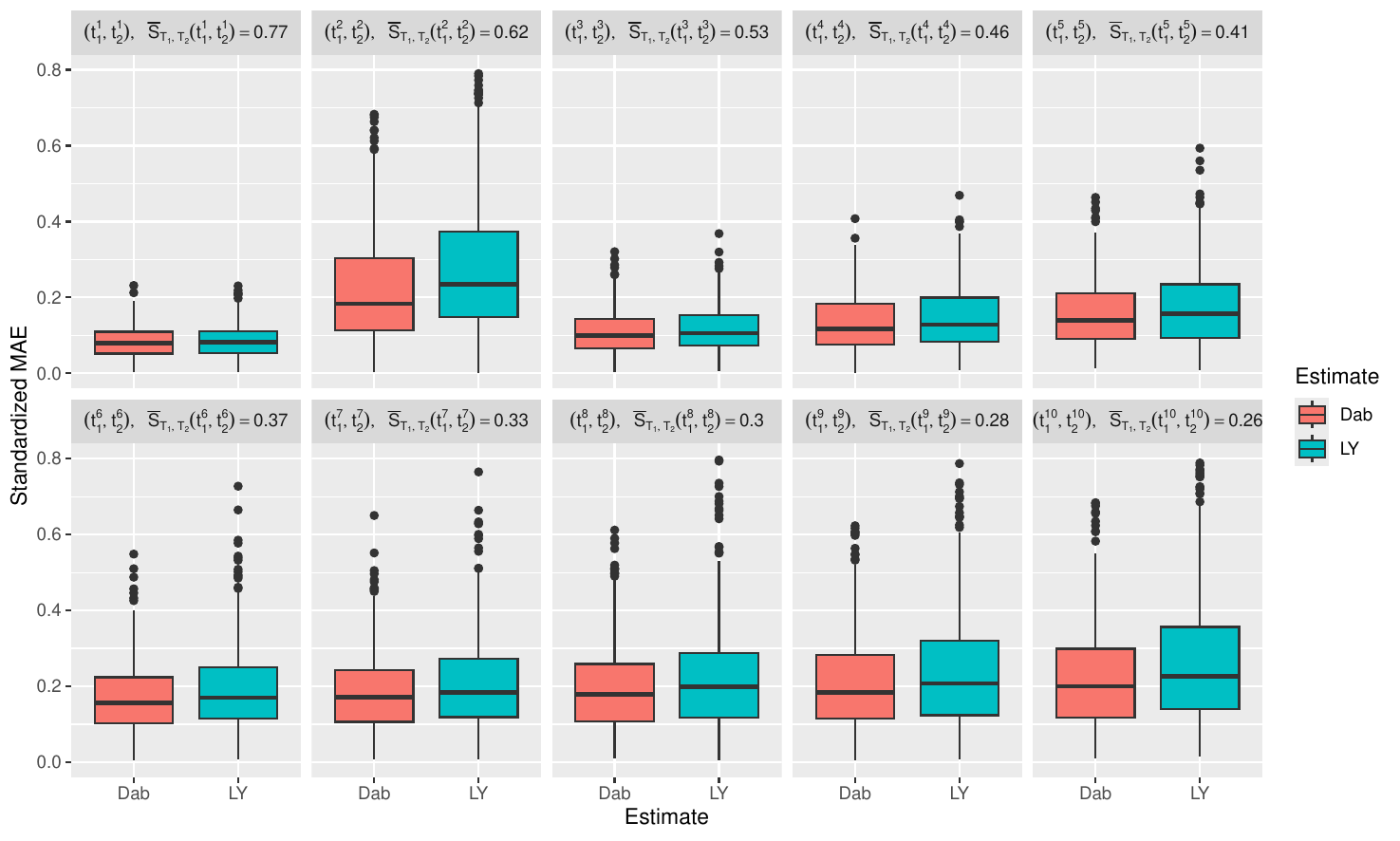}
		\caption{Standardized MAEs} \label{fig:SMAEs_logistic_data_10k_univ}
	\end{subfigure}
	\caption{Bivariate logistic times, univariate censoring, $n=200$, $k=10$ selected time points, a single regression model .MAEs and standardized MAEs between the true joint survival of the bivariate logistic failure times, and the estimated survival that is based on a single regression model for $k=105$ time points, for both types of estimators and for $m=500$ simulations from the univariate censoring scenario. The top row of each panel specifies the time point $(t_1^j,t_2^j)$ (which is different for each iteration), where $j=1,\ldots, k$ and the mean value of the true joint survival probability is $\bar{S}\equiv \bar{S}_{T_1,T_2}(t_1^j,t_2^j)=\frac{1}{nm}\sum_{i=1}^{n}\sum_{l=1}^{m}S_{T_1,T_2}(t_{1l}^j,t_{2l}^j\mid Z_i)$.}
	\label{fig:MAEs_correct_model_10k_univ}
\end{figure}

\begin{figure}[p]
	\centering
	\begin{subfigure}[b]{1\textwidth}
		\centering
		\includegraphics[width=1\textwidth]{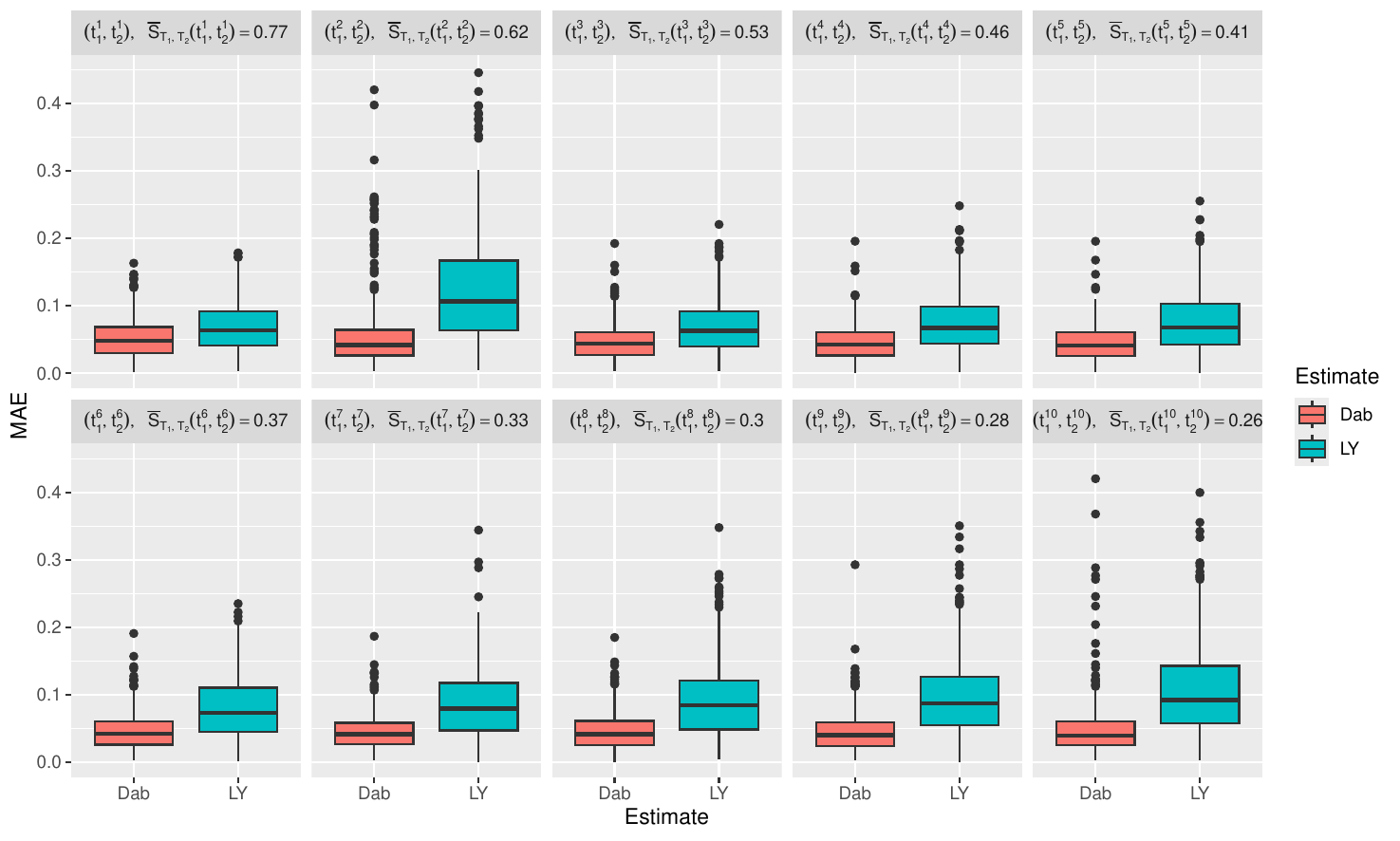}
		\caption{MAEs} \label{fig:MAEs_logitistic_data_10k_bivar_cens}
	\end{subfigure}
	\hfill
	\begin{subfigure}[b]{1\textwidth}
		\centering
		\includegraphics[width=1\textwidth]{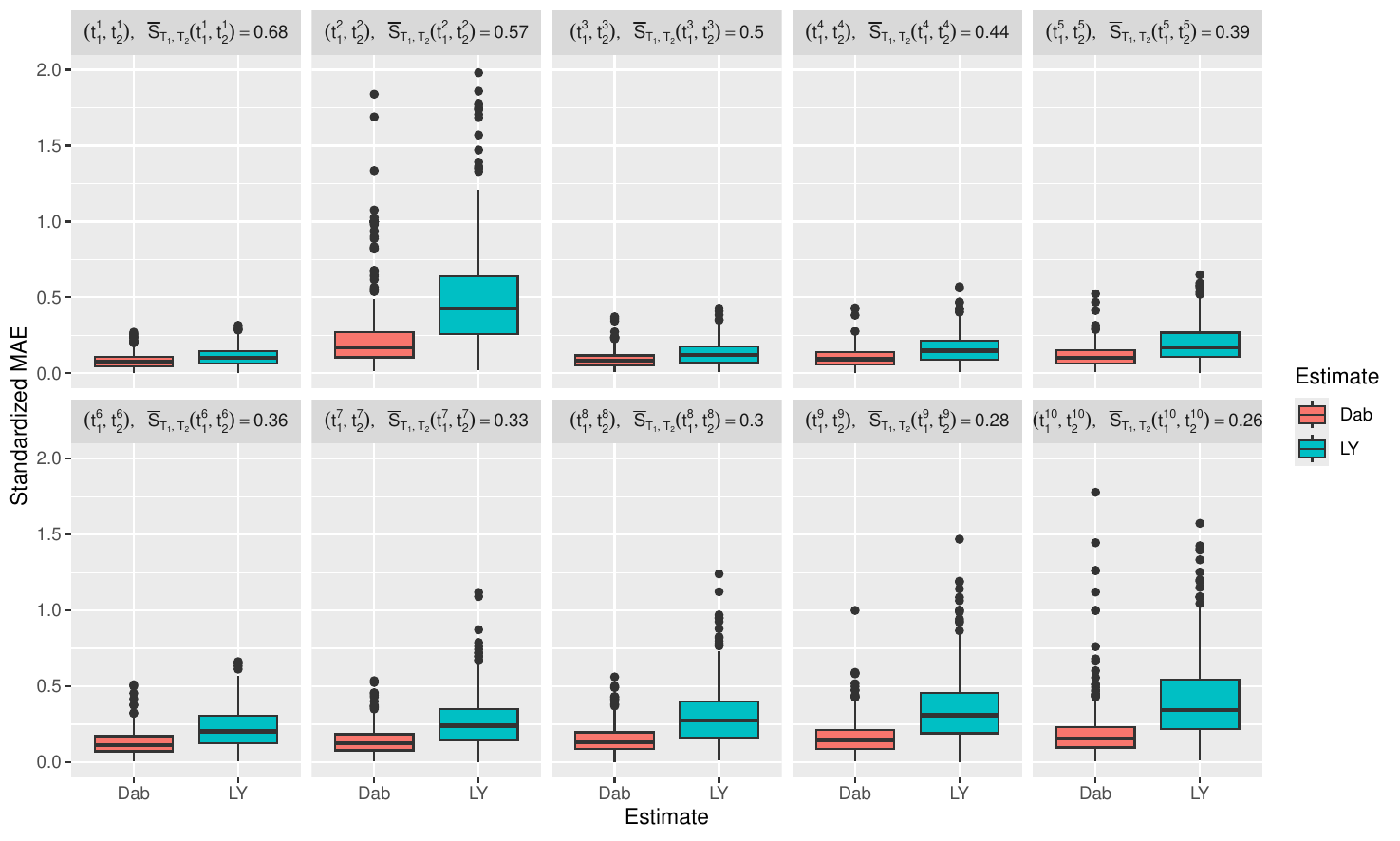}
		\caption{Standardized MAEs} \label{fig:SMAEs_logistic_data_10k_bivar_cens}
	\end{subfigure}
	\caption{Bivariate logistic times, bivariate censoring, $n=200$, $k=10$ selected time points, a single regression model .MAEs and standardized MAEs between the true joint survival of the bivariate logistic failure times, and the estimated survival that is based on a single regression model for $k=10$ time points, for both types of estimators and for $m=500$ simulations from the bivariate censoring scenario. The top row of each panel specifies the time point $(t_1^j,t_2^j)$ (which is different for each iteration), where $j=1,\ldots, k$ and the mean value of the true joint survival probability is $\bar{S}\equiv \bar{S}_{T_1,T_2}(t_1^j,t_2^j)=\frac{1}{nm}\sum_{i=1}^{n}\sum_{l=1}^{m}S_{T_1,T_2}(t_{1l}^j,t_{2l}^j\mid Z_i)$.}
	\label{fig:MAEs_correct_model_10k_bivar_cens}
\end{figure}

\begin{figure}[p]
	\centering
	\begin{subfigure}[b]{1\textwidth}
		\centering
		\includegraphics[width=1\textwidth]{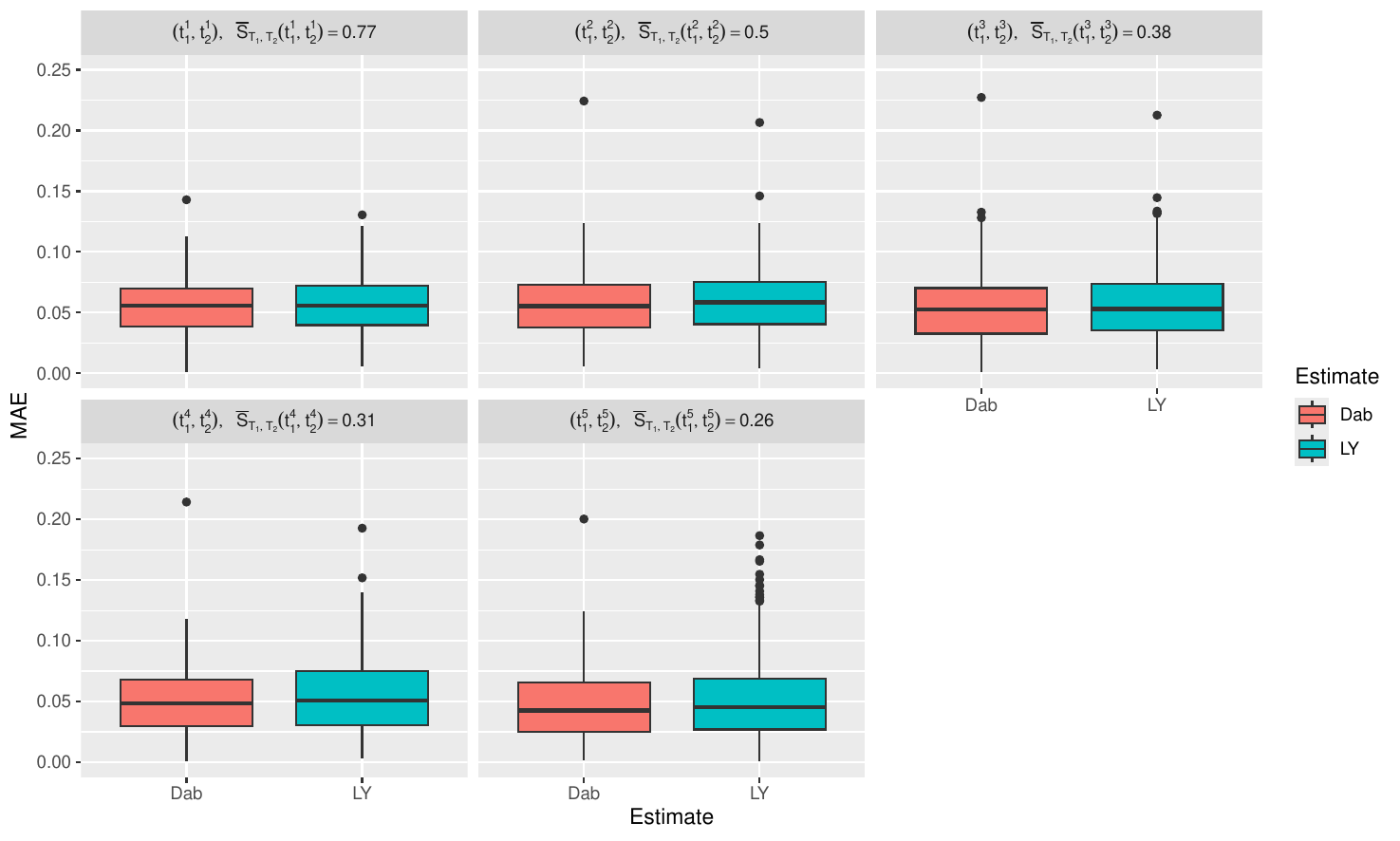}
		\caption{MAEs} \label{fig:MAEs_logitistic_data_5k_univ_n=400}
	\end{subfigure}
	\hfill
	\begin{subfigure}[b]{1\textwidth}
		\centering
		\includegraphics[width=1\textwidth]{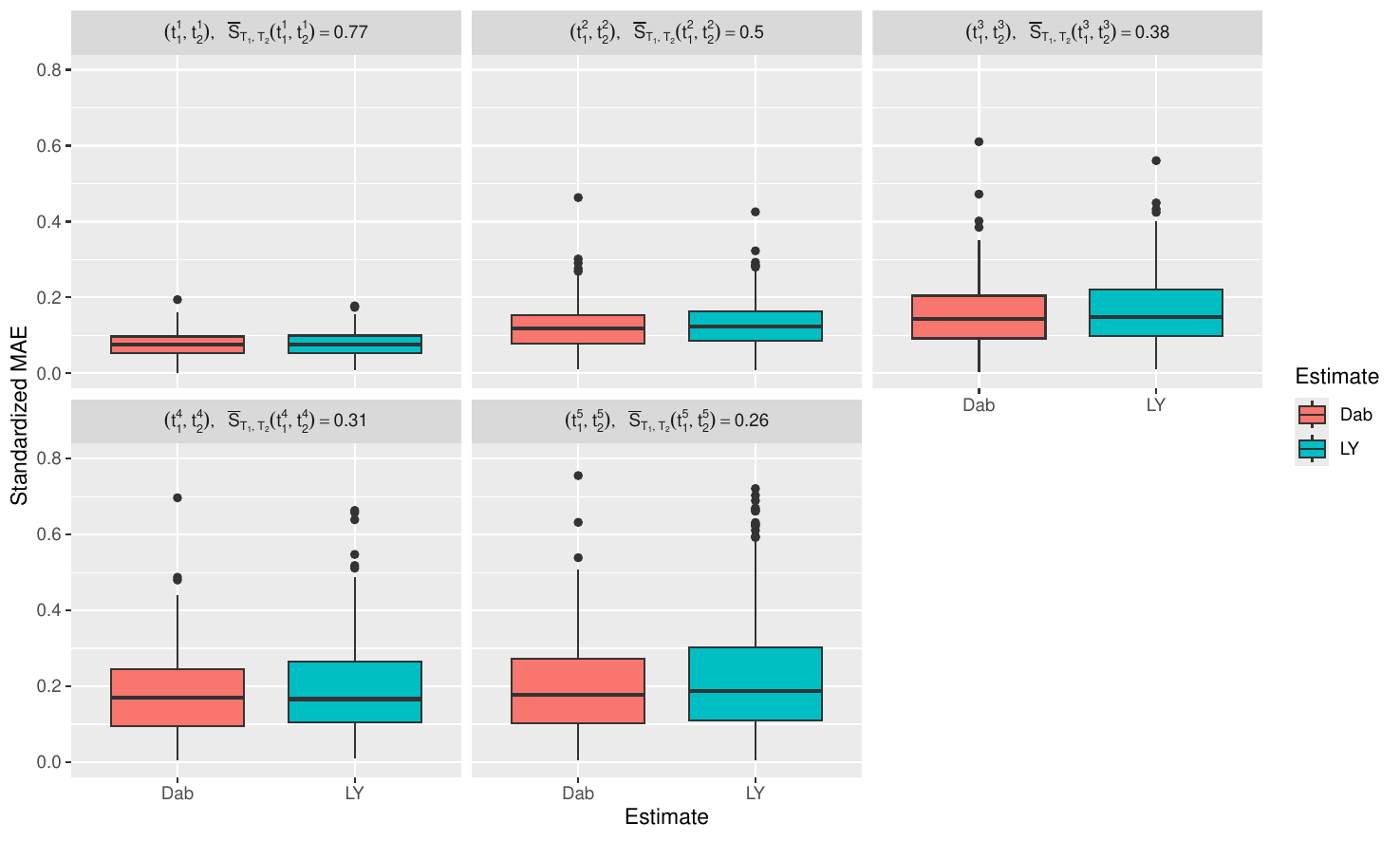}
		\caption{Standardized MAEs} \label{fig:SMAEs_logistic_data_5k_univ_n=400}
	\end{subfigure}
	\caption{Bivariate logistic times, univariate censoring, $n=400$, $k=5$ selected time points, a single regression model .MAEs and standardized MAEs between the true joint survival of the bivariate logistic failure times, and the estimated survival that is based on a single regression model for $k=5$ time points, for both types of estimators and for $m=500$ simulations from the univariate censoring scenario. The top row of each panel specifies the time point $(t_1^j,t_2^j)$ (which is different for each iteration), where $j=1,\ldots, k$ and the mean value of the true joint survival probability is $\bar{S}\equiv \bar{S}_{T_1,T_2}(t_1^j,t_2^j)=\frac{1}{nm}\sum_{i=1}^{n}\sum_{l=1}^{m}S_{T_1,T_2}(t_{1l}^j,t_{2l}^j\mid Z_i)$.}
	\label{fig:MAEs_correct_model_5k_univ_n=400}
\end{figure}

\begin{figure}[p]
	\centering
	\begin{subfigure}[b]{1\textwidth}
		\centering
		\includegraphics[width=1\textwidth]{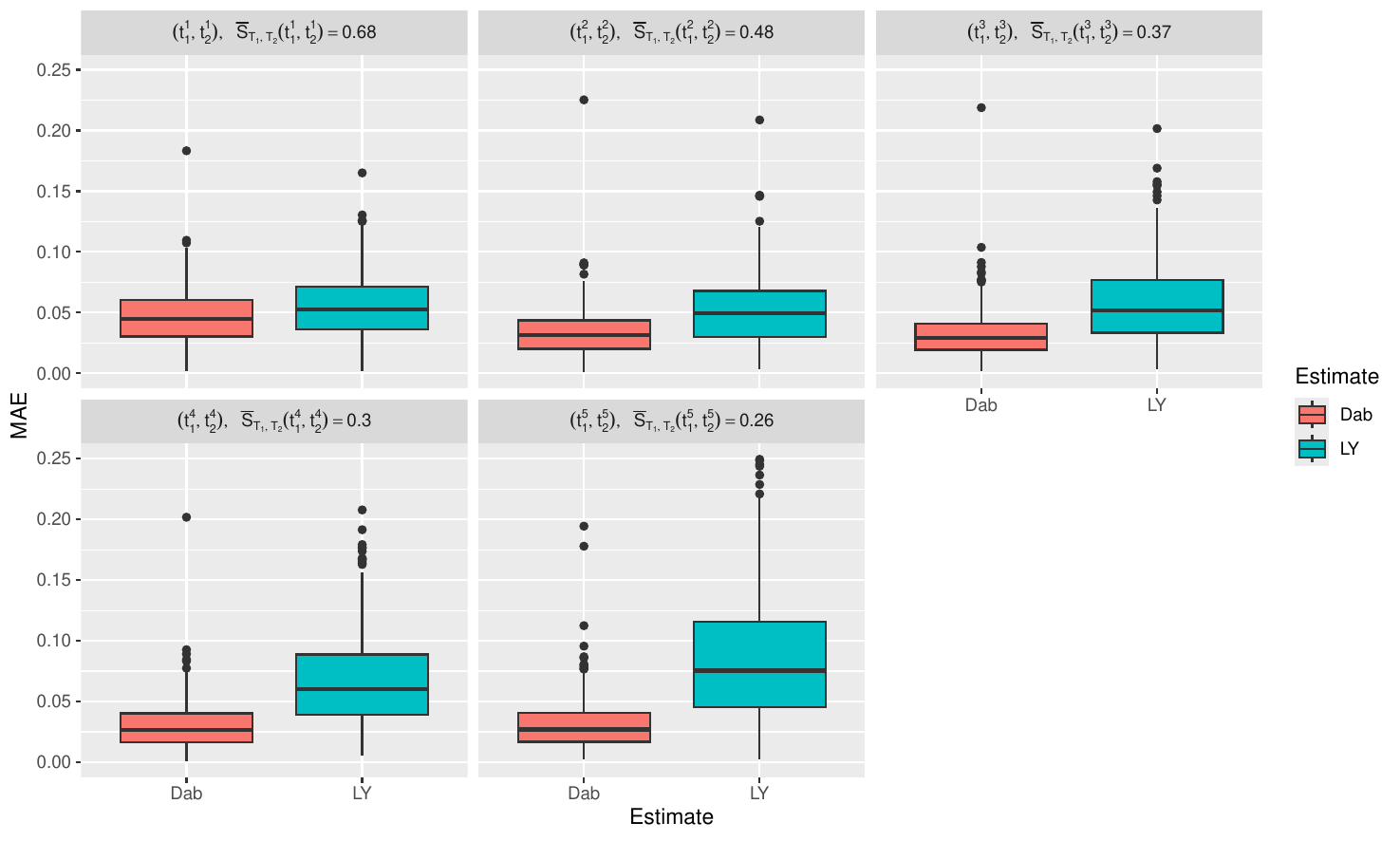}
		\caption{MAEs} \label{fig:MAEs_logitistic_data_5k_bivar_cens_n=400}
	\end{subfigure}
	\hfill
	\begin{subfigure}[b]{1\textwidth}
		\centering
		\includegraphics[width=1\textwidth]{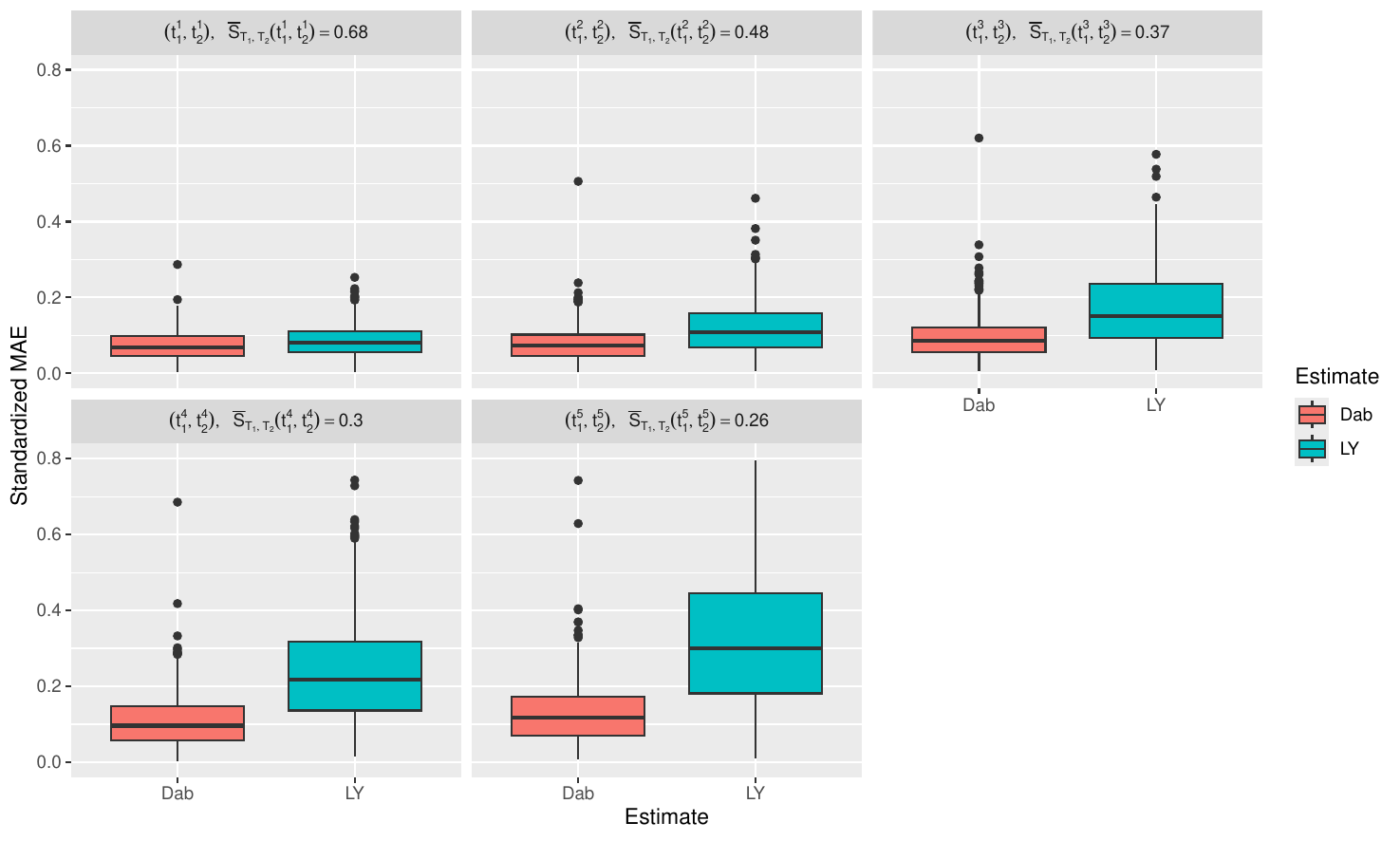}
		\caption{Standardized MAEs} \label{fig:SMAEs_logistic_data_5k_bivar_cens_n=400}
	\end{subfigure}
	\caption{Bivariate logistic times, bivariate censoring, $n=400$, $k=5$ selected time points, a single regression model .MAEs and standardized MAEs between the true joint survival of the bivariate logistic failure times, and the estimated survival that is based on a single regression model for $k=5$ time points, for both types of estimators and for $m=500$ simulations from the bivariate censoring scenario. The top row of each panel specifies the time point $(t_1^j,t_2^j)$ (which is different for each iteration), where $j=1,\ldots, k$ and the mean value of the true joint survival probability is $\bar{S}\equiv \bar{S}_{T_1,T_2}(t_1^j,t_2^j)=\frac{1}{nm}\sum_{i=1}^{n}\sum_{l=1}^{m}S_{T_1,T_2}(t_{1l}^j,t_{2l}^j\mid Z_i)$.}
	\label{fig:MAEs_correct_model_5k_bivar_cens_n=400}
\end{figure}

\begin{figure}[p]
	\centering
	\begin{subfigure}[b]{1\textwidth}
		\centering
		\includegraphics[width=1\textwidth]{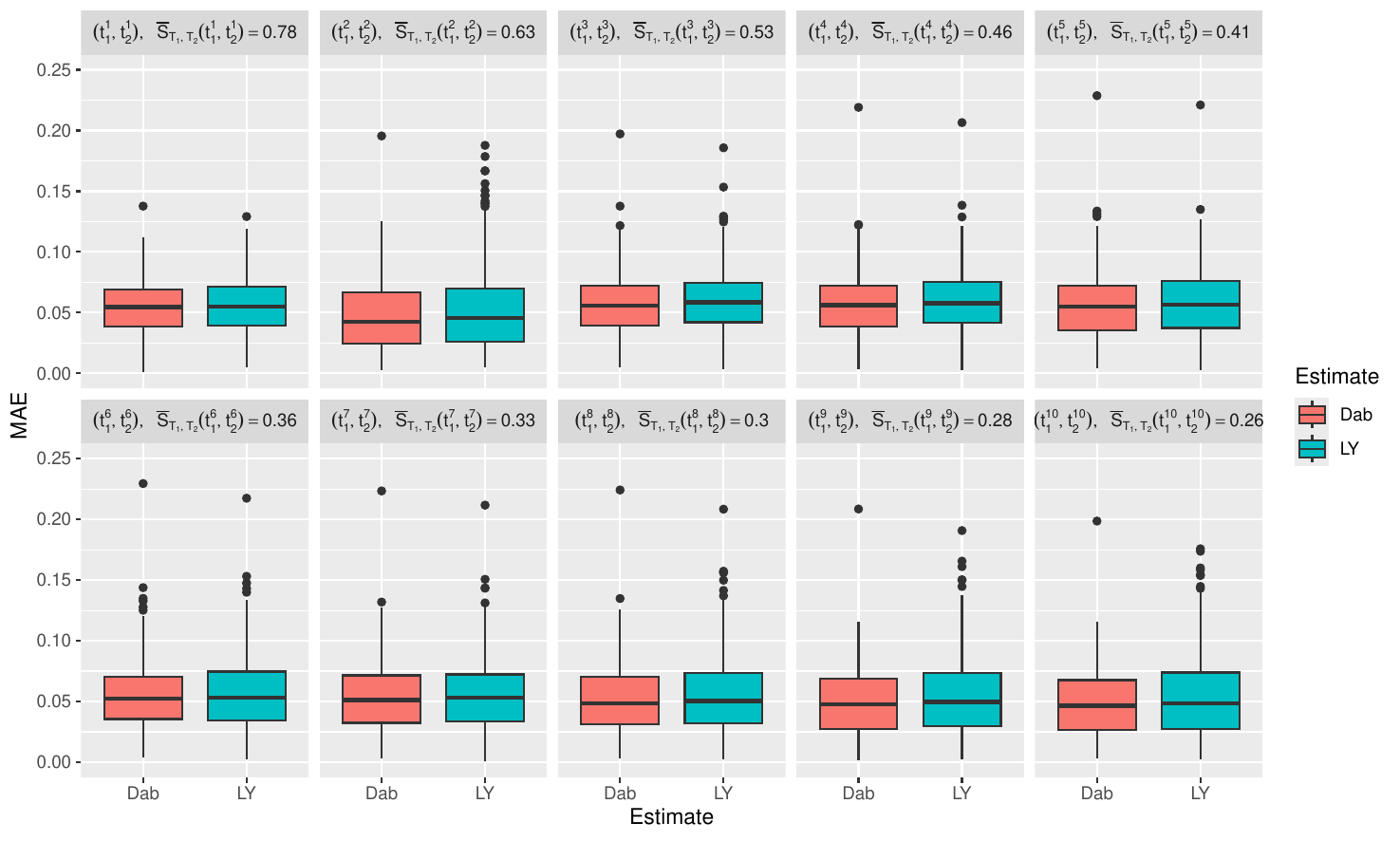}
		\caption{MAEs} \label{fig:MAEs_logitistic_data_10k_univ_n=400}
	\end{subfigure}
	\hfill
	\begin{subfigure}[b]{1\textwidth}
		\centering
		\includegraphics[width=1\textwidth]{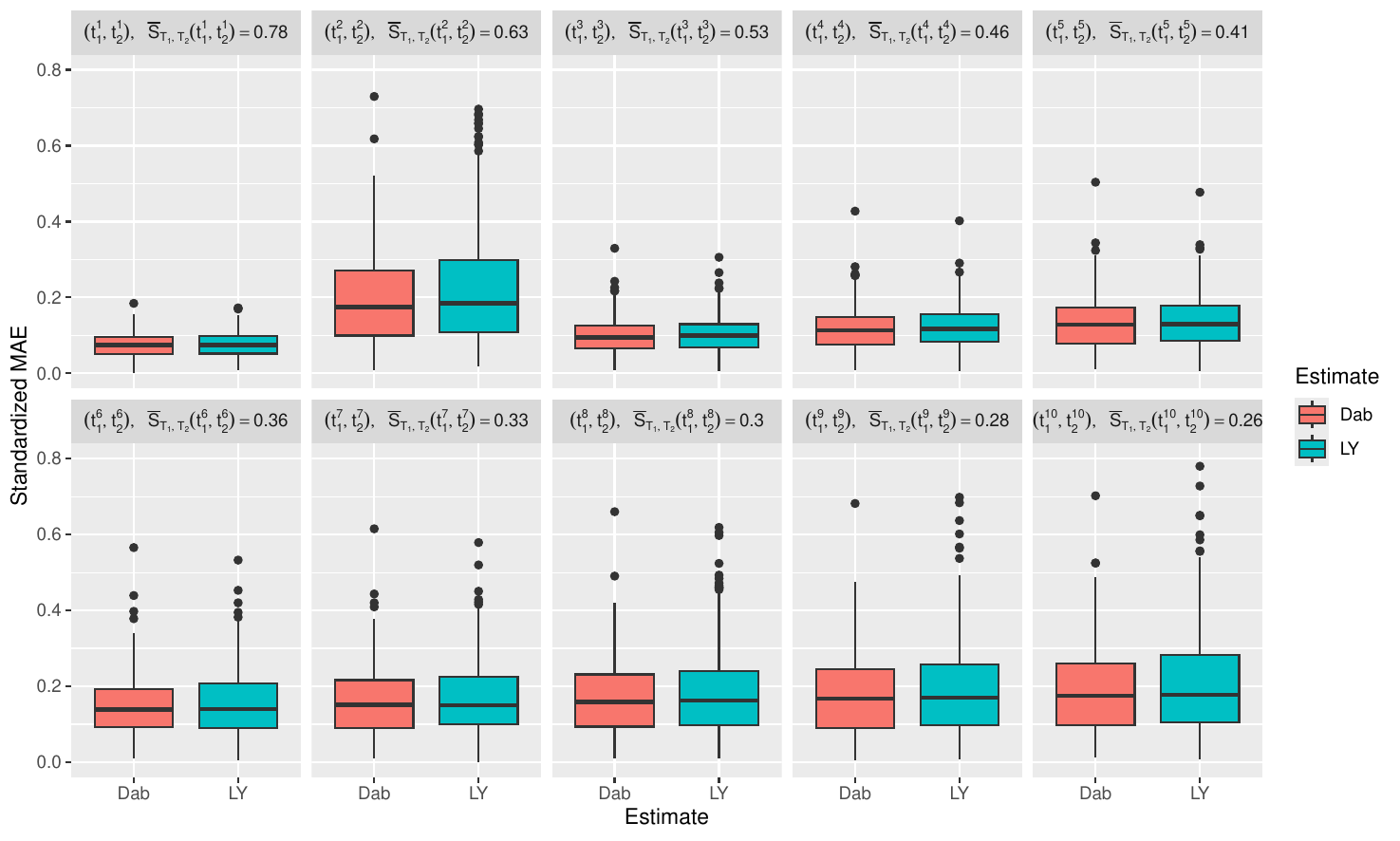}
		\caption{Standardized MAEs} \label{fig:SMAEs_logistic_data_10k_univ_n=400}
	\end{subfigure}
	\caption{Bivariate logistic times, univariate censoring, $n=400$, $k=10$ selected time points, a single regression model .MAEs and standardized MAEs between the true joint survival of the bivariate logistic failure times, and the estimated survival that is based on a single regression model for $k=105$ time points, for both types of estimators and for $m=500$ simulations from the univariate censoring scenario. The top row of each panel specifies the time point $(t_1^j,t_2^j)$ (which is different for each iteration), where $j=1,\ldots, k$ and the mean value of the true joint survival probability is $\bar{S}\equiv \bar{S}_{T_1,T_2}(t_1^j,t_2^j)=\frac{1}{nm}\sum_{i=1}^{n}\sum_{l=1}^{m}S_{T_1,T_2}(t_{1l}^j,t_{2l}^j\mid Z_i)$.}
	\label{fig:MAEs_correct_model_10k_univ_n=400}
\end{figure}

\begin{figure}[p]
	\centering
	\begin{subfigure}[b]{1\textwidth}
		\centering
		\includegraphics[width=1\textwidth]{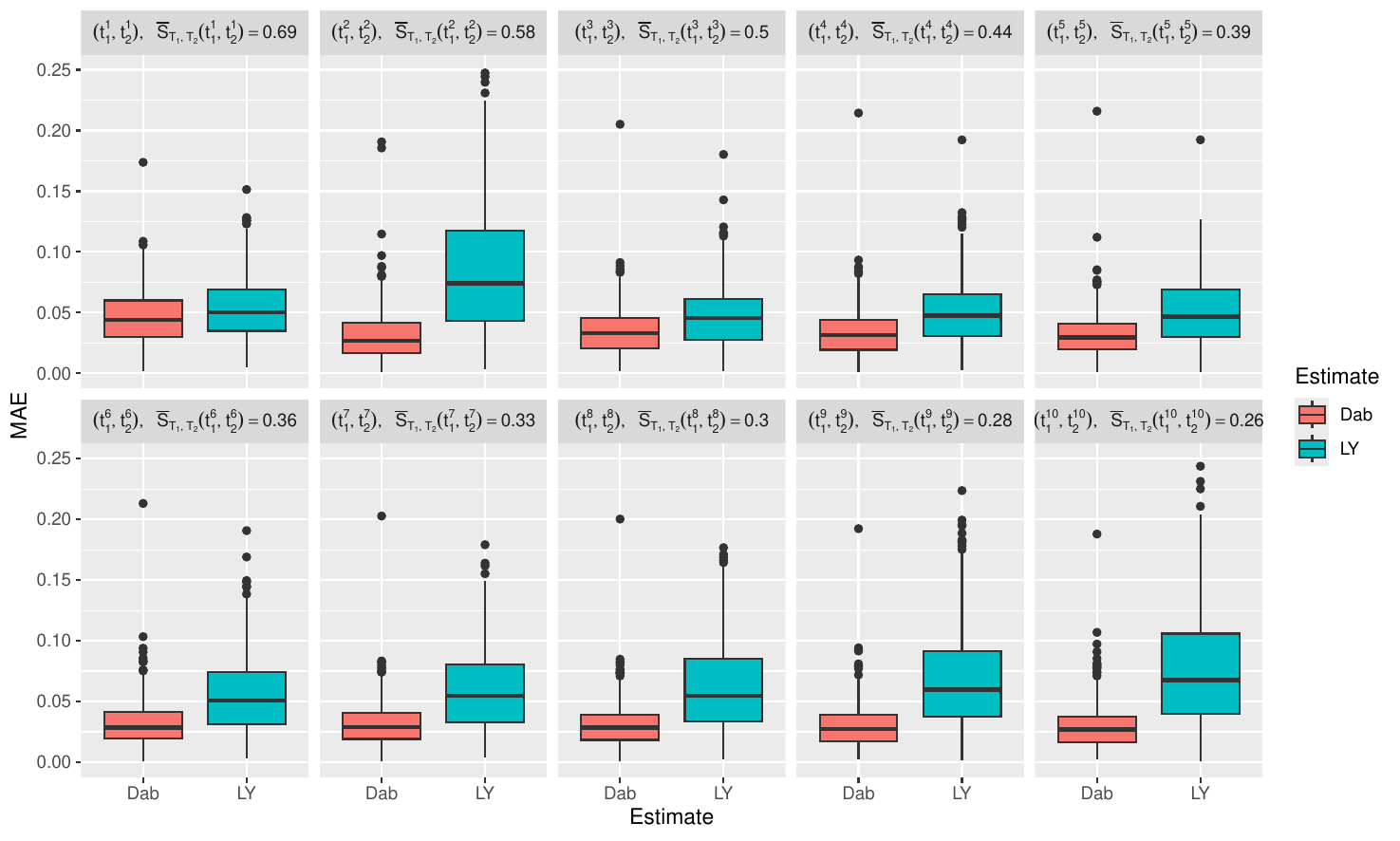}
		\caption{MAEs} \label{fig:MAEs_logitistic_data_10k_bivar_cens_n=400}
	\end{subfigure}
	\hfill
	\begin{subfigure}[b]{1\textwidth}
		\centering
		\includegraphics[width=1\textwidth]{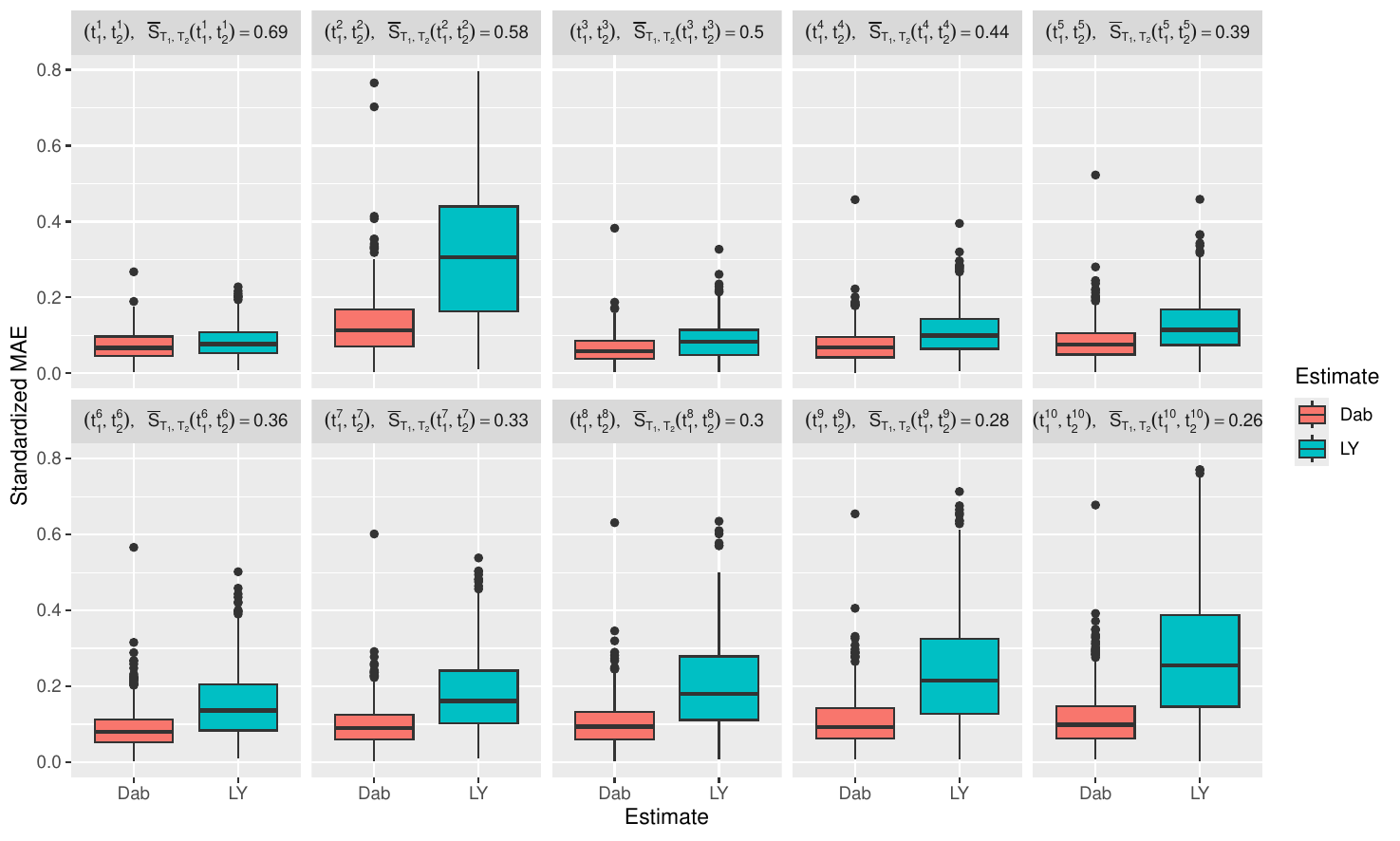}
		\caption{Standardized MAEs} \label{fig:SMAEs_logistic_data_10k_bivar_cens_n=400}
	\end{subfigure}
	\caption{Bivariate logistic times, bivariate censoring, $n=400$, $k=10$ selected time points, a single regression model .MAEs and standardized MAEs between the true joint survival of the bivariate logistic failure times, and the estimated survival that is based on a single regression model for $k=10$ time points, for both types of estimators and for $m=500$ simulations from the bivariate censoring scenario. The top row of each panel specifies the time point $(t_1^j,t_2^j)$ (which is different for each iteration), where $j=1,\ldots, k$ and the mean value of the true joint survival probability is $\bar{S}\equiv \bar{S}_{T_1,T_2}(t_1^j,t_2^j)=\frac{1}{nm}\sum_{i=1}^{n}\sum_{l=1}^{m}S_{T_1,T_2}(t_{1l}^j,t_{2l}^j\mid Z_i)$.}
	\label{fig:MAEs_correct_model_10k_bivar_cens_n=400}
\end{figure}
\begin{figure}[p]
	\centering
	\begin{subfigure}[b]{1\textwidth}
		\centering
		\includegraphics[width=1\textwidth]{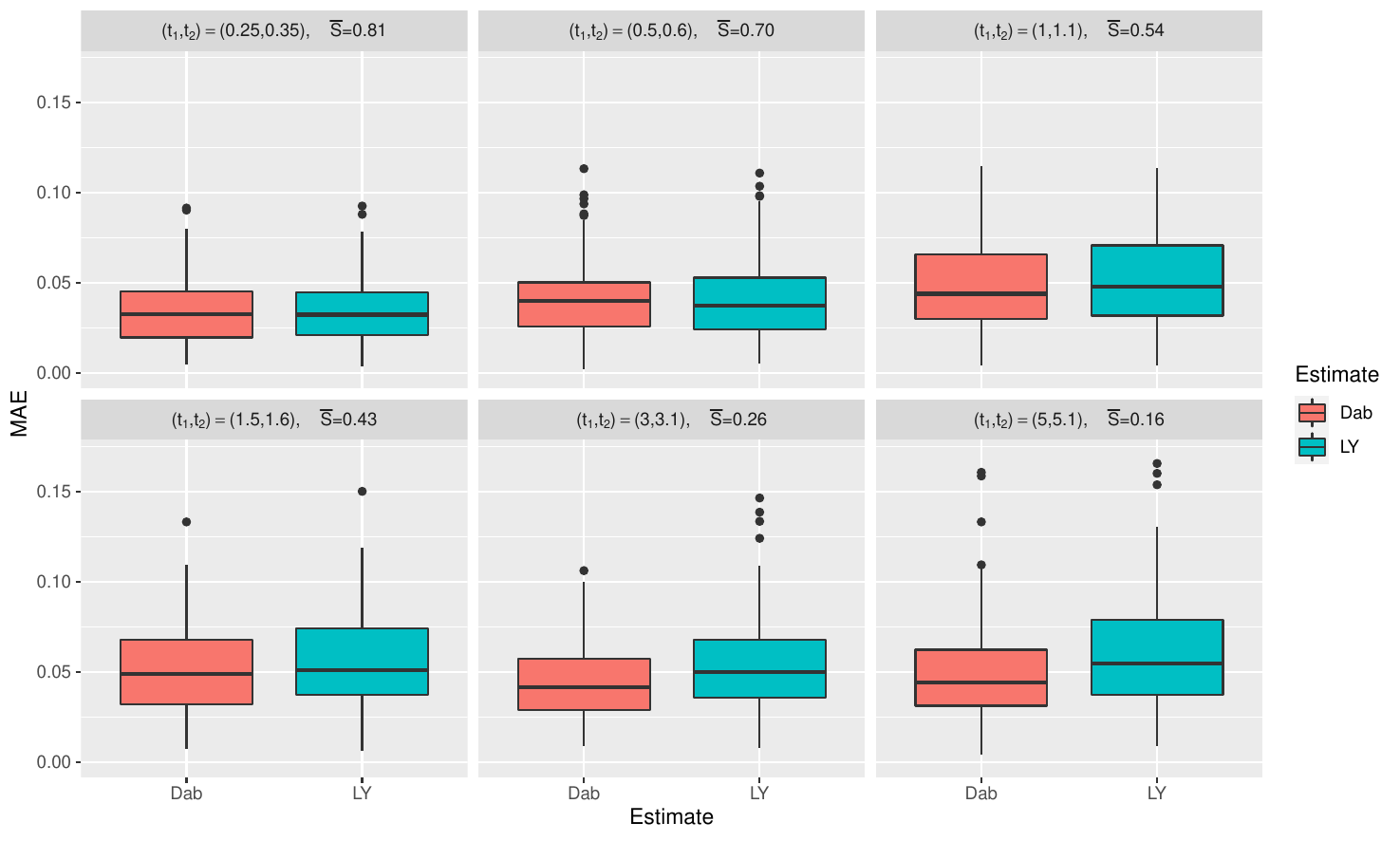}
		\caption{MAEs} \label{fig:MAEs_logit_LN}
	\end{subfigure}
	\hfill
	\begin{subfigure}[b]{1\textwidth}
		\centering
		\includegraphics[width=1\textwidth]{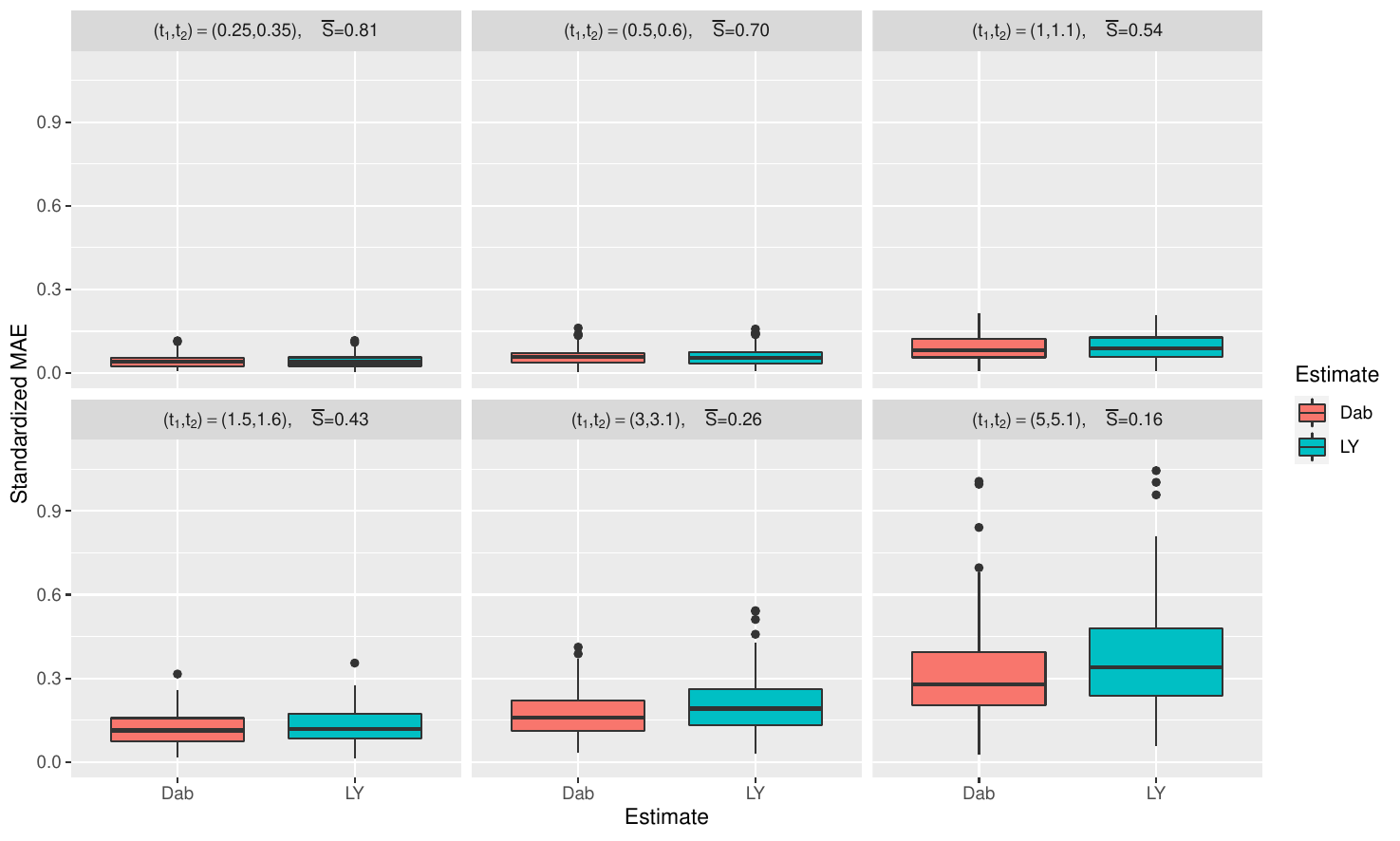}
		\caption{Standardized MAEs} \label{fig:SMAEs_logit_LN}
	\end{subfigure}
	\caption{Bivariate log-normal times, univariate censoring, $n=200$, six fixed time points, six separate regression models. MAEs and standardized MAEs between the true joint survival of the bivariate log-normal failure times, and the estimated survival that is based on six separate regression models (one for each time point), for both types of estimators and for 100 simulations from the univariate censoring scenario. The top row of each panel specifies the time point $(t_1^j,t_2^j)$, and the mean value of the true joint survival probability $\bar{S}\equiv \bar{S}_{T_1,T_2}(t_1^j,t_2^j)=\frac{1}{n}\sum_{i=1}^{n}S_{T_1,T_2}(t_1^j,t_2^j\mid Z_i)$.}
	\label{fig:MAEs}
\end{figure}

\begin{figure}[p]
	\centering
	\begin{subfigure}[b]{1\textwidth}
		\centering
		\includegraphics[width=1\textwidth]{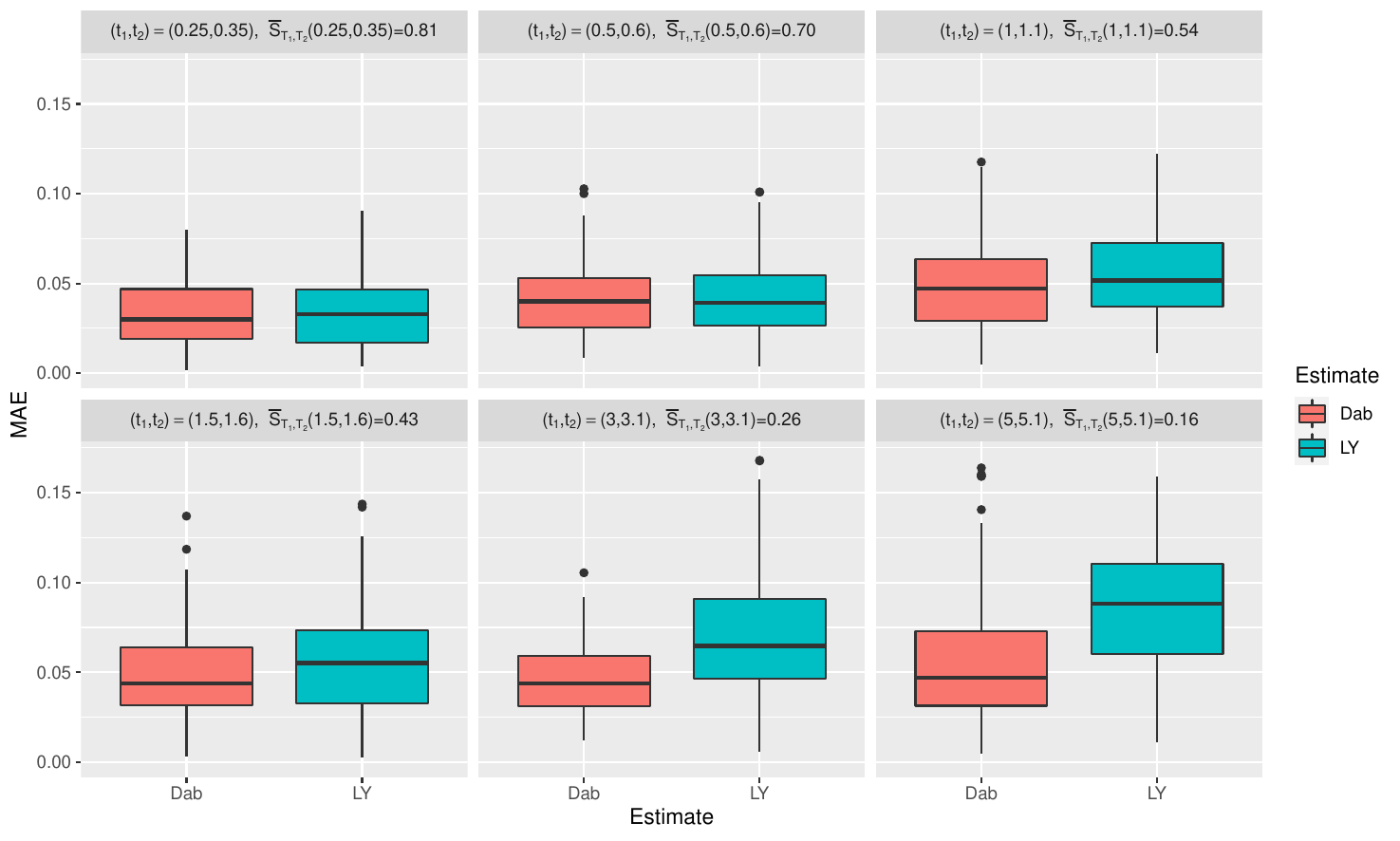}
		\caption{MAEs} \label{fig:MAEs_logit_LN_bivar_cens}
	\end{subfigure}
	\hfill
	\begin{subfigure}[b]{1\textwidth}
		\centering
		\includegraphics[width=1\textwidth]{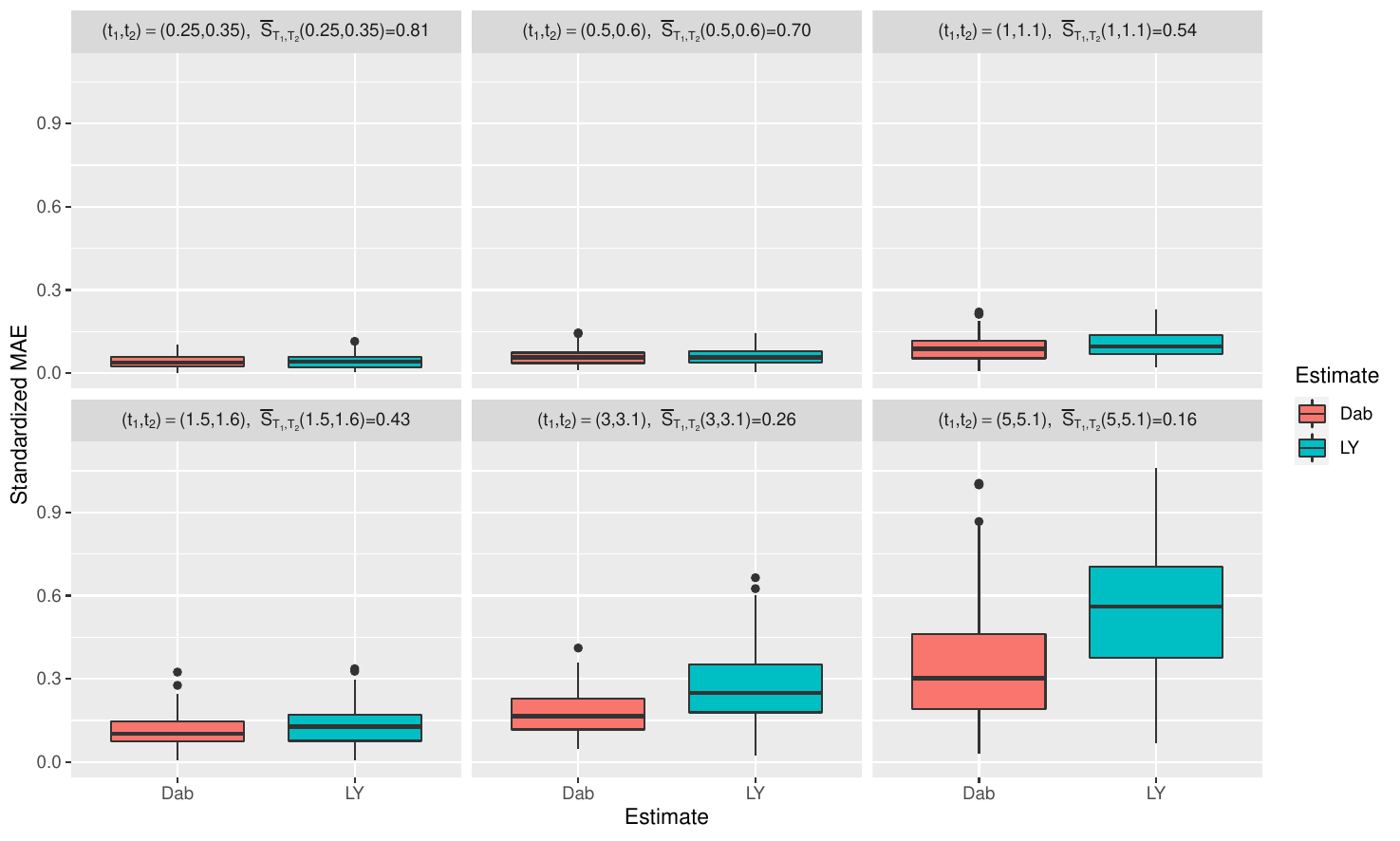}
		\caption{Standardized MAEs} \label{fig:SMAEs_logit_LN_bivar_cens}
	\end{subfigure}
	\caption{Bivariate log-normal times, bivariate censoring, $n=200$, six fixed time points, six separate regression models. MAEs and standardized MAEs between the true joint survival of the bivariate log-normal failure times, and the estimated survival that is based on six separate regression models (one for each time point), for both types of estimators and for 100 simulations from the bivariate censoring scenario. The top row of each panel specifies the time point $(t_1^j,t_2^j)$, and the mean value of the true joint survival probability $\bar{S}\equiv \bar{S}_{T_1,T_2}(t_1^j,t_2^j)=\frac{1}{n}\sum_{i=1}^{n}S_{T_1,T_2}(t_1^j,t_2^j\mid Z_i)$.}
	\label{fig:MAEs_bivar_cens}
\end{figure}

\begin{figure}[p]
	\centering
	\begin{subfigure}[b]{1\textwidth}
		\centering
		\includegraphics[width=1\textwidth]{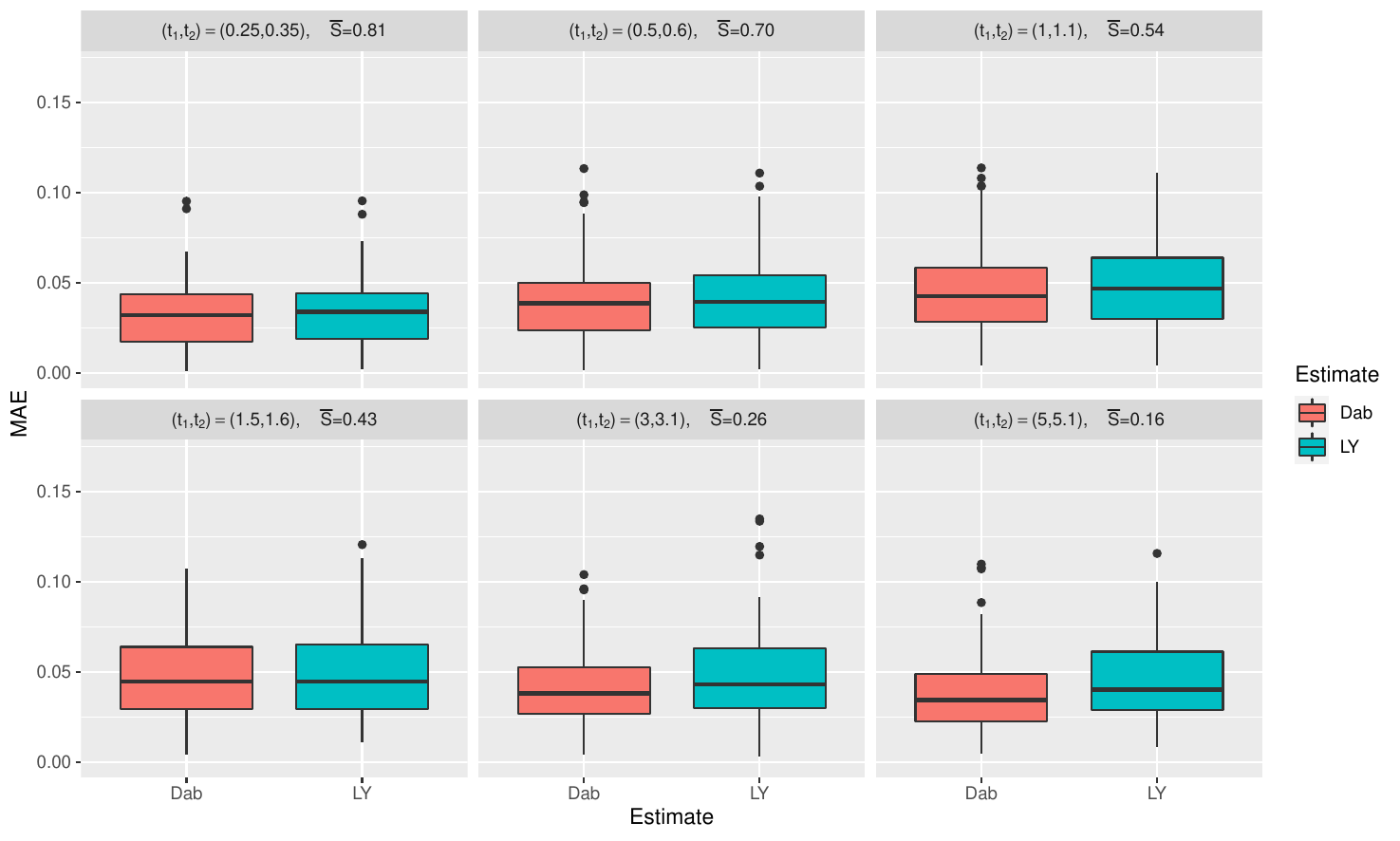}
		\caption{MAE} \label{fig:MAEs_logit_LN_sing_reg}
	\end{subfigure}
	\hfill
	\begin{subfigure}[b]{1\textwidth}
		\centering
		\includegraphics[width=1\textwidth]{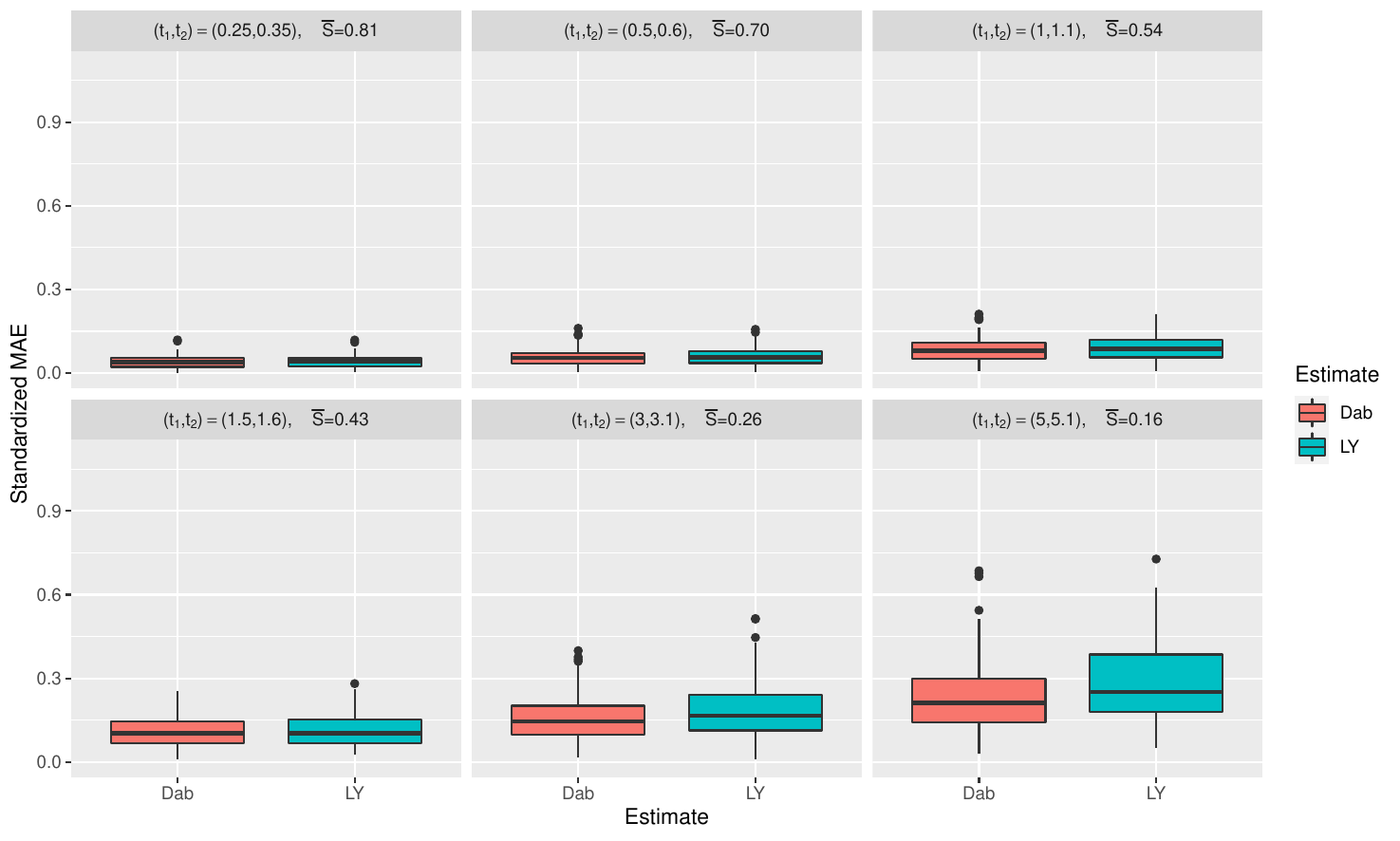}
		\caption{Standardized MAEs} \label{fig:SMAEs_logit_LN_sing_reg}
	\end{subfigure}
	\caption{Bivariate log-normal times, univariate censoring, $n=200$, six fixed time points, a single regression model. MAEs and standardized MAEs between the true joint survival of the bivariate log-normal failure times, and the estimated survival that is based on a single regression model for all six time points, for both types of estimators and for 100 simulations from the univariate censoring scenario. The top row of each panel specifies the time point $(t_1^j,t_2^j)$, and the mean value of the true joint survival probability $\bar{S}\equiv \bar{S}_{T_1,T_2}(t_1^j,t_2^j)=\frac{1}{n}\sum_{i=1}^{n}S_{T_1,T_2}(t_1^j,t_2^j\mid Z_i)$.}
	\label{fig:MAEs_sing_reg}
\end{figure}

\begin{figure}[p]
	\centering
	\begin{subfigure}[b]{1\textwidth}
		\centering
		\includegraphics[width=1\textwidth]{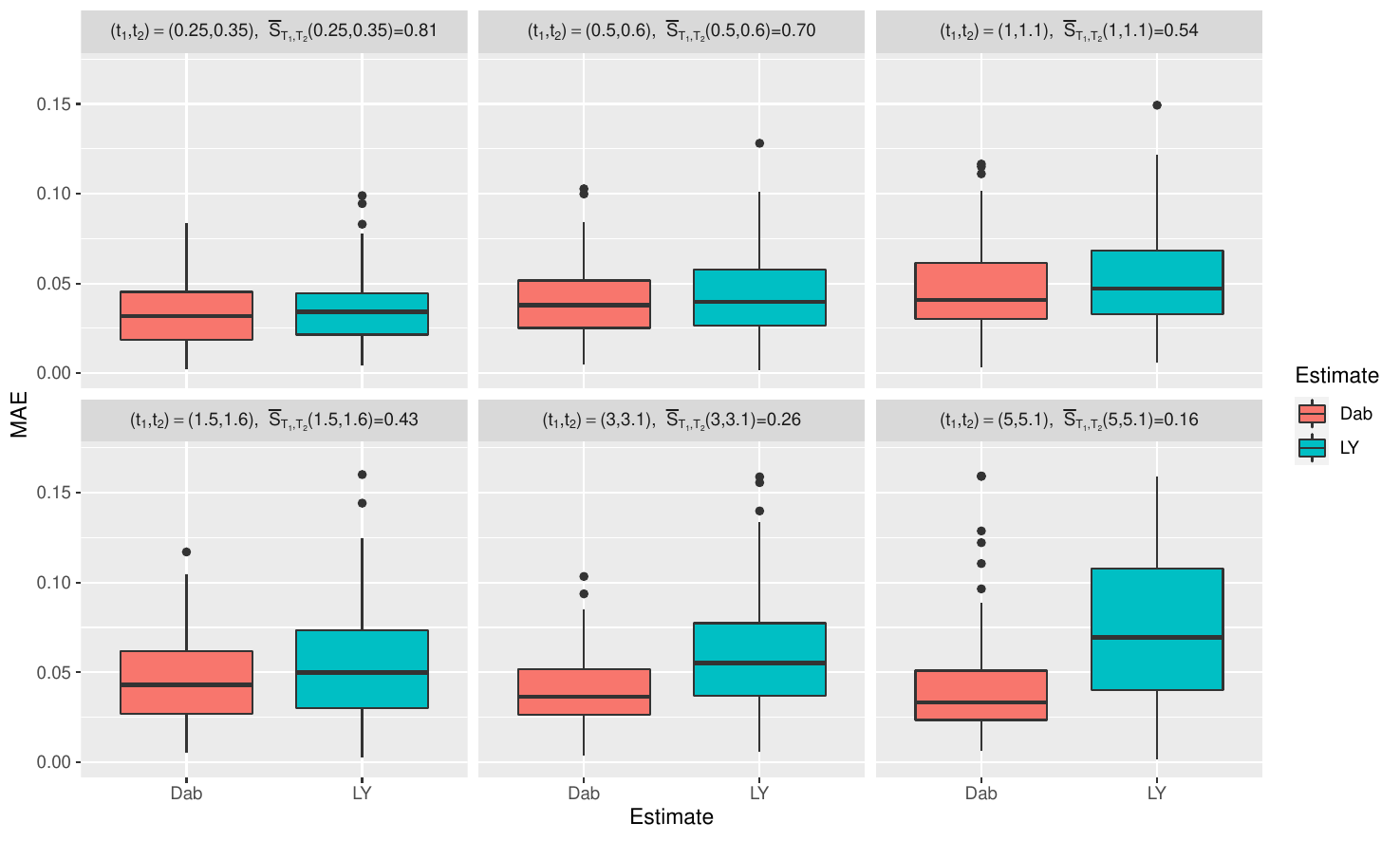}
		\caption{MAE} \label{fig:MAEs_logit_LN_sing_reg_bivar_cens}
	\end{subfigure}
	\hfill
	\begin{subfigure}[b]{1\textwidth}
		\centering
		\includegraphics[width=1\textwidth]{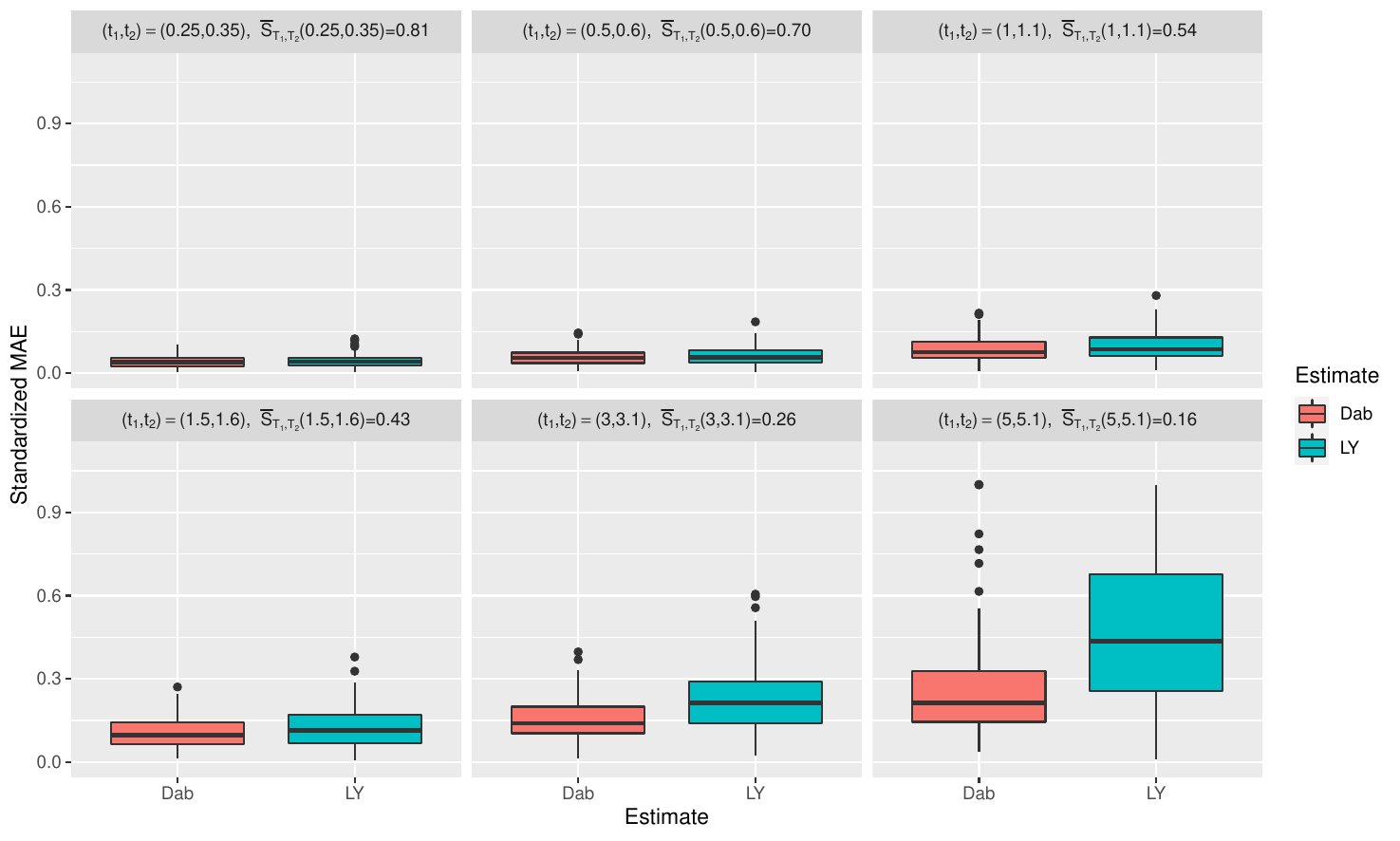}
		\caption{Standardized MAEs} \label{fig:SMAEs_logit_LN_sing_reg_bivar_cens}
	\end{subfigure}
	\caption{Bivariate log-normal times, bivariate censoring, $n=200$, six fixed time points, a single regression model. MAEs and standardized MAEs between the true joint survival of the bivariate log-normal failure times, and the estimated survival that is based on a single regression model for all six time points, for both types of estimators and for 100 simulations from the bivariate censoring scenario. The top row of each panel specifies the time point $(t_1^j,t_2^j)$, and the mean value of the true joint survival probability $\bar{S}\equiv \bar{S}_{T_1,T_2}(t_1^j,t_2^j)=\frac{1}{n}\sum_{i=1}^{n}S_{T_1,T_2}(t_1^j,t_2^j\mid Z_i)$.}
	\label{fig:MAEs_sing_reg_bivar_cens}
\end{figure}

\section{}
Here we test the effect of using a different time point selection procedure.
Previous research found that usually five time points are adequate, and there is hardly any gain in precision when selecting more than three time points \citep{furberg_bivariate_2023}. To test this, we used the bivariate time point selection procedure described in Section 3.2 of the main manuscript with both $k=5$ and $k=10$ time points. We repeated the data-generating procedure described in Section~3, with both univariate censoring and bivariate censoring (in which case we modified the Lin and Ying estimator accordingly). We used the univariate 20th and 30th empirical percentiles as the lower limits for our time point selection method, for the univariate and bivariate censoring scenarios, respectively. For each censoring mechanism, the pair of lower limits corresponded to the 10th empirical percentile of the observed bivariate times. We used the univariate 90th empirical percentile as the upper limit, which corresponded to the 83rd empirical percentile of the observed bivariate times, for both the univariate and bivariate censoring scenarios.

For the regression model, we used Equation~(7) of the main manuscript with either $k=5$ or $k=10$ time points. The estimated regression coefficient $\beta_1$, together with sd, se and coverage, is summarized in Web Table~\ref{table:equidistant}, for sample sizes $n=200$ (top) and $n=400$ (bottom). The estimated intercepts are not presented, as each realization in the simulation resulted in a slightly different selection of time points. The MAEs and the standardized MAEs are presented in Web Figures~\ref{fig:MAEs_correct_model_5k_univ}-\ref{fig:MAEs_correct_model_10k_bivar_cens_n=400}.

Web Table~\ref{table:equidistant} shows that for a sample size of $n=200$, our bivariate pseudo-observations approach based on the Dabrowska estimator is consistently better than that based on the Lin and Ying estimator, both in terms of lower bias and lower variance. Additionally, for the Dabrowska estimator there is hardly any difference between the univariate and bivariate censoring scenarios, or between $k=5$ and $k=10$ time points. However, the coverage is slightly lower than 95\%, especially when the analysis is based on $k=5$ time points. For the Lin and Ying estimator, the bias and variance of the estimated regression parameter is much higher in the bivariate censoring scenario than in the univariate censoring scenario, yet the coverage is generally correct. 
Furthermore, for a sample size of $n=200$ and for both estimators, increasing the number of time points from $k=5$ to $k=10$ seems to offer some improvement in variance values and in achieving correct coverage.
When increasing the sample size from $n=200$ to $n=400$, we see that the discussion above is still valid, with some distinctions: (1) the bias and variance of the estimated regression parameter decreases when the sample size increases, (2) for the Lin and Ying estimator, there is no longer much difference in bias between the univariate and bivariate censoring scenarios, (3) for both estimators, the coverage is generally correct or conservative, and (4) increasing the number of time points offers only a slight improvement in variance values and in achieving correct coverage.
Finally, for both sample sizes and for each combination of estimator, censoring scenario, and number of time points $k$, the standard deviations are greater than or equal to the standard errors. Moreover, when comparing the results based on this bivariate time point selection method to the results based on the fixed time point selection approach presented in Table~1 of the main manuscript, we see that there is hardly any difference in the regression estimates that are based on the Dabrowska estimator. However, for a sample of size $n=200$, the Lin and Ying estimator is more sensitive to the time point selection approach.

To better understand why the modified estimator of Lin and Ying isn't working so well in the bivariate censoring case with a low sample size, we also checked the effect of either lowering the censoring rate, or lowering the upper limit from the 90th percentile to the 80th percentile.
We use the same bivariate logistic data generating mechanism with bivariate independent censoring, a sample of size $n \in \{200,400\}$, and our time point selection procedure with either $k=5$ or $k=10$ time points. 
To test the effect of lowering the censoring rates on the results, we use two independent exponential censoring variables, with rates $\lambda_1=0.1$, and $\lambda_2=0.05$, which correspond to about $50\%$ censoring of $T_1$, and about $20\%$ censoring of $T_2$ (compared to the original censoring rates of 72\% and 48\%, respectively). When compared to Web Table~\ref{table:equidistant} , Web Table~\ref{table:equidistant_all} shows that lowering the censoring rate in the bivariate censoring case increases the estimation accuracy of the modified Lin and Ying estimator. We also consider the effect of lowering the univariate upper limit in our bivariate time point selection method, from the 90th percentile to the 80th percentile. Note that the empirical bivariate CDF of the observed times $\hat{F}(t_1,t_2)=\frac{1}{n}\sum_{i=1}^{n}I\left(Y_{1i}\leq t_1, Y_{2i}\leq t_2\right)$ satisfies $\hat{F}(t_1^{80},t_2^{80})\approxeq 0.66$, whereas before we had $\hat{F}(t_1^{90},t_2^{90})\approxeq 0.83$. Web Table~\ref{table:equidistant_all} shows that for this specific data generating mechanism, lowering the upper limit improves estimation for the Lin and Ying estimator, both in terms of lower bias and lower variance. However, it has hardly any effect on the results for the Dabrowska estimator, although the coverage is slightly lower then expected when $n=200$ (when compared to Web Table~\ref{table:equidistant}).

\begin{table}[ht]
	\begin{subtable}{1\textwidth}
		\centering
		\subcaption{Sample size $n=200$}\label{tab:log_equidistant_200}
		\centering
		\begin{tabular}{llrrrrrr}
			\hline
			Estimator & Censoring & $k$ & True value & est & sd & se & cov (\%)\\ 
			\hline
			Dabrowska & Univariate & 5.00 & 2.00 & 2.06 & 0.73 & 0.62 & 93.60 \\ 
			Dabrowska & Univariate & 10.00 & 2.00 & 2.06 & 0.71 & 0.61 & 94.80 \\ 
			Dabrowska & Bivariate & 5.00 & 2.00 & 2.07 & 0.73 & 0.67 & 94.20 \\ 
			Dabrowska & Bivariate & 10.00 & 2.00 & 2.05 & 0.68 & 0.62 & 95.00 \\ 
			Lin \& Ying & Univariate & 5.00 & 2.00 & 2.10 & 0.93 & 0.81 & 94.40 \\ 
			Lin \& Ying & Univariate & 10.00 & 2.00 & 2.10 & 0.89 & 0.79 & 95.00 \\ 
			Lin \& Ying & Bivariate & 5.00 & 2.00 & 2.24 & 1.62 & 1.50 & 95.20 \\ 
			Lin \& Ying & Bivariate & 10.00 & 2.00 & 2.18 & 1.34 & 1.29 & 95.20 \\ 
			\hline
		\end{tabular}
	\end{subtable}
	\hfill\\
	\hfill\\
	\begin{subtable}{1\textwidth}
		\centering
		\subcaption{Sample size $n=400$}\label{tab:log_equidistant_400}
		\centering
		\begin{tabular}{llrrrrrr}
			\hline
			Estimator & Censoring & $k$ & True value & est & sd & se & cov (\%)\\ 
			\hline
			Dabrowska & Univariate & 5.00 & 2.00 & 2.06 & 0.52 & 0.42 & 95.40 \\ 
			Dabrowska & Univariate & 10.00 & 2.00 & 2.05 & 0.51 & 0.41 & 95.40 \\ 
			Dabrowska & Bivariate & 5.00 & 2.00 & 2.03 & 0.52 & 0.42 & 95.00 \\ 
			Dabrowska & Bivariate & 10.00 & 2.00 & 2.03 & 0.50 & 0.41 & 95.40 \\ 
			Lin \& Ying & Univariate & 5.00 & 2.00 & 2.07 & 0.59 & 0.55 & 96.60 \\ 
			Lin \& Ying & Univariate & 10.00 & 2.00 & 2.07 & 0.57 & 0.53 & 97.00 \\ 
			Lin \& Ying & Bivariate & 5.00 & 2.00 & 2.11 & 0.86 & 0.85 & 95.20 \\ 
			Lin \& Ying & Bivariate & 10.00 & 2.00 & 2.10 & 0.81 & 0.81 & 94.80 \\ 
			\hline
		\end{tabular}
	\end{subtable}
	\hfill\\
	\caption{Top: sample size $n=200$. Bottom: sample size $n=400$. Mean of the estimated slope parameter $\beta_1$, together with sd, se and coverage, based on a single regression model with either $k=5$ time points or $k=10$ time points, univariate or bivariate censoring, for both the Dabrowska estimator and the Lin and Ying estimator. The estimates are based on $m=500$ simulations.}\label{table:equidistant}
\end{table}

\begin{table}[ht]
	\begin{subtable}{1\textwidth}
		\centering
		\subcaption{Sample size $n=200$}\label{tab:log_equidistant_200}
		\centering
		\begin{tabular}{llrrrrrr}
			\hline
			Estimator & Type & $k$ & True value & est & sd & se & cov (\%)\\ 
			\hline
			Dabrowska & Lower censoring rates & 5.00 & 2.00 & 2.05 & 0.65 & 0.57 & 94.00 \\ 
			Dabrowska & Lower censoring rates & 10.00 & 2.00 & 2.06 & 0.65 & 0.57 & 94.20 \\ 
			Dabrowska & Lower upper limits & 5.00 & 2.00 & 2.04 & 0.63 & 0.57 & 93.60 \\ 
			Dabrowska & Lower upper limits & 10.00 & 2.00 & 2.05 & 0.63 & 0.57 & 93.80 \\ 
			Lin \& Ying & Lower censoring rates & 5.00 & 2.00 & 2.08 & 0.86 & 0.81 & 94.40 \\ 
			Lin \& Ying & Lower censoring rates & 10.00 & 2.00 & 2.09 & 0.87 & 0.79 & 95.00 \\ 
			Lin \& Ying & Lower upper limits & 5.00 & 2.00 & 2.11 & 1.02 & 1.50 & 95.20 \\ 
			Lin \& Ying & Lower upper limits & 10.00 & 2.00 & 2.12 & 1.04 & 1.29 & 95.20 \\ 
			\hline
		\end{tabular}
	\end{subtable}
	\hfill\\
	\hfill\\
	\begin{subtable}{1\textwidth}
		\centering
		\subcaption{Sample size $n=400$}\label{tab:log_equidistant_400}
		\centering
		\begin{tabular}{llrrrrrr}
			\hline
			Estimator & Type & K & True value & est & sd & se & cov (\%)\\ 
			\hline
			Dabrowska & Lower censoring rates & 5.00 & 2.00 & 2.04 & 0.52 & 0.38 & 94.20 \\ 
			Dabrowska & Lower censoring rates & 10.00 & 2.00 & 2.05 & 0.51 & 0.39 & 93.00 \\ 
			Dabrowska & Lower upper limits & 5.00 & 2.00 & 2.02 & 0.48 & 0.38 & 94.60 \\ 
			Dabrowska & Lower upper limits & 10.00 & 2.00 & 2.02 & 0.47 & 0.39 & 95.00 \\ 
			Lin \& Ying & Lower censoring rates & 5.00 & 2.00 & 2.09 & 0.64 & 0.53 & 94.60 \\ 
			Lin \& Ying & Lower censoring rates & 10.00 & 2.00 & 2.10 & 0.62 & 0.53 & 95.20 \\ 
			Lin \& Ying & Lower upper limits & 5.00 & 2.00 & 2.06 & 0.64 & 0.63 & 95.60 \\ 
			Lin \& Ying & Lower upper limits & 10.00 & 2.00 & 2.06 & 0.62 & 0.62 & 96.00 \\ 
			\hline
		\end{tabular}
	\end{subtable}
	\hfill\\
	\caption{Top: sample size $n=200$. Bottom: sample size $n=400$. Mean, sd, se, and coverage of the estimated slope parameter $\beta_1$, for both the Dabrowska estimator and the Lin and Ying estimator. The type column corresponds to two types of modifications: a lower censoring rate and a lower upper percentile for the bivariate time point selection method. These estimates are based on a single regression model for either $k=5$ or $k=10$ time points, and on 500 simulations from the bivariate logistic model with bivariate censoring.}\label{table:equidistant_all}
\end{table}

\section{}\label{sec:LN}
To test for the effect of model misspecification, we consider also data that are generated from a log-normal distribution in which case the logistic regression model is misspecified. We generate a random vector $(Z_1,\ldots,Z_n)^T$ from a uniform $U(0.5,1.5)$ distribution. We define the following two functions: $\mu_1(Z)=\gamma_1 Z$ and  $\mu_2(Z)=\gamma_2 Z$ where we use $\gamma_1=0.2$ and $\gamma_2=1$. For each $i=1,\ldots,n$ we generate a bivariate random variable from the following Gaussian distribution
\[
\begin{pmatrix}
	X_{1i}\\
	X_{2i}
\end{pmatrix}
\sim N\left(\begin{pmatrix}
	\mu_1(Z_i) \\
	\mu_2(Z_i)
\end{pmatrix},
\begin{pmatrix}
	\sigma_1^2 & \rho\sigma_1\sigma_2\\
	\rho\sigma_1\sigma_2 & \sigma_2^2
\end{pmatrix}\right),
\]
where $\sigma_1^2=2$, $\sigma_2^2=5$, and $\rho=0.9487$, which correspond to the covariance matrix \[
\begin{pmatrix}
	\sigma_1^2 & \rho\sigma_1\sigma_2\\
	\rho\sigma_1\sigma_2 & \sigma_2^2
\end{pmatrix}=
\begin{pmatrix}
	2 & 3\\
	3 & 5
\end{pmatrix}.
\]
We then define our bivariate failure times as $(T_{1},T_{2})^T=(e^{X_{1}},e^{X_{2}})^T$. 

Finally, we consider two censoring mechanisms: univariate censoring and independent bivariate censoring. For the univariate censoring scenario, we generate $n$ independent univariate censoring variables from an exponential distribution with rate $\lambda=0.3$, which correspond to about $40\%$ censoring of $T_1$, and about $55\%$ censoring of $T_2$.  For the bivariate independent censoring scenario, we generate $n$ independent univariate censoring variables from an exponential distribution with rate $\lambda=0.3$, and $n$ independent univariate censoring variables from an exponential distribution with rate $\lambda=0.2$, which correspond to about $39\%$ censoring of $T_1$, and about $48\%$ censoring of $T_2$. In both cases, the observed data consists of $(Y_{i1},\Delta_{1i},Y_{21},\Delta_{2i},Z_i)$.

We consider the six time points $\{(0.25,0.35),(0.5,0.6),(1,1.1),(1.5,1.6),(3,3.1),(5,5.1)\}$ and the six functions $f_j(T_1,T_2)=I(T_1>t_1^j, T_2>t_2^j)$, $j=1,\ldots, 6$. We use the logistic regression model with either six separate regression models for the six time points, or a single regression model for all time points.

For each time point, iteration, and type of estimator, we calculate the MAE and the standardized MAE. These are more relevant summary measures of performance, as the regression model studied here is misspecified.
Web Figure~\ref{fig:MAEs} and Web Figure~\ref{fig:MAEs_bivar_cens} presents the boxplot of the MAEs (top) and the standardized MAEs (bottom) for the pseudo-observations approach using six separate regression models and bivariate log-normal failure times, with either univariate censoring or bivariate censoring, respectively. As can be seen by the relatively low MAEs, the bivariate pseudo-observation approach estimates quite well the conditional joint survival probability, for both censoring mechanisms, even under model misspecification. Note also that, on average, the MAEs of the Dabrowska estimator are lower than those of Lin and Ying. Additionally, note that the errors for the modified Lin and Ying estimator in the bivariate censoring scenario are higher than those based on the standard Lin and Ying estimator in the univariate censoring scenario. Finally, note that the standardized MAEs are by definition larger than the MAEs, as we are dividing the MAEs with the bivariate survival probability (a value between 0 and 1). This phenomenon is especially evident for time points which are further away from the origin and where the survival probability is low.

Web Figure~\ref{fig:MAEs_sing_reg} and Web Figure~\ref{fig:MAEs_sing_reg_bivar_cens} presents the MAEs and standardized MAEs based on the single regression model, for the univariate censoring mechansim and for the bivariate censoring mechanism, respectively. The performance of the single regression model is quite good, and produces even lower errors than the six different regression models, for both types of censoring mechanisms. However, the modified Lin and Ying estimator for the bivariate censoring scenario produced larger errors than the standard Lin and Ying estimator for the univariate censoring scenario, and than the Dabrowska estimator in both censoring scenarios.

\section{}
Here we check the proportional odds assumption for the diabetic retinopathy dataset studied in Section 4 of the main manuscript, by comparing the empirical un-adjusted (log) odds for juvenile versus adult onset diabetes, for all bivariate time points on the diagonal $(t,t)$. For each of these two diabetes groups, the empirical log odds are calculated by $\log(\text{odds})=\log\left(\frac{\hat{S}(t,t)}{1-\hat{S}(t,t)}\right)$, where $\hat{S}$ is the Dabrowska estimator of the (un-adjusted) bivariate survival function $S$. Web Figure~\ref{fig:odds_retinopathy} shows that the vertical displacement between the two curves are reasonably close to being constant for all $t>30$ (where $t$ denotes the number of months since the laser treatment), thus providing some confirmation for the proportional odds model for time points $t$ larger than 30.

Web Table~\ref{tab:retinopathy} presents the estimated regression coefficients for the regression models considered in Section 4 of the main manuscript (one based on a single time point and another based on three time points).
\begin{figure}[p]
	\begin{center}
		\includegraphics[width=0.95\linewidth]{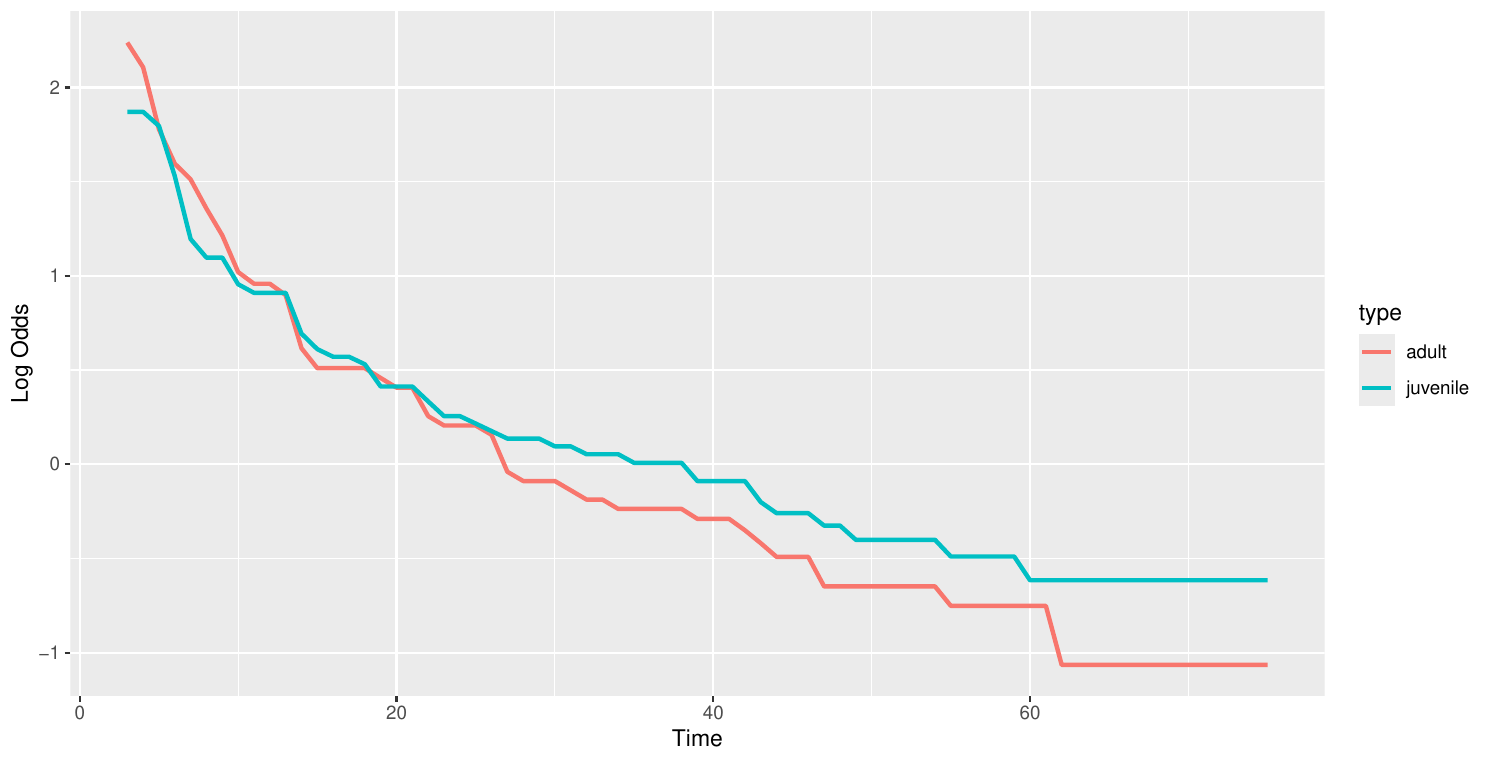}
		\caption{Diabetic retinopathy study. Empirical log odds for juvenile versus adult onset diabetes, for all bivariate time points on the diagonal $(t,t)$.} \label{fig:odds_retinopathy}
	\end{center}
\end{figure}

\begin{table}[ht]
	\centering
	\begin{tabular}{lrrrr}
		\hline
		&  \multicolumn{2}{r}{Single time point} &
		\multicolumn{2}{r}{Three time points} 
		\\
		\hline
		Covariate  &estimate & p-value & estimate & p-value \\ 
		\hline
		Age &  -0.01 & 0.65 &  -0.01 & 0.69\\ 
		Risk score &  -0.19 & 0.21& -0.18 & 0.11\\ 
		Juvenile diabetes  & -0.13 & 0.87 & -0.13 & 0.80 \\ 
		\hline
	\end{tabular}
	\caption{Diabetic retinopathy study. Estimates of the regression parameters for the bivariate pseudo observations approach with the Dabrowska estimator and the logit link function based on either a single time point or three time points.}\label{tab:retinopathy}
\end{table}

\section{}
\cite{ichida_evaluation_1993} conducted a study to evaluate protocol change in burn care management. Specifically, the authors were interested in comparing the effects of two different bathing agents on the risk of wound infection, based on 154 burn victims. The authors also considered time to wound excision (as a time varying covariate), where times are measured in days from hospital admission.  Interestingly, out of the 154 patients, 99 patients underwent wound excision, with the excision occurring between 1 to 49 days after hospital admission. Additionally, 48 patients suffered from a bacterial infection during their hospital stay, with the infection occurring between 1 to 97 days after hospital admission. 
This dataset is available from the R package KMsurv \citep{KMsurv}. \cite{van_der_laan_locally_2002} analyzed this same dataset, and estimated the joint survival of time to wound excision ($T_1$) and time to wound infection ($T_2$), assuming a univariate censoring variable $C$. \cite{van_der_laan_locally_2002} used an inverse probability of censoring weighting (IPCW) estimator and a one-step estimator. In their IPCW estimator, covariates were incorporated into the estimator through the probability of censoring weights, while their one-step estimator depended also on a model for the conditional distribution of the pair of failure times given the covariates. 
In their preliminary analyses, \cite{van_der_laan_locally_2002} found that only treatment type and gender were significantly associated with either $T_1$ or $T_2$. The authors estimated the joint survival probability of $(T_1,T_2)$ at eight different time points.

We use our bivariate pseudo-observations approach with the logit link function to estimate these eight survival probabilities, as a function of treatment and gender, the two binary covariates considered also by \cite{van_der_laan_locally_2002}, and an additional continuous covariate measuring the percentage of total surface area burned. 
We use a single regression model for all time points, with bivariate pseudo-observations based on both the Dabrowska estimator and the Lin and Ying estimator.

Web Figure~\ref{fig:burn_8} presents graphically a boxplot of the covariate adjusted survival probabilities for each of the eight time points and for the pseudo-observations approach based on the Dabrowska estimator. In addition, the figure presents the mean survival probability (a dark red rhombus) and the value of the one-step estimator of \cite{van_der_laan_locally_2002} (a green circle). The mean value is relatively close to the corresponding value of \cite{van_der_laan_locally_2002} for all time points. However, the range of the estimated survival probabilities at certain time points is quite wide, indicating that the joint survival probability is sensitive to specific covariate values.
Web Figure~\ref{fig:burn_8_LY_1_models} presents graphically a boxplot of the covariate adjusted survival probabilities for each of the eight time points and for the pseudo-observations approach based on the Lin and Ying estimator. As can be seen, the pseudo-observations approach that is based on the Lin and Ying estimator seems to produce conditional survival probabilities that are biased downwards.
Finally, note that while \cite{van_der_laan_locally_2002} offer an estimate of the unconditional bivariate survival function $S(t_1,t_2)$, our bivariate pseudo-observations approach allows greater flexibility by considering also covariate effects.

\begin{figure}[p]
	\begin{center}
		\includegraphics[width=0.95\linewidth]{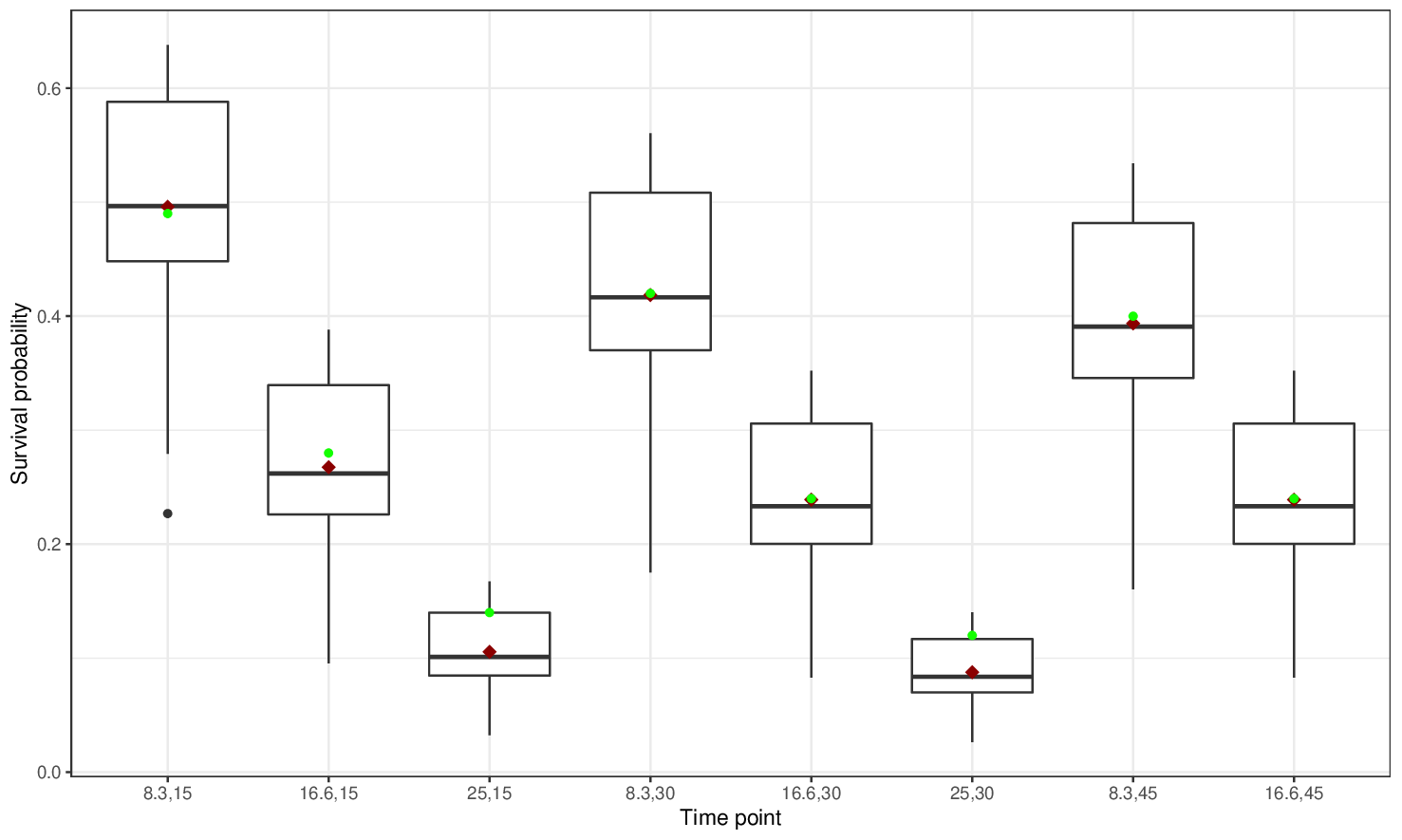}
		\caption{Estimates of the joint survival probabilities of time to wound excision and time to wound infection, for eight time points. The dark red rhombus marks the mean value of the estimated survival probabilities at each time point, and the green circle marks the value of the one-step estimator of \citet{van_der_laan_locally_2002}.} \label{fig:burn_8}
	\end{center}
\end{figure}

Web Table~\ref{table:burn_coef} presents the estimated regression coefficients for the pseudo-observations approach based on both the Dabrowska estimator and the Lin and Ying estimator. Web Table~\ref{table:burn_coef} shows that the estimated values of the estimated regression coefficients based on the two types of estimators can be quite different.
To investigate further, we looked at the distribution of the pseudo-values themselves, based on both estimators, for each of these eight time points. Web Figure~\ref{fig:PO_hist} presents the histograms of the pseudo-values, for each of the eight time points and for both estimators. Web Figure~\ref{fig:PO_hist_t4} presents a zoomed in version at the single time point $(t_1,t_2)=(25,15)$. 

Additionally, we also consider eight separate regression models, one for each time point, based on the pseudo-observations approach with either the Dabrowska estimator, or the Lin and Ying estimator. Web Figure~\ref{fig:burn_8_8_models} presents the boxplots of these survival probabilities for each of the eight time points;  results based on the Dabrowska estimator can be found in Web Figure~\ref{fig:burn_8_dab_8_models}, and results based on the Lin and Ying can be found in Web Figure~\ref{fig:burn_8_LY_8_models}. Compared to Web Figure~\ref{fig:burn_8}, the results based on the Dabrowska estimator with eight separate regression models are generally consistent with the results based on a single regression model for all time points, but exhibit a larger variability. Note that for the single time point $(t_1,t_2)=(25,15)$, the separate regression model did not provide good estimates of the joint survival probability. Similarly, the results based on the Lin and Ying estimator seem to underestimate the true survival probability at all time points, with three time points producing nearly zero probabilities. In conclusion, for the burn victim data, the single regression model for all time points, with the pseudo-observations that are based on the Dabrowska estimator, is preferable.

Finally, we use our suggested method for time points selection, with lower limits corresponding to the univariate 30th percentiles, and upper limits corresponding to the 90th percentiles. Web Table~\ref{table:burn_coef_k} present the slope estimates and their corresponding p-values for different choices of $k$. The resulting regression estimates are relatively close for the different number of time points. However, the regression estimates of the two binary variables (treatment and gender) are different than their corresponding estimates reported in Web Table~\ref{table:burn_coef}. To further investigate, we looked at the odds ratio (OR) of both gender and treatment, for each of the eight time points considered by \cite{van_der_laan_locally_2002}, where the OR of a binary covariate $Z$ is defined by $\text{OR}=\frac{\text{odds}(Z=1)}{\text{odds}(Z=0)}$. Specifically, if the proportional odds assumption is satisfied, we would expect the OR to be relatively constant over the eight time point. However, Web Figure~\ref{fig:burn_PO_assm} shows that the OR of each group is not constant over the eight time points.

In summary, it seems like the regression estimates for the burn victim data are quite sensitive to the location of the time points, which probably means that the assumed logistic regression model is misspecified for this data example. This conclusion is further validated by Web Figure~\ref{fig:burn_PO_assm}, that shows that the proportional odds assumption is not satisfied for this dataset. However, based on our simulations from the log-normal bivariate data (\ref{sec:LN}), our bivariate pseudo-observations approach should still provide good predictions of the conditional joint survival function, even under model misspecification.

\begin{figure}[p]
	\begin{center}
		\includegraphics[width=1\textwidth]{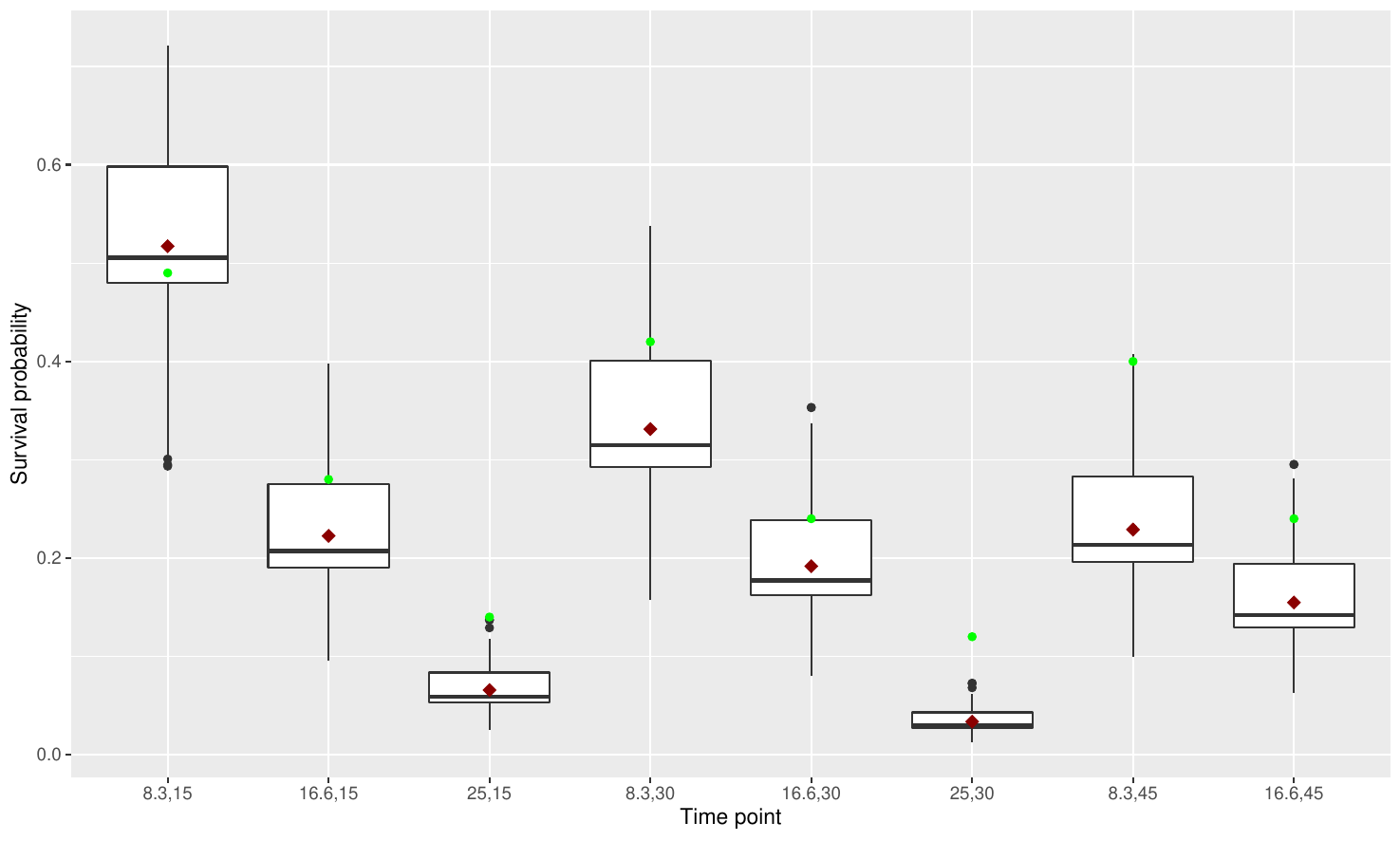}
		\caption{ Estimates of the joint survival probabilities, for eight time points. The estimates are based on the pseudo-observations approach with the Lin and Ying estimator, with a single regression model for all time points. The dark red rhombus marks the mean value of the estimated survival probabilities at each time point, and the green circle marks the value of the one-step estimator of \cite{van_der_laan_locally_2002}. } \label{fig:burn_8_LY_1_models}
	\end{center}
\end{figure}

\begin{table}[h]
	\centering
	\begin{tabular}{llrrrr}
		\hline
		Covariate & Value	& $\beta^{Dab}$ & p-value & $\beta^{LY}$ & p-value \\ 
		\hline
		Intercept &  &  0.58 & 0.107 &  0.29 & 0.515 \\ 
		Time point & 16.6,15 & -1.02 & $<$0.001 & -1.37 & $<$0.001 \\ 
		Time point & 25,15 & -2.17 & $<$0.001 & -2.79 & $<$0.001 \\ 
		Time point & 8.3,30 & -0.32 & 0.007 & -0.80 & $<$0.001 \\ 
		Time point & 16.6,30 & -1.18 & $<$0.001 & -1.56 & $<$0.001 \\ 
		Time point & 25,30 & -2.38 & $<$0.001 & -3.50 & $<$0.001 \\ 
		Time point & 8.3,45  & -0.43 & 0.009 & -1.33 & 0.004 \\ 
		Time point & 16.6,45& -1.18 & $<$0.001 & -1.82 & 0.001 \\ 
		Treatment & Body cleansing & -0.48 & 0.148 & -0.41 & 0.410 \\ 
		Gender & Female & -0.66 & 0.152 & -0.77 & 0.190 \\ 
		Percentage of area burned & Continuous & -0.79 & 0.380 &  0.69 & 0.619 \\ 
		\hline
	\end{tabular}\caption{Estimated regression coefficients and p-values, for the burn data analysis based on both the Dabrowska estimator (Dab) and the Lin and Ying estimator (LY). The covariates include: Treatment: 0-routine bathing 1-Body cleansing, Gender (0=male 1=female), and Percentage of area burned. The p-values are based on the generalized estimating equations using an independence correlation structure.}\label{table:burn_coef}
\end{table}

\begin{table}[h]
	\centering
	\begin{tabular}{llrrrrrr}
		\hline
		Covariate & Value	& $\beta^{Dab}_{k=5}$ & p-value & $\beta^{Dab}_{k=8}$ & p-value &$\beta^{Dab}_{k=10}$ & p-value  \\ 
		\hline
		Treatment & Body cleansing & -0.77 & 0.02 & -0.72 & 0.03 & -0.75 & 0.02 \\ 
		Gender & Female & -0.22 & 0.56 & -0.20 & 0.61 & -0.23 & 0.56 \\ 
		Percentage of area burned & Continuous & -0.72 & 0.43 & -0.59 & 0.52 & -0.70 & 0.45 \\ 
		\hline
	\end{tabular}\caption{Estimated regression coefficients and p-values, for the burn data analysis based on the Dabrowska estimator (Dab), for time points $k \in \{5,8,10\}$ which are based on our time point selection procedure. The covariates include: Treatment: 0-routine bathing 1-Body cleansing, Gender (0=male 1=female), and Percentage of area burned. The p-values are based on the generalized estimating equations using an independence correlation structure.}\label{table:burn_coef_k}
\end{table}

\begin{figure}[p]
	\centering
	\begin{subfigure}[b]{1\textwidth}
		\centering
		\includegraphics[width=1\textwidth]{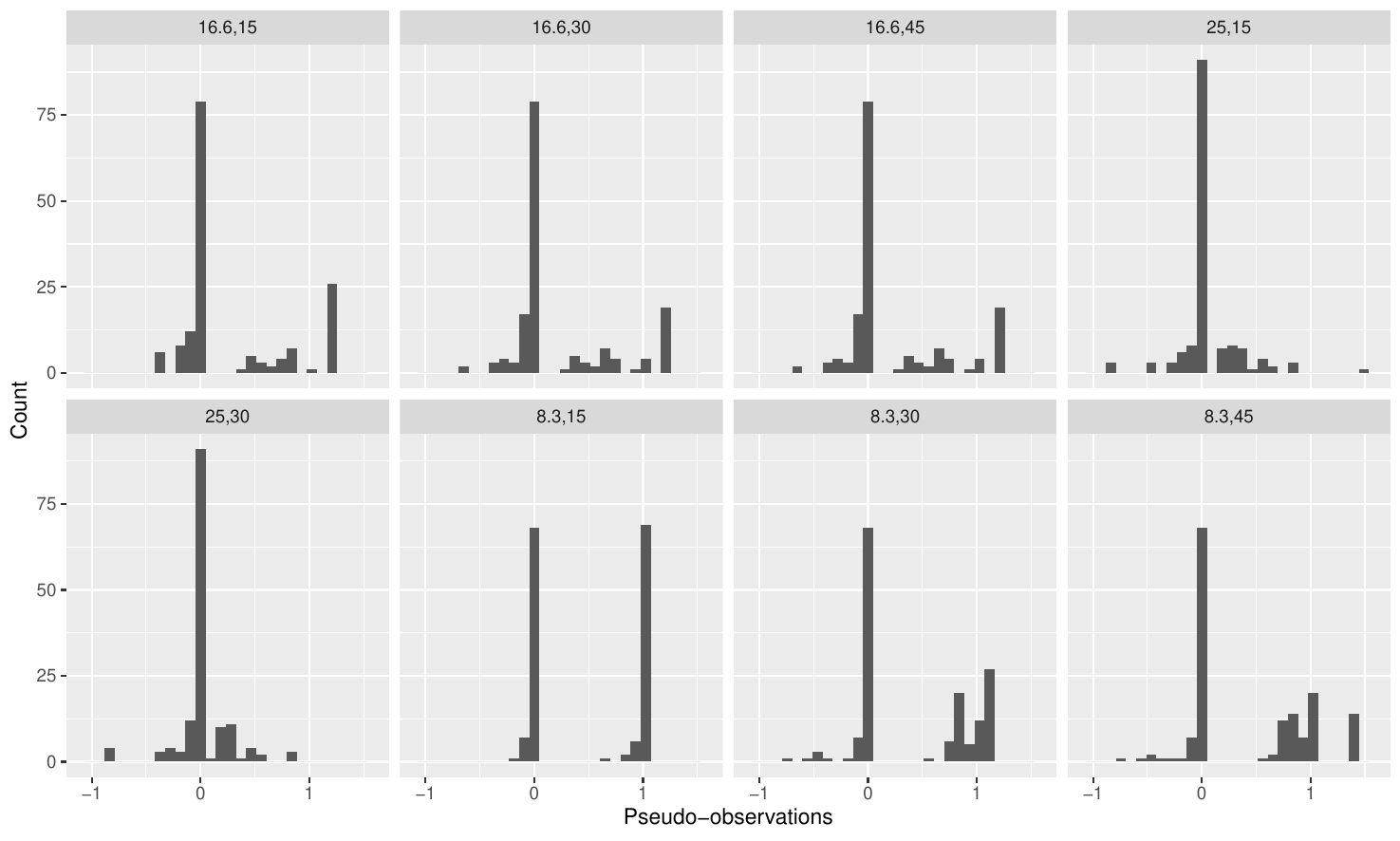}
		\caption{Frequency of pseudo-observation values that are based on the Dabrowska estimator, for each of the eight time point in the burn data.}
	\end{subfigure}
	\hfill
	\begin{subfigure}[b]{1\textwidth}
		\centering
		\includegraphics[width=1\textwidth]{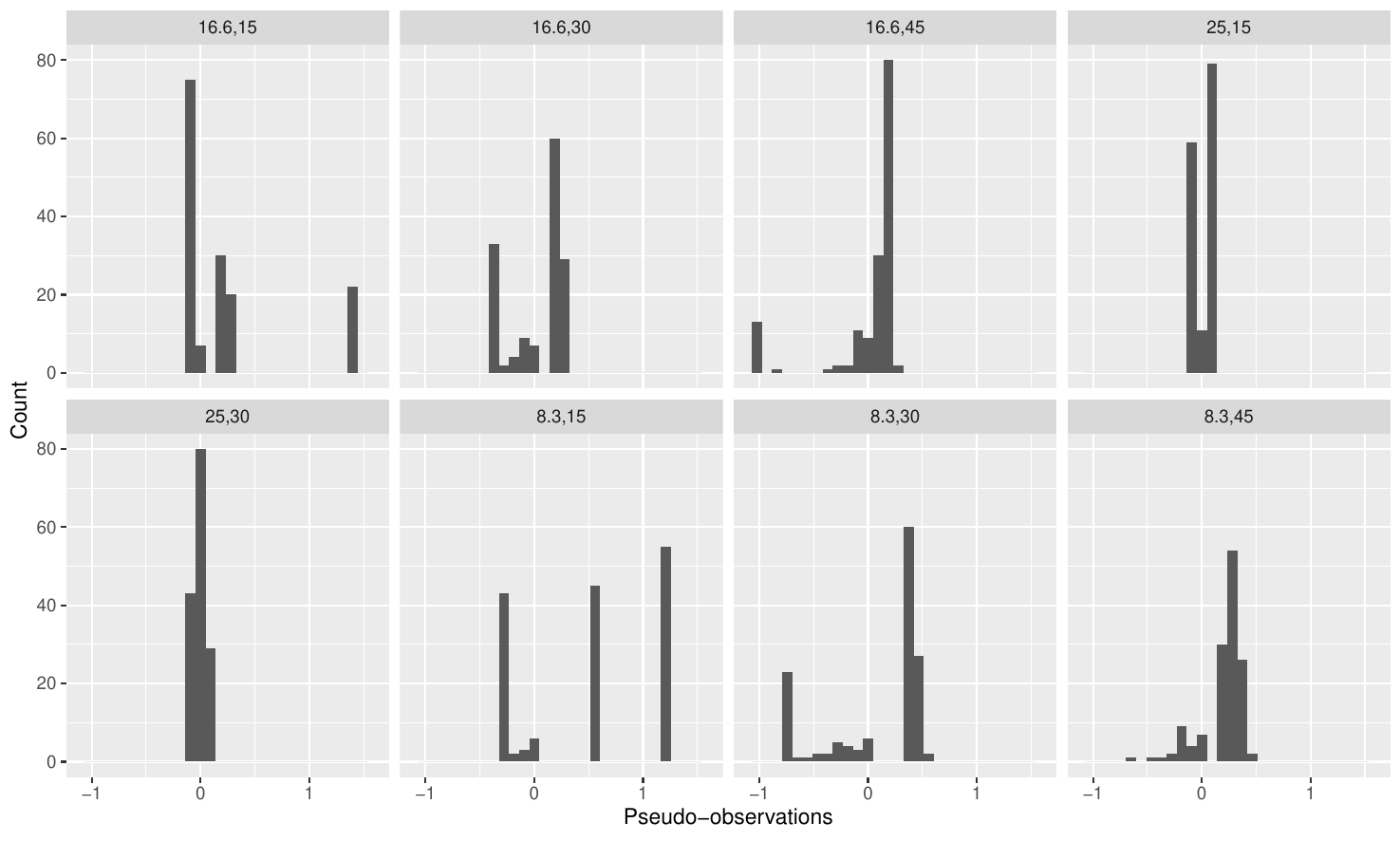}
		\caption{Frequency of pseudo-observation values that are based on the Lin and Ying estimator, for each of the eight time point in the burn data.} 
	\end{subfigure}
	\caption{Frequency of pseudo-observation values, for each of the eight time point in the burn data. The top panel corresponds to pseudo-observations that are based on the Dabrowska estimator, and the bottom panel corresponds to pseudo-observations that are based on the Lin and Ying estimator.}
	\label{fig:PO_hist}
\end{figure}

\begin{figure}[p]
	\centering
	\begin{subfigure}[b]{1\textwidth}
		\centering
		\includegraphics[width=1\textwidth]{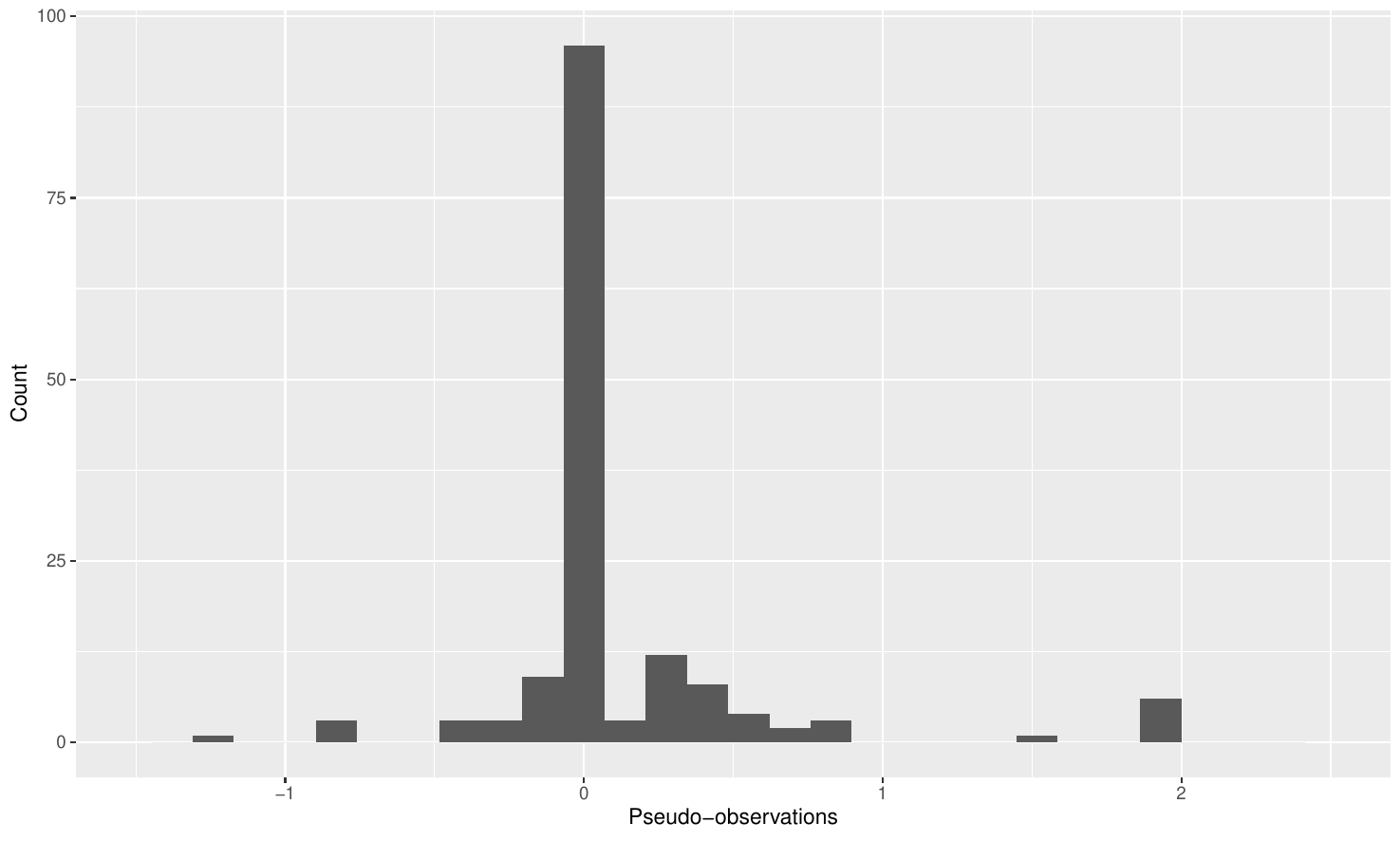}
		\caption{Frequency of pseudo-observation values that are based on the Dabrowska estimator, for the single time point $(t_1,t_2)=(25,15)$.}
	\end{subfigure}
	\hfill
	\begin{subfigure}[b]{1\textwidth}
		\centering
		\includegraphics[width=1\textwidth]{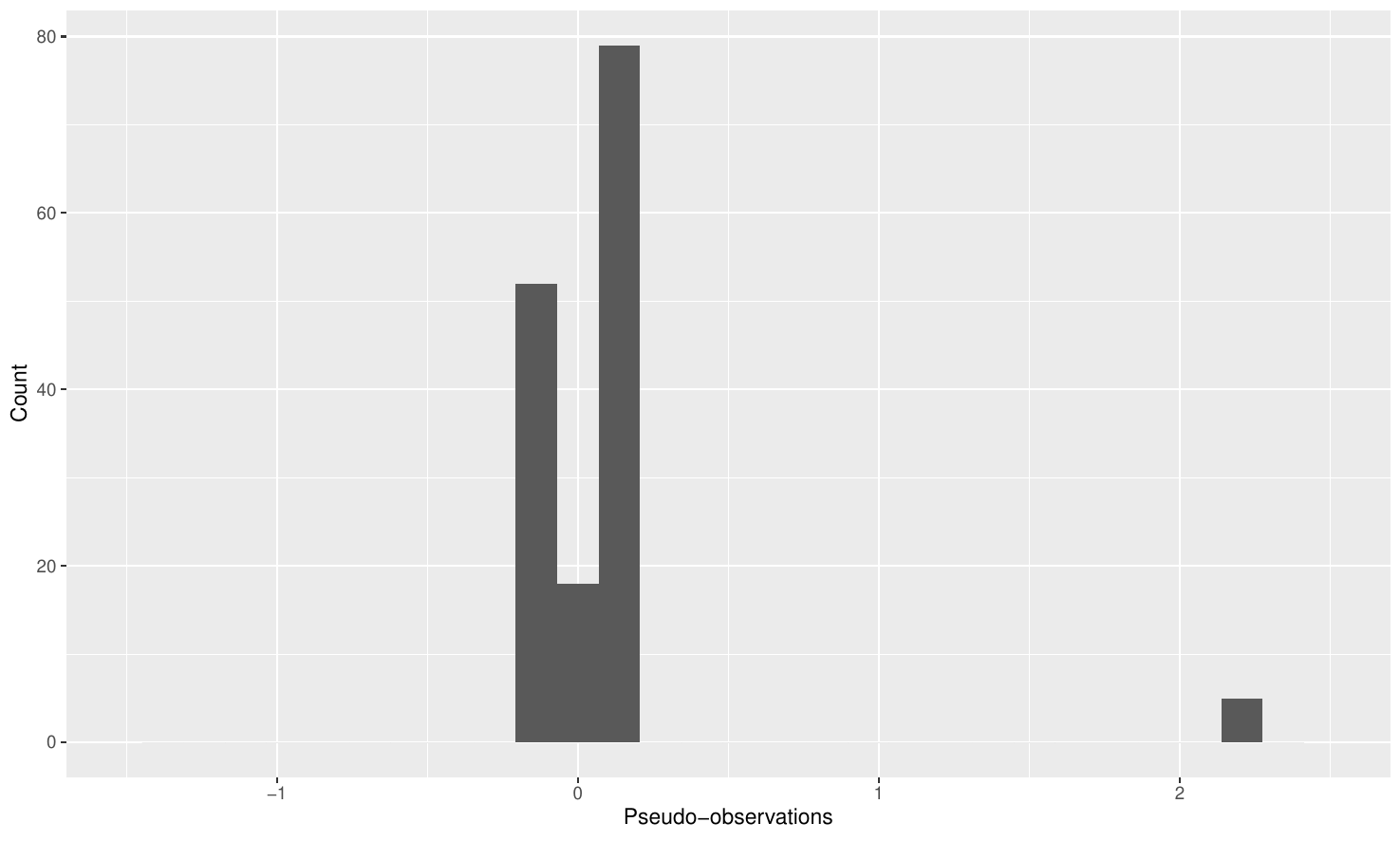}
		\caption{Frequency of pseudo-observation values that are based on the Lin and Ying estimator,  for the single time point $(t_1,t_2)=(25,15)$.} 
	\end{subfigure}
	\caption{Frequency of pseudo-observation values, for the single time point $(t_1,t_2)=(25,15)$. The top panel corresponds to pseudo-observations that are based on the Dabrowska estimator, and the bottom panel corresponds to pseudo-observations that are based on the Lin and Ying estimator.}
	\label{fig:PO_hist_t4}
\end{figure}

\begin{figure}[p]
	\centering
	\begin{subfigure}[b]{1\textwidth}
		\centering
		\includegraphics[width=1\textwidth]{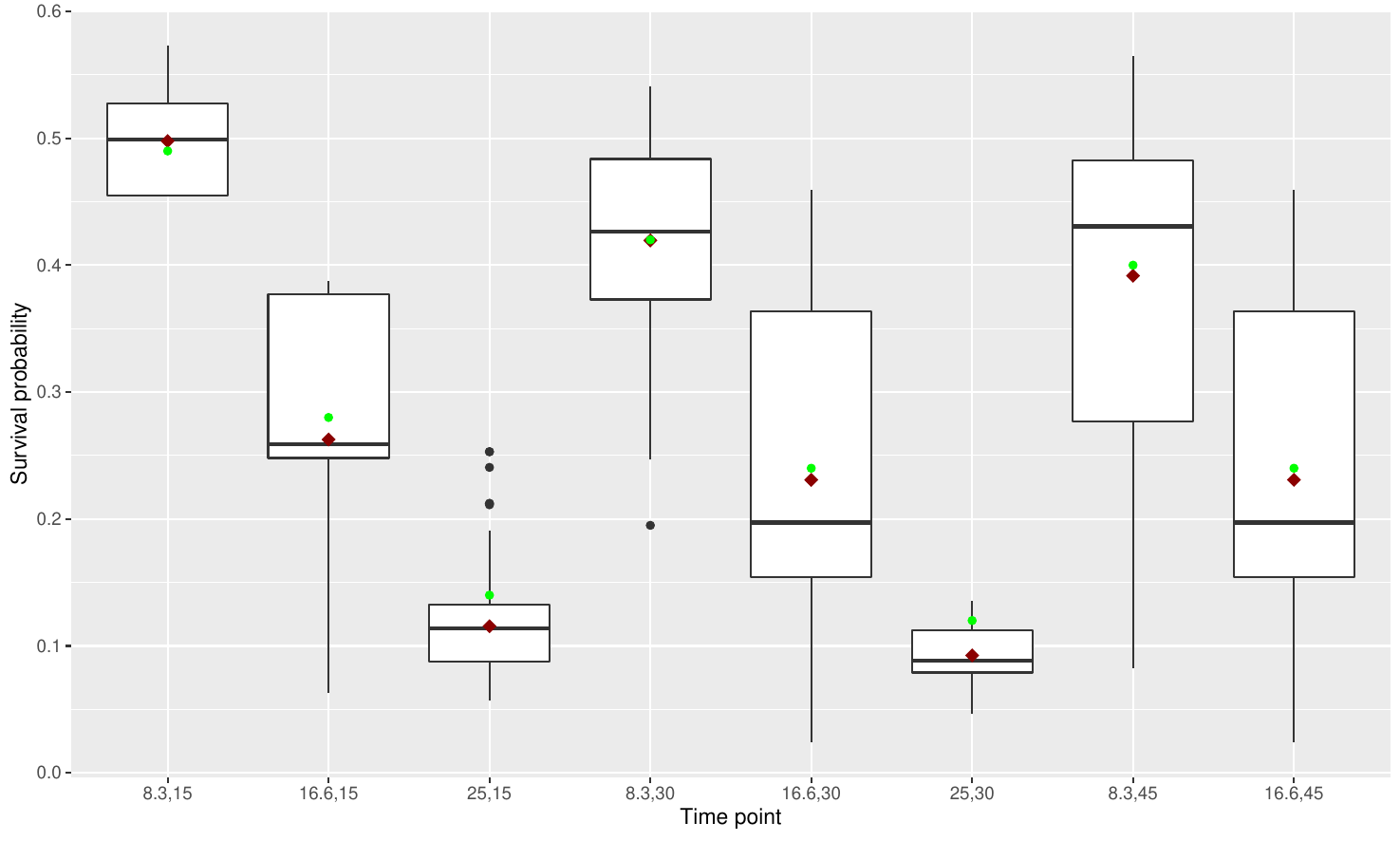}
		\caption{Joint survival probabilities based on the pseudo-observations approach with the Dabrowska estimator. A different regression model for each time point.} \label{fig:burn_8_dab_8_models}
	\end{subfigure}
	\hfill
	\begin{subfigure}[b]{1\textwidth}
		\centering
		\includegraphics[width=1\textwidth]{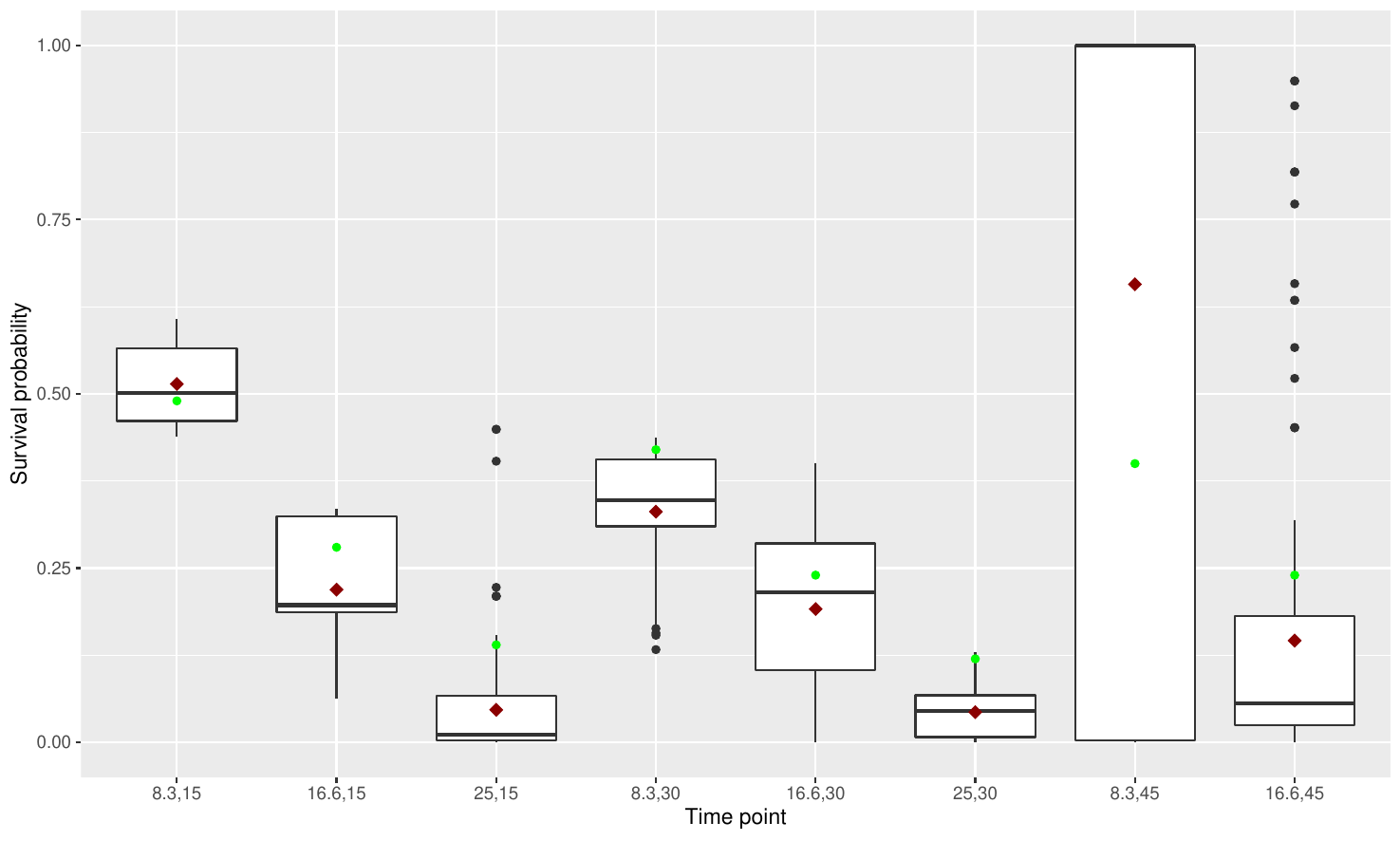}
		\caption{Joint survival probabilities based on the pseudo-observations approach with the Lin and Ying estimator. A different regression model for each time point.} \label{fig:burn_8_LY_8_models} 
	\end{subfigure}
	\caption{Top: Dabrowska, bottom: Lin and Ying. Estimates of the joint survival probabilities, for eight time points, where each time point has its own regression model. The dark red rhombus marks the mean value of the estimated survival probabilities at each time point, and the green circle marks the value of the one-step estimator of \cite{van_der_laan_locally_2002}. } \label{fig:burn_8_8_models} 
\end{figure}

\begin{figure}[p]
	\begin{center}
		\includegraphics[width=1\textwidth]{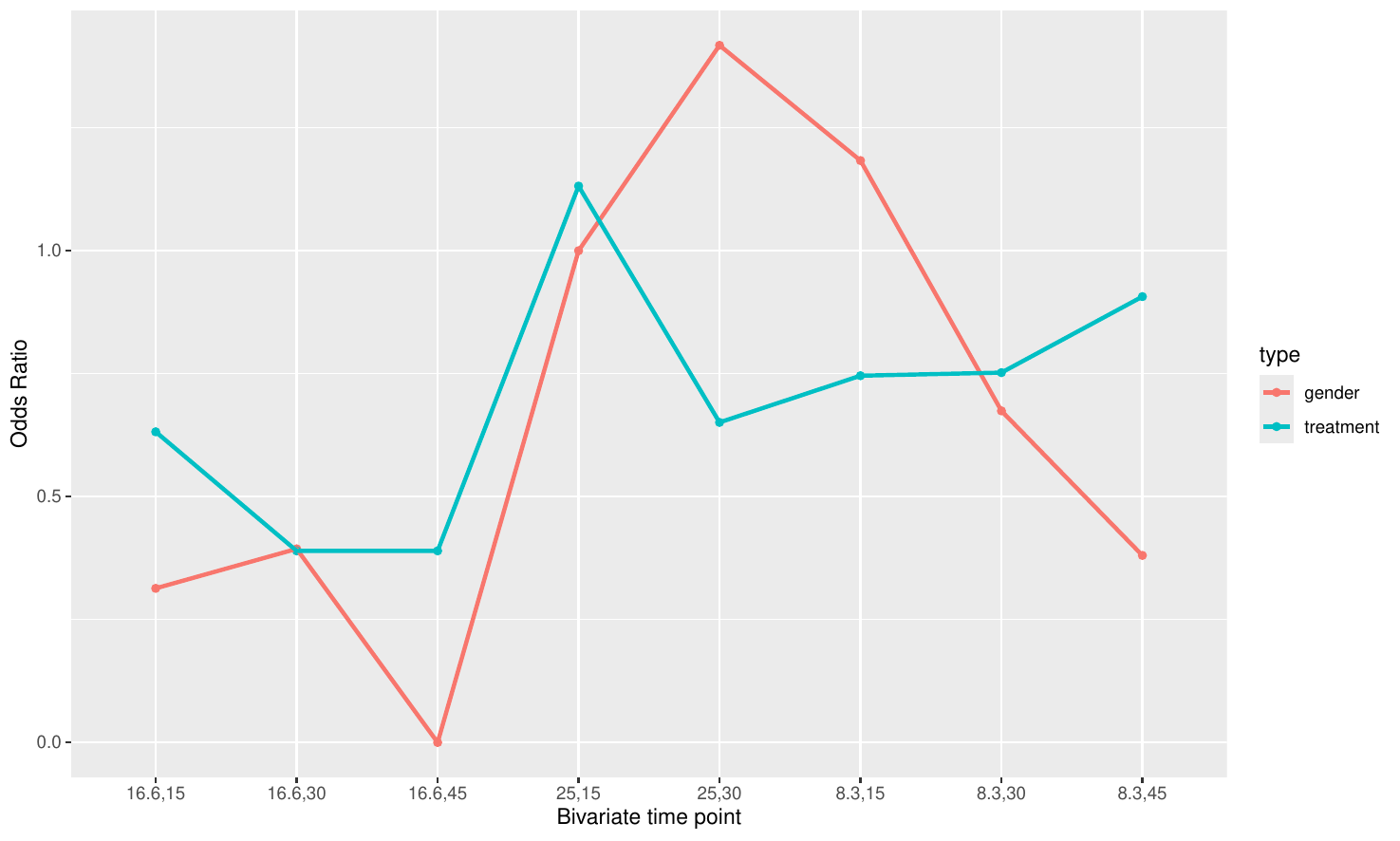}
		\caption{Burn victim data: empirical odds ratio (OR), for the two binary covariates treatment and gender, evaluated at the eight time points of \cite{van_der_laan_locally_2002}. } \label{fig:burn_PO_assm}
	\end{center}
\end{figure}

\section{}
Here we show how confidence intervals for conditional survival probabilities, such as $S_{T_1\mid T_2 \leq 36}(60\mid Z)$, can be calculated using the delta method, as described below.
Assume that we are interested in the covariate-adjusted bivariate survival probability at $k$ bivariate time points. Using the same notation as in our manuscript (Section 2.4), denote $\beta=(\beta_{01},\ldots,\beta_{0k},\beta_1,\ldots,\beta_p)^T$, and $Z_{ij}^*$ is $(0,...,1,...0,Z_i^T)^T$ where the $1$ is in the $j$'th place. By Theorem 1, for the true regression parameter $\beta^* \in \mathbb{R}^{p+k}$,
\[
\sqrt{n}(\hat{\beta}-\beta^*) \xrightarrow{d} N_{p+k}(0,M^{-1}\Sigma M^{-1})
\]
(note that we are using here $k$ time points instead of single time point and consequently the covariance matrix is now of dimension $p+k$).
Consequently, for $\mu=(\mu_1,\ldots,\mu_k)^T$, and $T=(T_1,\ldots,T_k)^T$, where $\mu_j={\beta^*}^TZ_{ij}^*$ and $T_j={\hat{\beta}}^TZ_{ij}^*$, we have that 
\[
\sqrt{n}\left(\begin{pmatrix}
	T_1\\ \vdots \\ T_k
\end{pmatrix}
-\begin{pmatrix}
	\mu_1\\ \vdots \\ \mu_k
\end{pmatrix}
\right) 
\xrightarrow{d}
N_{k}(0,(Z_i^*)^TM^{-1}\Sigma M^{-1}Z_i^*),
\]
where $Z_i^*=\begin{pmatrix}
	\mid & \mid & & \mid \\
	Z_{i1}^* & Z_{i2}^* & \cdots & Z_{ik}^*\\
	\mid & \mid & & \mid \\
\end{pmatrix}$ is a matrix of dimension $(p+k)\times k$. For simplicity, denote by
\begin{equation}\label{eq:vcov_i}
	\tilde{\Sigma}=(Z_i^*)^TM^{-1}\Sigma M^{-1}Z_i^*)
\end{equation}
the $k$-dimensional covariance matrix for the specific covariate of the $i$th observation.

Next, we define a function $\phi:\mathbb{R}^k \mapsto \mathbb{R}^k$ by $\phi(\mu)=(g^{-1}(\mu_1),\ldots,g^{-1}(\mu_p))^T$, where $g^{-1}(x)=[1+\exp(-x)]^{-1}$ is the inverse of the logit link. Using the delta method, it can be shown that 
\[
\sqrt{n}(\phi(T)-\phi(\mu))=\sqrt{n}\left(\begin{pmatrix}
	g^{-1}(T_1)\\ \vdots \\ g^{-1}(T_k)
\end{pmatrix}
-\begin{pmatrix}
	g^{-1}(\mu_1)\\ \vdots \\ g^{-1}(\mu_k)
\end{pmatrix}
\right) 
\xrightarrow{d}
N_{k}(0,\phi_{\mu}'\tilde{\Sigma}\phi_{\mu}'),
\]
where $\tilde{\Sigma}$ is defined in Equation~\eqref{eq:vcov_i}, and where $\phi_{\mu}'=\{\frac{\partial g^{-1}(\mu_i)}{\partial \mu_j}\}_{i,j=1}^k$ is a square matrix of size $k\times k$ consisting of the partial derivatives. That is,
\[
\phi_{\mu}'=diag\left(\frac{\exp(-\mu_1)}{[1+\exp(-\mu_1)]^2},\ldots, \frac{\exp(-\mu_k)}{[1+\exp(-\mu_k)]^2}\right).
\]
Next, note that $S(t_1^j,t_2^j\mid Z_i)=g^{-1}(\mu_j)=g^{-1}({\beta^*}^TZ_{ij}^*)$ and $\hat{S}(t_1^j,t_2^j\mid Z_i)=g^{-1}(T_j)=g^{-1}({\hat{\beta}}^TZ_{ij}^*)$. Consequently,
\[
\sqrt{n}\left(\begin{pmatrix}
	g^{-1}(T_1)\\ \vdots \\ g^{-1}(T_k)
\end{pmatrix}
-\begin{pmatrix}
	g^{-1}(\mu_1)\\ \vdots \\ g^{-1}(\mu_k)
\end{pmatrix}
\right)=
\sqrt{n}\left(\begin{pmatrix}
	\hat{S}(t_1^1,t_2^1\mid Z_i)\\ \vdots \\ \hat{S}(t_1^k,t_2^k\mid Z_i)
\end{pmatrix}
-\begin{pmatrix}
	S(t_1^1,t_2^1\mid Z_i)\\ \vdots \\ S(t_1^k,t_2^k\mid Z_i)
\end{pmatrix}
\right)  
\xrightarrow{d}
N_{k}(0,\phi_{\mu}'\tilde{\Sigma}\phi_{\mu}').
\]
In our data analysis example, we consider the three timepoints $\{(60,36), (60,0), (0,36)\}$. We define a function $h:\mathbb{R}^3\mapsto \mathbb{R}$ by $h(x,y,z)=\frac{y-x}{1-z}$ such that 
\[
h\left(S(60,36\mid Z_i),S(60,0\mid Z_i),S(0,36\mid Z_i)\right)=\frac{S(60,0 \mid Z)-S(60,36 \mid Z)}{1-S(0,36 \mid Z)}=S_{T_1| T_2\leq 36}(60\mid Z).
\]
Note that the gradient of $h$ at some point $a=(x,y,z)^T\in \mathbb{R}^3$ is $\nabla h(a)=\left(\frac{\partial h(a)}{\partial x},\frac{\partial h(a)}{\partial y},\frac{\partial h(a)}{\partial z}\right)^T=\left(-\frac{1}{1-z},\frac{1}{1-z},\frac{x-y}{(1-z)^2}\right)^T$.
Consequently, for $s=\left(S(60,36\mid Z_i),S(60,0\mid Z_i),S(0,36\mid Z_i)\right)^T$ we have that $$\nabla h(s)=\left(-\frac{1}{1-S(0,36\mid Z_i)},\frac{1}{1-S(0,36\mid Z_i)},\frac{S(60,36\mid Z_i)-S(60,0\mid Z_i)}{[1-S(0,36\mid Z_i)]^2}\right)^T,$$
and 
\[
\sqrt{n}\left(h(\hat{s})-h(s)\right)\xrightarrow{d}
N_{1}(0,\nabla h(s)^T\phi_{\mu}'\tilde{\Sigma}\phi_{\mu}'\nabla h(s)),
\]
where $\hat{s}=\left(\hat{S}(60,36\mid Z_i),\hat{S}(60,0\mid Z_i),\hat{S}(0,36\mid Z_i)\right)^T$. Finally, a 95\% confidence interval for $S_{T_1| T_2\leq 36}(60\mid Z)=h(s)$ is given by 
\[
\text{CI}=(h(\hat{s})-1.96 \cdot \sigma,h(\hat{s})+1.96 \cdot \sigma)=(\hat{S}_{T_1| T_2\leq 36}(60\mid Z)-1.96 \cdot \sigma,\hat{S}_{T_1| T_2\leq 36}(60\mid Z)+1.96 \cdot \sigma),
\]
where $\sigma=\frac{\sqrt{\nabla h(s)^T\phi_{\mu}'\tilde{\Sigma}\phi_{\mu}'\nabla h(s)}}{\sqrt{n}}$. 

\end{document}